\documentclass[12pt]{article}
\RequirePackage[colorlinks,citecolor=blue,linkcolor=blue,urlcolor=blue]{hyperref}

\usepackage[utf8]{inputenc}
\usepackage{amsfonts}
\usepackage{amsmath} 
\usepackage{amsthm}
\usepackage{amssymb}
\usepackage{amsbsy}
\usepackage{apacite} 
\usepackage{apptools}
    \AtAppendix{\counterwithin{lemma}{section}}
    \AtAppendix{\counterwithin{corollary}{section}}
    \AtAppendix{\counterwithin{assumption}{section}}
    \AtAppendix{\counterwithin{theorem}{section}}
    \AtAppendix{\counterwithin{proposition}{section}}
\usepackage{authblk}
\usepackage{booktabs}
\usepackage[a4paper, bindingoffset = 0.2in, left = 1in, right = 1in, top = 1in, bottom = 1in, footskip = 0.25in]{geometry}
\usepackage{calrsfs}
\usepackage{caption}
\usepackage[onehalfspacing]{setspace}
\usepackage{pgfplots}
\usepackage{tikz}
     \usetikzlibrary{positioning}
    \usetikzlibrary{calc}
    \usetikzlibrary{arrows}
\usepackage{titling}
    
\DeclareMathAlphabet{\pazocal}{OMS}{zplm}{m}{n}
\usepackage{tgpagella}
\usepackage{yfonts}

\usepackage{graphicx}
\usepackage{caption} % for notes under figures
\usepackage{subcaption}

\newtheoremstyle{assumption}{7.5pt}{7.5pt}{}{}{\itshape}{}{1em}{}
\newtheoremstyle{main}{10pt}{10pt}{\addtolength{\leftskip}{1.5em}}{}{\bfseries}{}{1em}{}

\theoremstyle{plain}
\newtheorem{theorem}{Theorem}

\newtheorem{corollary}{Corollary}
\newtheorem{lemma}{Lemma}

\newtheorem{proposition}{Proposition}
\newtheorem{assumption}{Assumption}

\newcommand{\covariance}{\mathbb{C}}

\newcommand{\expectation}{\mathbb{E}}
\newcommand{\variance}{\mathbb{V}}
\newcommand{\plim}{\mathrm{plim}~}
\newcommand{\diag}{\mathrm{{diag}}}
\newcommand{\trace}{\mathrm{{trace}}}
\newcommand{\rank}{\mathrm{{rank}}}

\newcommand{\vect}{\mathrm{{vec}}}

\newcommand{\boldy}{\mathbf{{y}}}

\newcommand{\boldG}{\mathbf{{G}}}

\newcommand{\bG}{\mathbf{{G}}}
\newcommand{\bI}{\mathbf{{I}}}
\newcommand{\bJ}{\mathbf{{J}}}
\newcommand{\bA}{\mathbf{{A}}}

\newcommand{\bC}{\mathbf{{C}}}

\newcommand{\bM}{\mathbf{{M}}}

\newcommand{\bm}{\mathbf{{m}}}

\newcommand{\bx}{\mathbf{x}}

\newcommand{\bu}{\mathbf{{u}}}

\newcommand{\bnx}{\widetilde{\mathbf{x}}}

\newcommand{\nx}{\widetilde{{x}}}

\newcommand{\gammaOLS}{\widehat{\gamma}^{OLS}}
\newcommand{\deltaOLS}{\widehat{\delta}^{OLS}}

\newcommand{\alphaIV}{\widehat{\alpha}^{IV}}
\newcommand{\gammaIV}{\widehat{\gamma}^{IV}}
\newcommand{\deltaIV}{\widehat{\delta}^{IV}}
\newcommand{\betaIV}{\widehat{\beta}^{IV}}

\newcommand{\bW}{\mathbf{W}}

\newcommand{\be}{\mathbf{e}}

\newcommand{\boldS}{\mathbf{{S}}}

\newcommand{\boldSigma}{\boldsymbol{\Sigma}}

\title{Measurement Error and Peer Effects \\ in Networks \protect\footnotetext{\llap{\textsuperscript{*}}Yann acknowledges financial support from the French government under the “France 2030” investment plan managed by the French National Research Agency (ANR-17-EURE-0020) and from the Excellence Initiative of Aix-Marseille University - A*MIDEX. Sebastiaan benefited from doctoral and postdoctoral fellowships from the Research Foundation Flanders (grants 11F8919N and 12C8623N). We thank the editor, \'Aureo de Paula, an anonymous associate editor, and two anonymous reviewers for very helpful comments and suggestions. We also thank Vincent Boucher, Nathan Canen, Sam Cosaert, Geert Dhaene, Marcel Fafchamps, Jan Feld, Bernard Fortin, Alan Griffith, as well as conference and seminar participants in Antwerp, Barcelona (ESEM), Manchester (EWMES), Orl\'eans (IPDC), Oslo (IAAE), Québec, and Rome. Part of this research was conducted during visits to Aix-Marseille School of Economics, whose hospitality is gratefully acknowledged.}}

\author{{\sc Yann Bramoull\'e} \\ \small %Aix-Marseille School of Economics \\
\small Aix-Marseille University \& CNRS \\ \small \url{yann.bramoulle@univ-amu.fr}
  \and {\sc Sebastiaan Maes} \\ \small %Department of Economics \\ \small 
  University of Antwerp \& CESifo \\ \small \url{sebastiaan.maes@uantwerpen.be}
}
\vspace{100pt}
\date{\today }

\begin{document}

\begin{titlepage}
	\maketitle
	\thispagestyle{empty}
	\begin{abstract}
         In many practical applications, only noisy proxies for the true regressors are available, which is commonly believed to induce an attenuation bias. In the linear-in-means model, however, estimated peer effects might be inflated, potentially leading to false positives. This paper shows that the asymptotic bias depends on the interplay between individual characteristics and network links and demonstrates how the network structure can facilitate identification without the need for additional external information. Based on these identification results, we present consistent GMM and 2SLS estimators that are easily implementable. Our results are illustrated by means of a Monte Carlo simulation. 
	\end{abstract}

	\vspace{0.1 in}

	\noindent \textbf{Keywords:} peer effects, social interactions, linear-in-means model, measurement error, errors-in-variables, asymptotic bias
	
	\vspace{0.1 in}
	
	\noindent \textbf{JEL codes:} C31, C36
\end{titlepage}

%%%%%%%%%%%%%%%%%%%%%%%%%%%%%%%%%%%%%%%%%%%%%%%%%%%%%%%%%%%%%%%%%%%%%%%
%%%%%%%%%%%%%%%%%%%%%%%%%%%%%%%%%%%%%%%%%%%%%%%%%%%%%%%%%%%%%%%%%%%%%%%
%%%%%%%%%%%%%%%%%%%%%%%%%%%%%%%%%%%%%%%%%%%%%%%%%%%%%%%%%%%%%%%%%%%%%%%
\section{Introduction}
How reliable are peer effect estimates in published studies? Not very, according to Josh Angrist's critical assessment of the literature: ``although correlation among peers is a reliable descriptive fact, the scope for incorrect or misleading attributions of causality in peer analysis is extraordinarily wide" \cite[p.~98]{angrist_perils_2014}. Obtaining credible causal estimates of peer effects presents several challenges to researchers. To address peer endogeneity, they must find, or generate, exogenous variations in peer groups and develop models of peer selection.\footnote{For studies of peer effects with random peers see, e.g., \citeA{sacredote01, carrellsacredotewest13, Cornolaferraraburns22}. For econometric approaches combining models of peer effects in networks with models of network formation see, e.g., \citeA{goldsmithimbens13, hsiehlee, griffith22}.} To address mismeasurement in peers, they must collect detailed data on social networks and develop models of peer effects with unknown or mismeasured peers.\footnote{The literature on peer effects in networks has grown fast in the past fifteen years, see \citeA{bramoulledjebbarifortin20} for a review. See, e.g., \citeA{ griffith21, BoucherHoundetoungan23, lewbelqutang21, depaularasulsouza19} for models of peer effects with imperfectly known peers.} Building on \citeA{manski93}, the methodological literature on peer effects has grown alongside the applied literature. Yet despite methodological advances, guaranteeing the reliability of peer effect estimates remains challenging. Addressing multiple endogeneity issues without resolving them all still leaves researchers some distance away from a causal interpretation.

This paper focuses on a critical yet understudied issue affecting the reliability of peer effect estimates: measurement error in individual characteristics. Such errors are a first-order empirical issue in survey-based \cite{boundbrownmathiowetz} and experimental work \cite{gillensnowbergyariv19}. Errors-in-variables are especially salient when individual-level covariates are measured with noise due to survey misreporting, recall error, or imperfect measurement. In particular, self-reported income and assets, as well as noisy proxies for ability such as test scores, are commonly subject to classical measurement error.

Surprisingly, however, errors-in-variables remain a blind spot in the existing literature. In the applied literature on peer effects, problems raised by measurement error on covariates are almost never discussed or addressed. The methodological literature on this problem is scarce, and will be comprehensively reviewed below. In short, this issue was identified by \citeA{steven_policy_2001}, highlighted by \citeA{angrist_perils_2014}, and studied by \citeA{ammermueller_peer_2009, paula_2017, feld_understanding_2017}. These five papers, however, only consider group interactions: agents are partitioned in groups, such as classes within schools, and are affected by everyone in their group and by no one outside it. In contrast, many recent studies of peer effects consider richer network interactions. Our paper fills this gap by analyzing measurement error in network-based peer effects.

We provide the first analysis of measurement error when peer effects operate on a network. We consider the benchmark linear-in-means model with contextual and endogenous peer effects. An individual's outcome is affected by their individual characteristic, by the average characteristic of their peers, and by the average outcome of their peers. A significant challenge arises when  researchers only observe a noisy proxy for the characteristic. Measurement error then occurs in both individual and average peer characteristic, and errors in the two variables are related through the structure of the model. The presence of related measurement errors on two regressors makes the problem non-standard. We adopt a many-networks asymptotic framework and consider econometric specifications with or without network fixed effects. Our investigation revolves around two central questions. First, under what conditions does measurement error cause asymptotic biases in peer effect estimates? Second, what strategies can researchers employ to mitigate these biases? Our identification and estimation results highlight that true peer effects can generally be recovered without additional data. 

Consistent with existing research, our findings indicate that classical measurement error typically induces asymptotic bias in peer-effect estimates. Defying conventional wisdom, these estimates are not necessarily attenuated toward zero.\footnote{This conventional wisdom is based on the analysis of OLS estimates in the presence of one mismeasured variable. It is well known that little can be said, in general, about the nature of the asymptotic bias in OLS or 2SLS estimates if there is measurement error in multiple variables (see, e.g., \citeNP{levi73}; \citeNP[p.~86]{greene03}).} We show that naive 2SLS estimates of peer effects, using mismeasured characteristics of peers at distance two as an instrument for the average outcome of peers, converge to a linear combination of the model's parameters, which we refer to as \emph{smearing}.\footnote{\citeA{GarberKlepper80} distinguish three effects of classical measurement error on least-squares estimates: an ``own'' effect, whereby the coefficient on a mismeasured regressor is attenuated by its own error; a ``smearing'' effect, whereby the coefficient on a mismeasured regressor is biased by measurement error in \emph{another} mismeasured regressor; and a ``picking up'' effect, whereby the coefficients on correctly measured regressors absorb part of the mismeasured regressors' effects. For simplicity, we refer to all cross-coefficient biases as smearing.} The magnitude and direction of the resulting asymptotic biases critically depend on the interplay between individual characteristics and network links. With measurement error, the estimated endogenous peer effect absorbs parts of the individual effect and of the contextual peer effect, because the instrument spuriously picks up variation in the true individual and average peer characteristic. Similarly, the estimated contextual peer effect absorbs part of the individual effect, because the average mismeasured peer characteristic partially proxies for the true individual characteristic. As a consequence, peer effect estimates may be significantly different from zero even in the absence of true peer effects. These cross-loadings vanish when characteristics are independent of links and i.i.d. across individuals. In that case, the estimate of the endogenous peer effect is asymptotically unbiased, while estimates of individual and contextual peer effects display an asymptotic attenuation bias towards zero.

To build intuition we also study the bias in two simplified settings: a model with contextual peer effects only and a model with endogenous peer effects only. In both cases, smearing only arises when an individual's characteristic is correlated with their average peers' characteristics. This correlation naturally emerges under \textit{homophily}, which is widely documented in social networks \cite{McPherson}. For example, it is common for college students to form friendships with peers who have comparable academic abilities, similar socioeconomic backgrounds, and similar levels of parental education. Measures of these characteristics in survey data, such as \textit{Add Health}, are notoriously noisy, suggesting a widespread risk of biased peer effect estimates.\footnote{Although homophily disappears under complete randomization, it reappears in common quasi-experimental setups, such as peer randomization within stratified groups. Under complete randomization, inflated estimates might still occur when individuals are drawn from a finite pool, giving rise to the exclusion bias (e.g., see \citeNP{caeyersfafchamps23}).}

In the first part of our analysis, we thus clarify how relationships between characteristics and links give rise to an asymptotic bias in peer effect estimates. In the second part, we demonstrate how these relationships can be leveraged to solve the problem. We show that the econometric model can be identified through conditional mean and covariance restrictions, and without relying on external information.\footnote{External information, when available, can of course help better identify the parameters of interest. A standard method to address measurement error is to exploit multiple independent measurements of the noisy variable \cite{reiersol41, schennach07}. This can easily be combined with our internal identification strategies.} Identification depends on features of the measurement error. Under a weak assumption that measurement error has mean zero, we characterize when the model is identified based on conditional mean restrictions (Theorem~\ref{theorem:identification}). We derive a necessary and sufficient rank condition for identification, which combines the interaction matrix and the characteristic's first moments. This strategy exploits potential associations between network positions and the individual characteristic: for instance, when agents with more peers tend to have a higher value for the characteristic. Identification from first moments is generic if the network structure is rich enough (Proposition~\ref{proposition:genericity}) and holds even when the measurement error is correlated and heteroscedastic. 

Imposing more structure on the measurement error opens up more possibilities for identification. In a second step, we consider uncorrelated and homoscedastic measurement errors. We show when the model is identified based on conditional covariance restrictions (Theorem~\ref{thm:identification2}). This identification strategy takes advantage of variations in correlations across observed characteristics between two network positions. Overall, our results show that except in special cases, it is possible to eliminate the asymptotic bias resulting from measurement error.

To do so in practice, we propose generalized method of moments (GMM) and two-stage least squares (2SLS) estimators that are easy to implement. The GMM approach uses more information and is generally more efficient, albeit introducing non-linearities in the estimation procedure. Moments are directly built from the means and covariance restrictions highlighted in our identification results. In contrast, 2SLS estimation is linear but typically less efficient. We propose a variety of valid instruments, including network lags and network features. Monte Carlo simulations show that even modest measurement error can induce substantial bias in naive 2SLS estimates of peer effects. They also confirm that our proposed estimators are feasible and perform well in finite samples.

\paragraph{Related literature.} 
Following the work of \citeA{manski93}, the applied literature on peer effects has grown extensively.\footnote{For a theoretical and econometric discussion on linear social interaction models see, e.g., \citeA{blumebrockdurlaufjayaraman15}.} Peer effects have been studied in a wide variety of settings, ranging from the classroom  \cite{lavy}, through labor supply \cite{nicoletti}, to consumption decisions \cite{degiorgi19}. When it is suspected that characteristics may suffer from measurement error, our results and the tools proposed here can aid empirical researchers to address this problem and to obtain consistent peer effect estimates. Our analysis also clarifies the impact of randomization, a common strategy to mitigate peer endogeneity, on this bias.

Our analysis advances the sparse literature on measurement error and peer effects. A first strand of research concerns models with contextual peer effects. \citeA{steven_policy_2001} was the first to show formally that errors-in-variables can give rise to an expansion bias in peer effect estimates. He discusses the type of policy interventions that can help address the problem. \citeA{angrist_perils_2014} highlights the role played by measurement error in generating inflated peer effect estimates. He illustrates in Table 3 p.103 how adding noise to individual schooling leads to a large increase in the estimate of average state schooling, in a regression on log wage. Exploiting variation across classes within schools, \citeA{ammermueller_peer_2009} demonstrate that the inclusion of school fixed effects considerably reduces the magnitude of class peer effect estimates. This discrepancy is attributed to the interplay between errors-in-variables and homophily: when sorting into classes is random but sorting into schools is not, the inclusion of school fixed effects removes the school-level homophily that gives rise to the expansion bias. \citeA{feld_understanding_2017} show that when assignment to classes is completely random, errors-in-variables only lead to an attenuation bias. For the same setting, \citeA{feldzolitz22} propose a simple bias correction procedure based on multiple noisy measurements of the true characteristic.

Another strand of research concerns models with endogenous peer effects and noisy outcomes. \citeA[p.~310]{paula_2017} shows that the covariance between peers' mismeasured outcomes identifies the endogenous peer effect in a linear-in-means model. The result operates under the assumption of homoscedastic and uncorrelated disturbances in the outcome equation. In the context of a game with misclassified binary actions, \citeA{linhu23} develop a consistent estimator based on repeated measurements.

Taking a broader perspective, we also contribute to the literature on measurement error in dependent data. In a seminal contribution, \citeA{grilicheshausman86} show how the errors-in-variables problem can be overcome in the standard linear panel data model without resorting to outside information. Our use of network-lagged characteristics as instruments resembles their use of time-lagged variables. More recently, \citeA{evdokimovweleneev20} study errors-in-variables in general nonlinear semiparametric panel or network data models with fixed effects. In this more general setting, they show how the lagged values of covariates can still serve as instruments to overcome the bias. However, they assume that the variances of the measurement errors shrink with sample size, which is rather restrictive in our context.

Our work is also related to the literature on errors-in-variables in linear models through its use of higher moments. In early contributions, \citeA{koopmans37} and \citeA{reiersol50} recognized that this approach fails if the observables are jointly normal distributed. \citeA{cragg97}, \citeA{dagenais97}, and \citeA{ericksonwhited2002} therefore impose rank conditions on third and higher moments to ensure identification. \citeA{klepperleaner84} show that the first and second moments can be used to bound the coefficients. More recently, \citeA{benmoshe21} provides necessary and sufficient conditions for identification when there is measurement error in all variables. Alternatively, if some variables are known to be perfectly measured, the latter can be used to construct instruments. \citeA{lewbel97, lewbel12} and \citeA{benmoshehaultfoeuillelewbel17} construct valid instruments from perfectly measured variables without using additional outside information. Our approach differs from this literature in that we do not impose functional form assumptions on the distribution of measurement error, nor full independence of measurement error from the other variables in the model. Moreover, our results do not require the presence of perfectly measured covariates. We also allow for conditional heteroscedasticity in outcomes, which might be important in empirical applications.

\paragraph{Outline of the paper.}
The remainder of this paper is organized as follows. Section~\ref{sec:setup} introduces the linear-in-means model and details the associated naive 2SLS estimator. In Section~\ref{sec:bias}, we show that in the presence of errors-in-variables, estimated peer effects exhibit asymptotic biases and analyze how this bias depends on the interplay between characteristics and links. Section~\ref{sec:identification} provides formal conditions under which the linear-in-means model with errors-in-variables is identified. Based on these conditions, in Section~\ref{sec:estimation}, we propose GMM and 2SLS estimators that are straightforward to implement. A Monte Carlo simulation illustrates the applicability of our methods. Finally, Section~\ref{sec:conclusion} concludes. All proofs are in the Appendix. The Appendix and the Online Appendix contain additional results and extensions.

%%%%%%%%%%%%%%%%%%%%%%%%%%%%%%%%%%%%%%%%%%%%%%%%%%%%%%%%%%%%%%%%%%%%%%%
%%%%%%%%%%%%%%%%%%%%%%%%%%%%%%%%%%%%%%%%%%%%%%%%%%%%%%%%%%%%%%%%%%%%%%%
%%%%%%%%%%%%%%%%%%%%%%%%%%%%%%%%%%%%%%%%%%%%%%%%%%%%%%%%%%%%%%%%%%%%%%%
\section{Setup}\label{sec:setup}

A researcher observes data on outcomes, characteristics, and peers, and wants to estimate the impact of peers' characteristics and peers' outcomes on individual outcomes. We consider a data-generating process where a sequence $\{\boldy_s, \bx_s, \be_s, \bu_s, \bA_s\}_{s = 1,\dots, S}$ of $S$ i.i.d. network observations is drawn from a joint distribution. Network $s$ has size $N_s$, $\boldy_s$ is a $N_s \times 1$ vector of outcomes, $\bx_s$ is a $N_s \times 1$ vector of continuous characteristics, $\be_s$ is a $N_s \times 1$ vector of disturbances, $\bu_s$ is a $N_s \times 1$ vector of measurement errors, and $\bA_s$ is the $N_s \times N_s$ adjacency matrix of network $s$ where $(\bA_s)_{ij}=1$ if $j$ is a peer of $i$ and $0$ otherwise. Networks may be directed, i.e. $(\bA_s)_{ij}$ may differ from $(\bA_s)_{ji}$. The researcher observes outcomes $\boldy_s$, networks $\bA_s$ and mismeasured characteristics $\bnx_s = \bx_s + \bu_s$. 

We assume that the size of networks $N_s$ is uniformly bounded and consider many-network asymptotics. The number of observations $N = \sum_{s=1}^S N_s \rightarrow \infty$ as the number of networks $S \rightarrow \infty$. Throughout, all probability limits are with respect to $N$ and are assumed to exist and to be finite. Our setup resembles a cross-sectional setting where the researcher observes a large number of small networks (e.g., peers in classrooms or neighborhoods) at a single point in time. The assumption that outcomes, characteristics, and networks are jointly determined is fairly general. This notably covers setups with fixed networks as well as stochastic models of network formation. 

Given two random variables $a$ and $b$, we let $\expectation[a] := \plim \frac{1}{N} \sum_s \sum_i a_{si}$ denote the expectation of $a$, $\variance(a) := \expectation[a^2] - \expectation[a]^2$ its variance, and $\covariance(a, b) := \plim \frac{1}{N} \sum_s \sum_i a_{si} b_{si}-\expectation[a]\expectation[b]$ the covariance between $a$ and $b$.

\subsection{Linear-in-means model}\label{subsec:model}
The neighborhood of individual $i$, $\mathcal{N}_{si}$, is the set of $i$'s peers, $j \in \mathcal{N}_{si} \Longleftrightarrow (\bA_s)_{ij}=1$. We assume that every individual has at least one peer, $\forall i, \mathcal{N}_{si} \neq \emptyset$. The degree of individual $i$, $d_i$, is the number of $i$'s peers, $d_i=|\mathcal{N}_{si}|=\sum_j (\bA_s)_{ij}\geq 1$. Introduce the \textit{interaction} matrix, $\bG_s$, as $(\bG_s)_{ij} = (\bA_s)_{ij}/d_i$. This matrix is row-normalized---every row sums to one---and captures linear-in-means interactions. 

As our {baseline specification}, we consider the standard linear-in-means model of social interactions,
\begin{equation*}
    \begin{split}
        y_{si} &=  \alpha  + \gamma x_{si} + \delta \frac{1}{d_i}\sum_{j \in \mathcal{N}_{si}} x_{sj} + \beta \frac{1}{d_i}\sum_{j \in \mathcal{N}_{si}} y_{sj} + e_{si}, \\
        \nx_{si} &= x_{si} + u_{si},
    \end{split}
\end{equation*}
in which individual $i$'s outcome depends on her individual characteristic (captured by $\gamma$), her peers' average characteristic (captured by $\delta$), and her peers' average outcome (captured by $\beta$). We assume that $|\beta| < 1$ such that the reduced form is well-defined. The disturbance in the outcome equation is assumed to satisfy the standard conditional mean independence condition: i.e., $\expectation(e_{si} \mid \bx_s, \bG_s, \bu_s) = 0$. Measurement error arises because the researcher only observes a noisy proxy $\nx_{si}$ for the true characteristic $x_{si}$. 

Stacking observations, this model can be written compactly in matrix notation as 
\begin{subequations}\label{eq:model_baseline}
\begin{align}
\boldy_s &=  \alpha \mathbf{1}  + \gamma \bx_s + \delta \bG_s{\bx}_s  + \beta \bG_s{\boldy}_s+\be_s,\label{eq:modeloutcome} \\
\bnx_s &= \bx_s + \bu_s. \label{eq:modelerror}
\end{align}
\end{subequations}
Regressors without measurement error can be partialled out first using the Frisch-Waugh-Lovell (FWL) theorem.

To make the problem tractable, we impose some structure on the moments of the measurement error. Throughout, we assume that its conditional mean is unrelated to all other variables in the model and to the network structure.
\begin{assumption}\label{ass:measurementerror1}
    The measurement errors satisfy:
    \begin{equation*}
        \begin{split}
             \expectation[u_{si} \mid \bx_s, \boldG_s, \be_s] &= 0.
        \end{split}
    \end{equation*}
\end{assumption}
\noindent Despite its mildness, Assumption~\ref{ass:measurementerror1} alone can yield identification from first moments; Theorem~\ref{theorem:identification} states the exact conditions. To exploit second moments, we can additionally restrict the conditional variance and pairwise covariance of the measurement error.
\begin{assumption}\label{ass:measurementerror2}
    The measurement errors satisfy:
\[
\begin{alignedat}{2}
\mathbb{E}\!\left[u_{si}^{2}\mid \bx_s,\boldG_s,\be_s\right] &= \sigma_u^{2}, &\qquad
\mathbb{E}\!\left[u_{si}u_{sj}\mid \bx_s,\boldG_s,\be_s\right] &= 0, \forall i \neq j.
\end{alignedat}
\]
\end{assumption}
\noindent Importantly, we make no identifying assumptions on higher moments of the variables in the model, nor do we require full independence of the measurement error or the disturbance in the outcome equation. We also do not make distributional assumptions such as normality. 

In Online Appendix~\ref{app:extensions} we develop two extensions of the baseline model. First, we incorporate network fixed effects. This relaxes the conditional mean independence assumption and is especially useful when unobserved factors cluster within networks. Second, we allow for measurement error in outcomes. In a linear-in-means setting this is nontrivial, since outcomes enter both sides of the equation. We show that, under mild conditions, our identification and estimation results extend to these cases.

\subsection{Naive 2SLS estimator}\label{subsec:OLSestimator}
Because individuals' outcomes appear on both sides of \eqref{eq:modeloutcome}, OLS estimation delivers biased estimates. Researchers therefore typically resort to a 2SLS estimator where the endogenous average outcomes of peers are instrumented with an instrumental variable $z$. A popular, model-based choice is the average characteristic of peers at distance two \cite{bramoulle09}: $z_{si} := \sum_{j : d(i,j) = 2} \widetilde{x}_{sj} /  \sum_{j : d(i,j) = 2} 1$, where $d(i,j)$ denotes the network distance between $i$ and $j$.\footnote{Characteristics of peers at distance two are commonly used as instruments in applied work on peer effects (see, e.g., \citeNP{PATACCHINI20146, nicoletti, degiorgi19}).} This is our leading instrument, and the one for which we characterize the asymptotic bias in the next section.

Define $\boldy := \vect(\boldy_1, \boldy_2, \dots ,\boldy_S)$, $\bnx := \vect(\bnx_1, \bnx_2, \dots ,\bnx_S)$, $\be := \vect(\be_1, \be_2, \dots ,\be_S)$, $\bu := \vect(\bu_1, \bu_2$ $, \dots ,\bu_S)$, $\mathbf{z} := \vect(\mathbf{z}_1, \mathbf{z}_2, \dots, \mathbf{z}_S)$, and let $\bG := \diag(\bG_1, \bG_2, \dots, \bG_S)$ be the block-diagonal matrix that contains the interaction matrices. The \emph{naive} 2SLS estimator for the baseline specification can then be written as
\begin{equation}\label{eq:IV}
    \begin{bmatrix}
    \alphaIV & \gammaIV & \deltaIV & \betaIV 
    \end{bmatrix}^\intercal := \left(\mathbf{Z}^{\intercal}\mathbf{X}\right)^{-1}\mathbf{Z}^{\intercal} \boldy,
\end{equation}
where $\mathbf{X} := \begin{bmatrix}
    \mathbf{1} & \bnx & \bG{\widetilde{\bx}} & \bG{\boldy}
\end{bmatrix}$ and $\mathbf{Z} := \begin{bmatrix} \mathbf{1} & \bnx & \bG{\widetilde{\bx}} & \mathbf{z} \end{bmatrix}$.

In the presence of measurement error, and as shown below, the 2SLS estimator~\eqref{eq:IV} generally exhibits an asymptotic bias. The reason is that both the individual observed characteristic and average observed characteristics of peers are endogenous. To see why, substitute Equation~\eqref{eq:modelerror} in~\eqref{eq:modeloutcome}, which gives that 
\begin{equation}\label{eq:fullmeasuremnt}
    \boldy_s = \alpha \mathbf{1} + \gamma \bnx_s + \delta \bG_s {\bnx}_s + \beta  \bG_s{\boldy}_s + \underbrace{\mathbf{e}_s - \gamma \bu_s - \delta \bG_s{\bu}_s}_{=: \boldsymbol{\eta}_s}.
\end{equation}
The endogeneity problem arises because individual and peer characteristics are typically correlated with the composite error term: i.e.,  $\covariance(\widetilde{x}, \eta) \neq 0$ and $\covariance(G{\widetilde{x}}, \eta) \neq 0$ if $\gamma, \delta \neq 0$. At the core of the problem is an underidentification problem: there are three endogenous variables, but only one instrumental variable.

%%%%%%%%%%%%%%%%%%%%%%%%%%%%%%%%%%%%%%%%%%%%%%%%%%%%%%%%%%%%%%%%%%%%%%%
%%%%%%%%%%%%%%%%%%%%%%%%%%%%%%%%%%%%%%%%%%%%%%%%%%%%%%%%%%%%%%%%%%%%%%%
%%%%%%%%%%%%%%%%%%%%%%%%%%%%%%%%%%%%%%%%%%%%%%%%%%%%%%%%%%%%%%%%%%%%%%%
\section{Asymptotic bias}\label{sec:bias}
We study the asymptotic bias in peer effect estimates in the presence of errors-in-variables. Our results highlight the role of the interplay between characteristics and links. We first present results for the general case, and then build intuition by focusing on two special cases that are relevant in applied work.

\paragraph{The general case.}
We derive analytical expressions for the probability limits of the naive 2SLS estimates of the individual effect $\gammaIV$, the contextual peer effect $\deltaIV$, and the endogenous peer effect $\betaIV$. To focus the discussion, throughout this section we take the instrument $z$ to be the average characteristic of peers at distance two, as introduced in the previous section.\footnote{The alternative instrument $G^2 \widetilde{x}$, which is valid in the absence of measurement error \cite{bramoulle09}, is not valid under errors-in-variables. It is built from walks of length two, whose endpoints may lie at distance zero or one: $(G^2)_{ii} = \sum_k g_{ik}g_{ki} > 0$ whenever $i$ has a reciprocated link, and $(G^2)_{ij} = \sum_k g_{ik}g_{kj} >0$ for a peer $j$ whenever $i$ and $j$ share a peer. Hence $\covariance(G^2 \widetilde{x}, \eta) \neq 0$ in general.}  By Assumptions~\ref{ass:measurementerror1} and \ref{ass:measurementerror2}, $z$ is  uncorrelated with measurement error at distance zero and one, $\covariance(z, u) = \covariance(z, G{u}) = 0$. In addition, $\covariance(z, e) = 0$, so the instrument is valid (for a formal discussion, see Proposition~\ref{prop:instrumentsdiffmatrix} and Corollary~\ref{cor:lagged} in Section~\ref{sec:IV}).

By substituting Equations~\eqref{eq:modeloutcome} and \eqref{eq:modelerror} in \eqref{eq:IV}, and applying the Frisch-Waugh-Lovell (FWL) theorem to partial out the intercept, we express the asymptotic biases of $\gammaIV$,  $\deltaIV$, and $\betaIV$ in terms of the true model parameters and the matrices $\mathbf{S}$ and $\boldsymbol{\Sigma}$, which collect (co)variances in characteristics and errors.

\begin{lemma}\label{lemma:bias}
    Suppose that Assumptions~\ref{ass:measurementerror1} and \ref{ass:measurementerror2} hold and that the matrix $(\mathbf{S} + \boldsymbol{\Sigma})$ is invertible, where \begin{subequations}\label{eq:covariancemat}
    \begin{align*}
       \boldS := \begin{bmatrix}
        \variance(x) & \covariance(x, G{x}) & \covariance(x, G{y})  \\
        \covariance(G{x}, x) & \variance(G{x}) & \covariance(G{x}, G{y}) \\
        \covariance(z, x) & \covariance(z, G{x}) & \covariance(z, G{y})
        \end{bmatrix}, \qquad
        \boldSigma := \begin{bmatrix}
        \sigma_u^2 & 0 & 0\\
        0 & h_0 \sigma_u^2 & 0 \\
        0 & 0 & 0 \\
        \end{bmatrix},
    \end{align*}
    and $h_{0} := \plim \frac{1}{N} \sum_{s} \sum_{i}\frac{1}{d_{si}}$.
\end{subequations}
    Then the 2SLS estimates of $\gamma$, $\delta$, $\beta$ converge in probability to
    \begin{equation*}
    \plim  
    \begin{bmatrix}
    \gammaIV \\ 
    \deltaIV \\ 
    \betaIV %
    \end{bmatrix}  =\underbrace{(\mathbf{S}+\boldsymbol{\Sigma} )^{-1}\mathbf{S}}_{=:\bM}
    \begin{bmatrix}
    \gamma \\ 
    \delta \\ 
    \beta%
    \end{bmatrix}.
    \end{equation*}    
\end{lemma}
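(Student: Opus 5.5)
The plan is to write the naive 2SLS estimator in deviation-from-means form via FWL. Then I express its probability limit as a ratio of population cross-moments between the instrument vector $(\nx, G\nx, z)$ and the regressors $(\nx, G\nx, Gy)$, and compute each moment using the model and Assumptions~\ref{ass:measurementerror1}--\ref{ass:measurementerror2}.

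\textbf{Reduction.} After partialling out the intercept, $\plim(\gammaIV,\deltaIV,\betaIV)^\intercal = \mathbf{Q}_{ZX}^{-1}\mathbf{Q}_{Zy}$. Here $\mathbf{Q}_{ZX}$ is the $3\times 3$ matrix with $(k,l)$ entry $\covariance(Z_k,X_l)$, where $Z=(\nx, G\nx, z)$ and $X=(\nx,G\nx,Gy)$, and $\mathbf{Q}_{Zy}$ collects $\covariance(Z_k,y)$. The LLN over i.i.d. networks with bounded size ensures these limits exist, as assumed in Section~\ref{sec:setup}.

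\textbf{Computing $\mathbf{Q}_{ZX}$.} Substitute $\nx=x+u$ and $G\nx=Gx+Gu$. I will show $\mathbf{Q}_{ZX}=\mathbf{S}+\boldSigma$. The only nonzero contributions of $u$ should be the following.
\begin{itemize}
\item $\covariance(u,u)=\sigma_u^2$.
\item $\covariance(Gu,Gu)=\plim\frac1N\sum_s\sum_i\sum_j g_{ij}^2\sigma_u^2=h_0\sigma_u^2$, since $\sum_j g_{ij}^2=1/d_i$ by Assumption~\ref{ass:measurementerror2}.
\item $\covariance(u,Gu)=\sigma_u^2\,\plim\frac1N\sum g_{ii}=0$, because $g_{ii}=0$ (there are no self-links).
\end{itemize}
All cross terms $\covariance(u,x)$, $\covariance(u,Gx)$, and so on vanish by Assumption~\ref{ass:measurementerror1}, using iterated expectations conditional on $(\bx_s,\bG_s,\be_s)$.

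The delicate entries involve $Gy$ and $z$. The reduced form $\boldy_s=(\bI-\beta\bG_s)^{-1}(\alpha\mathbf 1+\gamma\bx_s+\delta\bG_s\bx_s+\be_s)$ shows that $y$ depends only on $(\bx_s,\bG_s,\be_s)$ and not on $\bu_s$. Hence $\covariance(u,Gy)=\covariance(Gu,Gy)=0$ by Assumption~\ref{ass:measurementerror1}, since $g_{ij}$ is conditioned on. This is the main point to verify carefully, and it relies on $\bu_s$ not entering the outcome equation.

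For the instrument, write $z_i=\sum_j w_{ij}(x_j+u_j)$, where $w_{ij}$ are distance-two weights that are measurable in $\bG_s$. Then $\covariance(z,u)=\sigma_u^2\,\plim\frac1N\sum w_{ii}=0$, because $w_{ii}=0$. Likewise $\covariance(z,Gu)=\sigma_u^2\,\plim\frac1N\sum_i\sum_j w_{ij}g_{ij}=0$, since a node at distance two is never a direct peer. The $u$-part of $z$ against $Gy$ vanishes as before. This yields the third row of $\mathbf{S}$ with zeros in $\boldSigma$.

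\textbf{Computing $\mathbf{Q}_{Zy}$ and concluding.} Use \eqref{eq:modeloutcome} in terms of true $x$: $y=\alpha+\gamma x+\delta Gx+\beta Gy+e$. Then $\covariance(Z_k,y)=\gamma\covariance(Z_k,x)+\delta\covariance(Z_k,Gx)+\beta\covariance(Z_k,Gy)+\covariance(Z_k,e)$.

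The $u$-components of $Z_k$ are uncorrelated with $x$, $Gx$, $Gy$ and $e$ by Assumption~\ref{ass:measurementerror1}. The $x$-components are uncorrelated with $e$ by $\expectation(e\mid \bx,\bG,\bu)=0$, since $z$ and $Gx$ are functions of $(\bx_s,\bG_s,\bu_s)$. Hence $\mathbf{Q}_{Zy}=\mathbf{S}(\gamma,\delta,\beta)^\intercal$.

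Invertibility of $\mathbf{S}+\boldSigma$ then gives $\plim = (\mathbf S+\boldSigma)^{-1}\mathbf S(\gamma,\delta,\beta)^\intercal$. The only nonroutine bookkeeping is the vanishing of the $u$-terms in the third row and in the $Gy$ column, which I handle via the reduced form and the distance-two structure of $z$.
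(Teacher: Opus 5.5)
Your proposal is correct and follows essentially the same route as the paper. Like the paper, you partial out the intercept by FWL, write the limit as $(\plim \frac{1}{N}\mathbf{Z}_w^\intercal\mathbf{X}_w)^{-1}\plim\frac{1}{N}\mathbf{Z}_w^\intercal\boldy_w$, split the first matrix into $\mathbf{S}+\boldSigma$ using that $z$ loads only on distance-two nodes, and substitute the outcome equation to get $\mathbf{S}(\gamma,\delta,\beta)^\intercal$. Your explicit checks ($g_{ii}=0$, $\sum_j g_{ij}^2=1/d_i$, and that $y$ depends only on $(\bx_s,\bG_s,\be_s)$) simply fill in bookkeeping the paper leaves implicit.
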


The matrix $\boldsymbol{\Sigma}$ has a specific structure. First, under Assumption~\ref{ass:measurementerror2}, measurement error is homoscedastic and uncorrelated across peers, so that $\variance(u) = \sigma_u^2$ and $\covariance(u, G{u}) = 0$. Moreover, the measurement error in the instrument is uncorrelated with the measurement error in both the individual characteristic and the average peer characteristic. Second, due to the averaging across peers, measurement error in the average peer characteristic has smaller variance than measurement error in the individual characteristic. In particular, we have $\variance(G{u}) = h_{0} \sigma_u^2$, where $h_{0} \leq 1$ is the arithmetic mean of the inverse degrees in the network. This network statistic captures the overall extent of the averaging across peers. 

Note that in the absence of measurement error, $\boldSigma = \mathbf{0}$ and $\mathbf{M} = \bI$, so there is no asymptotic bias, as expected. In many settings, the presence of measurement error entails an attenuation bias toward zero, such that estimates are smaller than the corresponding true model parameters (e.g., see \citeNP{wansbeek2000} for a detailed discussion). As argued above, however, this need not hold for the peer-effect coefficients in the linear-in-means model.

In the presence of measurement error, Lemma~\ref{lemma:bias} implies that each estimate is a linear combination of the three true parameters. The off-diagonal elements of $\bM$ capture whether the (biased) estimate of one effect is picking up part of another effect. As noted before, we refer to this phenomenon as \emph{smearing}. It can give rise to an expansion bias, falsely suggesting the presence of peer effects when there are none. For example, if $M_{31} \neq 0$, then $\betaIV$ inherits a smearing component from $\gamma$ (and if $M_{32} \neq 0$ from $\delta$). As a result, researchers may falsely detect endogenous peer effects even when $\beta =0$, provided that $\gamma \neq 0$ or $\delta \neq 0$. 

We study general properties of the probability limits in the next result.
We find it useful to work with the following block decomposition of matrices $\mathbf{S}$ and $\boldsymbol{\Sigma}$:
\begin{equation*}
    \mathbf{S}=
\begin{bmatrix}
\mathbf{S}_{11} & \mathbf{s}_{12} \\ 
\mathbf{s}_{21} & s_{22}%
\end{bmatrix}, \qquad     \boldsymbol{\Sigma} =
\begin{bmatrix}
\boldSigma_{11} & \mathbf{0} \\ 
\mathbf{0}  & 0
\end{bmatrix},
\end{equation*}
where $\mathbf{S}_{11}$ and $\boldsymbol{\Sigma}_{11}$ are $2\times 2$  matrices and $\mathbf{s}_{12}$ and $\mathbf{s}_{21}^\intercal$ are $2 \times 1$ vectors. Define $\mathbf{K}$ as the Schur complement of $s_{22}$ in $\boldS+\boldsymbol{\Sigma}$,
i.e., $\mathbf{K} := \boldS_{11}+\boldsymbol{\Sigma}_{11}- \frac{1}{s_{22}} \mathbf{s}_{12} \mathbf{s}_{21}$.

\begin{proposition}\label{proposition:bias}
    Suppose that Assumptions~\ref{ass:measurementerror1} and \ref{ass:measurementerror2} hold, that the matrix $(\mathbf{S} + \boldsymbol{\Sigma})$ is invertible, and that $s_{22} \neq 0$. Then the 2SLS estimates of $\gamma$, $\delta$, $\beta$ converge in probability to
    \begin{equation*}
    \plim  
    \begin{bmatrix}
    \gammaIV \\ 
    \deltaIV \\ 
    \betaIV %
    \end{bmatrix}= 
    \begin{bmatrix}
    \bI-\mathbf{K}^{-1}\boldsymbol{\Sigma}_{11} & \mathbf{0} \\ 
    \frac{1}{s_{22}}\mathbf{s}_{21}\mathbf{K}^{-1}\boldsymbol{\Sigma}_{11} & 1%
    \end{bmatrix}\begin{bmatrix}
    \gamma \\ 
    \delta \\ 
    \beta%
    \end{bmatrix}.
    \end{equation*}
When characteristics are independent of links and i.i.d. across individuals, this further simplifies to
\begin{equation*}
\plim  
    \begin{bmatrix}
    \gammaIV \\ 
    \deltaIV \\ 
    \betaIV %
    \end{bmatrix} = \begin{bmatrix}
    \frac{\variance(x)}{\variance(x) + \sigma_u^2} & 0 & 0 \\
    0 &  \frac{\variance(Gx)}{\variance(Gx) + h_0 \sigma_u^2}  & 0 \\
    0 & 0 & 1
\end{bmatrix}\begin{bmatrix}
    \gamma \\ 
    \delta \\ 
    \beta%
    \end{bmatrix}.
\end{equation*}
\end{proposition}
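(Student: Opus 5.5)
We start from Lemma~\ref{lemma:bias}, which gives $\bM=(\boldS+\boldSigma)^{-1}\boldS$. Since $\boldS=(\boldS+\boldSigma)-\boldSigma$, we rewrite this as
\[
\bM=\bI-(\boldS+\boldSigma)^{-1}\boldSigma .
\]
The task is then to compute $(\boldS+\boldSigma)^{-1}\boldSigma$ blockwise. Because $\boldSigma$ vanishes outside its upper-left $2\times 2$ block, only the first two columns of $(\boldS+\boldSigma)^{-1}$ matter. The lower-right entry of $\boldS+\boldSigma$ is $s_{22}\neq 0$, and $\mathbf{K}$ is its Schur complement. Invertibility of $\boldS+\boldSigma$ together with $s_{22}\neq 0$ gives $\det(\boldS+\boldSigma)=s_{22}\det\mathbf{K}$, so $\mathbf{K}$ is invertible.

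The block-inverse formula based on the Schur complement of the $(2,2)$ block then gives the first block column of $(\boldS+\boldSigma)^{-1}$:
\[
\begin{bmatrix}\mathbf{K}^{-1}\\ -\frac{1}{s_{22}}\mathbf{s}_{21}\mathbf{K}^{-1}\end{bmatrix}.
\]
Multiplying by $\boldSigma$, the product $(\boldS+\boldSigma)^{-1}\boldSigma$ has first block column $\mathbf{K}^{-1}\boldSigma_{11}$ on top and $-\frac{1}{s_{22}}\mathbf{s}_{21}\mathbf{K}^{-1}\boldSigma_{11}$ below, and zero last column. Subtracting from $\bI$ yields exactly the displayed matrix, including the zero upper-right block and the unit entry in the bottom-right. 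This also shows that $\betaIV$ loads on $\beta$ with coefficient one. The only care needed here is sign and ordering bookkeeping, which is routine.

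For the special case, we compute the entries of $\boldS$ when $x$ is i.i.d. and independent of links. First, $\covariance(x,Gx)=0$, since $G$ has no self-loops (as $d(i,i)=0$ and peers are distinct). Second, $\covariance(z,x)=\covariance(z,Gx)=0$, since $z$ averages $x_j$ over nodes at distance exactly two, which are disjoint from $i$ and from $i$'s peers; note $z$ is built from $\widetilde x$, but $u$ is mean-independent, so it contributes nothing to these covariances. Hence $\mathbf{s}_{21}=\mathbf{0}$, so the lower-left block of $\bM$ vanishes and $\boldS_{11}=\diag(\variance(x),\variance(Gx))$. This is the step requiring the most care: $\mathbf{s}_{12}$, the column of covariances with $Gy$, is generally nonzero, because $Gy$ contains $\gamma Gx$ and terms in $x_i$ through reciprocal links, so the argument must not rely on it vanishing. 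We only need $\mathbf{s}_{12}\mathbf{s}_{21}=\mathbf{0}$, which holds because $\mathbf{s}_{21}=\mathbf{0}$.

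With $\mathbf{s}_{21}=\mathbf{0}$, we get $\mathbf{K}=\boldS_{11}+\boldSigma_{11}=\diag(\variance(x)+\sigma_u^2,\ \variance(Gx)+h_0\sigma_u^2)$. Then $\bI-\mathbf{K}^{-1}\boldSigma_{11}$ is diagonal with entries $\variance(x)/(\variance(x)+\sigma_u^2)$ and $\variance(Gx)/(\variance(Gx)+h_0\sigma_u^2)$. This gives the stated diagonal form, with the bottom-right entry equal to one.
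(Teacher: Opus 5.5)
Your proof is correct and follows the paper's route: block inversion of $\mathbf{S}+\boldsymbol{\Sigma}$ via the Schur complement $\mathbf{K}$ of $s_{22}$, and then, in the i.i.d.\ case, setting $\mathbf{s}_{21}=\mathbf{0}$ with $\mathbf{S}_{11}$ diagonal. Your rewriting $\mathbf{M}=\mathbf{I}-(\mathbf{S}+\boldsymbol{\Sigma})^{-1}\boldsymbol{\Sigma}$ is a small economy, since only the first block column of the inverse is needed whereas the paper computes the full inverse and multiplies by $\mathbf{S}$; you also supply justifications the paper leaves implicit, namely invertibility of $\mathbf{K}$ via $\det(\mathbf{S}+\boldsymbol{\Sigma})=s_{22}\det\mathbf{K}$ and the vanishing of $\mathbb{C}(x,Gx)$, $\mathbb{C}(z,x)$ and $\mathbb{C}(z,Gx)$.
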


Proposition~\ref{proposition:bias} shows that, under classical measurement error, peer effect estimates are generally biased due to smearing. The estimated contextual peer effect $\deltaIV$ picks up part of the individual effect, and the estimated endogenous peer effect $\betaIV$ picks up parts of the individual and contextual peer effects. Interestingly, there is no
smearing in the opposite direction: $\gammaIV$ and $\deltaIV$ do not pick up any part of the endogenous peer effect.

These smearing effects depend in complex ways on the relationship between characteristics and network structure and are therefore difficult to sign analytically. Indeed, we show below that even in the simpler model without contextual peer effects (i.e., with only endogenous peer effects), the smearing term can be negative or positive. Nevertheless, even in this general
setup we can make some qualitative observations. As shown in
Appendix~\ref{app:biasprop}, the probability limit of $\betaIV$ can be written as
\begin{equation*}
    \begin{split}
        \plim \betaIV &= \frac{\sigma_u^2 \varphi_{z,\widetilde{x}}}{D^{IV}_1}\gamma + \frac{h_0 \sigma_u^2 \varphi_{z, G\widetilde{x}}}{D^{IV}_1} \delta + \beta,
    \end{split}
\end{equation*}
where $D^{IV}_1 := \covariance(z, Gy) - \varphi_{z,\widetilde{x}} \covariance(x, Gy) - \varphi_{z,G\widetilde{x}}\covariance(Gx, Gy)$, and where $\varphi_{z,\widetilde{x}}$ and $\varphi_{z,G\widetilde{x}}$ denote
the population regression coefficients of $\widetilde{x}$ and $G\widetilde{x}$, respectively, from regressing $z$ on a constant, $\widetilde{x}$, and
$G\widetilde{x}$. This representation shows that the smearing of $\gamma$ and $\delta$ into $\betaIV$ is governed by the extent to which the observed individual and average peer characteristics are correlated with the instrument. In particular, $\gamma$ and $\delta$ do not smear into $\betaIV$ when $\varphi_{z,\widetilde{x}}=0$ and $\varphi_{z,G\widetilde{x}}=0$, respectively.\footnote{In Appendix~\ref{app:biasprop}, we show that $\covariance(z, x) = \covariance(z, Gx) = 0$ implies $\varphi_{z,\widetilde{x}}=\varphi_{z,G\widetilde{x}}=0$.} Intuitively, this highlights that correlation between the instrument and the individual and peer characteristics---which can naturally arise under homophily---is a key driver of the asymptotic bias in $\betaIV$. With measurement error, the estimated endogenous peer effect absorbs parts of the individual effect and of the contextual peer effect, because the instrument spuriously picks up variation in the true individual and average peer characteristic.

In addition, the asymptotic bias in $\betaIV$ tends to decrease with instrument strength, as captured by $\covariance(z,Gy)$. Ceteris paribus, stronger instruments reduce the noise-to-signal ratio and therefore mitigate the bias induced by errors-in-variables. This suggests that errors-in-variables may exacerbate the consequences of weak instruments.

Proposition~\ref{proposition:bias} also specializes the result to the case in which characteristics are independent of links and i.i.d. across individuals, which is particularly relevant in experimental settings.\footnote{Both conditions matter. With randomly assigned links but a network-level component in characteristics, as across heterogeneous schools or villages, $\covariance(x,Gx)>0$ and smearing persists. This is the mechanism behind the findings of \citeA{ammermueller_peer_2009}.} In this case, the smearing effects disappear and $\gammaIV$ and $\deltaIV$ exhibit attenuation bias. Interestingly, $\betaIV$ remains consistent.

\paragraph{Two particular cases.}
To build intuition and further illustrate these effects, we next discuss two particular cases. The linear-in-means model accommodates two channels of peer effects, but applied work often estimates simpler variants that include only one channel.\footnote{Models with contextual peer effects only are estimated in \citeA{carrelljole, lavy}. Models with endogenous peer effects only are estimated in \citeA{Gaviria, trogdon,  paul2024spatialautoregressivemodelmeasurement}.} This naturally raises the question of whether the insights underlying Proposition~\ref{proposition:bias} can be sharpened in these simpler models.

Suppose first that there is \emph{no endogenous peer effect} ($\beta =0$), and that this restriction is known by the researcher. In Online
Appendix~\ref{app2:biascontextual}, we derive the analogue of
Proposition~\ref{proposition:bias} for the restricted model estimated by OLS. We find that even in this
case, $\deltaOLS$ generally exhibits bias in the direction of $
\gamma$. More precisely, 
\begin{equation}\label{eq:biasnoendogenous}
\plim \deltaOLS =\frac{\sigma _{u}^{2}\covariance(x,G{x})}{D^{OLS}}\gamma + \frac{\variance(x)\variance(G{x})-\covariance(x,G{x})^{2} + \sigma_u^2 \variance(G{x})}{D^{OLS}} \delta,
\end{equation}
where $D^{OLS} := \variance(x)\variance(G{x})-\covariance(x,G{x})^{2} +h_{0}\sigma
_{u}^{2}\variance(x)+\sigma _{u}^{2}\variance(G{x})+h_{0}\sigma _{u}^{4}$. Hence, estimates of contextual peer effects can exhibit asymptotic bias even when endogenous peer effects are absent.

In contrast to the general case, the smearing term can be signed here using the sign of $\covariance(x,Gx)$, since $D^{OLS}>0$. In particular, expansion bias is positive when individual characteristics are positively correlated with the average characteristic of their peers. Such positive correlation arises naturally under homophily, when similar agents are more likely to be connected. The magnitude of the expansion bias is also decreasing in the average inverse degree $h_0$.

The mechanism underlying the bias is as follows. When $\covariance(x,Gx)>0$, $\deltaOLS$ will spuriously pick-up part of the variation in $x$, and this tendency is more pronounced when the measurement error in $G\widetilde{x}$ is smaller. This is because $G\widetilde{x}$ then becomes a better proxy for $x$. Since $G\widetilde{x}$ averages across peers, Lemma~\ref{lemma:bias} implies that the variance of its measurement error is proportional to $h_0$. Hence, a lower $h_0$ implies less measurement error and therefore more smearing: a larger share of the individual effect $\gamma$ is loaded onto the estimated contextual peer effect $\deltaOLS$.

Conversely, suppose next that there is \emph{no contextual peer effect} ($\delta =0$), and that, again, this restriction is known by the researcher. In this case, peers' characteristics $G{\widetilde{x}}$ provide a natural instrument for peers' outcomes $G{y}$. Computations in Online Appendix~\ref{app2:biasendogenous} show that under 2SLS estimation, $\betaIV$ has an
asymptotic bias in the direction of $\gamma $:
\begin{equation}\label{eq:biasnocontextual}
\plim \betaIV  =\frac{\sigma _{u}^{2}\varphi_{G\widetilde{x}, \widetilde{x}}}{D_2^{IV}}\gamma + \beta,
\end{equation}
where $D_2^{IV} := \covariance(G{x},G{y})-\varphi_{G\widetilde{x}, \widetilde{x}}\covariance(x,G{y})$, and where $\varphi_{G\widetilde{x}, \widetilde{x}} := \frac{\covariance(x, Gx)}{\variance(x) + \sigma_u^2}$ denotes the population regression coefficient on $\widetilde{x}$ from regressing $G\widetilde{x}$ on a constant and $\widetilde{x}$. Hence, estimates of endogenous peer effects can be asymptotically biased even when contextual peer effects are absent.

The smearing term is non-zero only when $\covariance(x,Gx)\neq 0$. As in the general case, the extent of smearing is driven by correlation between the instrument and the observed characteristics. The asymptotic bias tends to decrease with instrument strength, as captured by $\covariance(Gx,Gy)$. 

When both types of peer effects are present, these sources of bias get combined, leading to the asymptotic biases highlighted in Proposition~\ref{proposition:bias}.

%%%%%%%%%%%%%%%%%%%%%%%%%%%%%%%%%%%%%%%%%%%%%%%%%%%%%%%%%%%%%%%%%%%%%%%
%%%%%%%%%%%%%%%%%%%%%%%%%%%%%%%%%%%%%%%%%%%%%%%%%%%%%%%%%%%%%%%%%%%%%%%
%%%%%%%%%%%%%%%%%%%%%%%%%%%%%%%%%%%%%%%%%%%%%%%%%%%%%%%%%%%%%%%%%%%%%%%
\section{Identification}\label{sec:identification}
We now consider the identification of the linear-in-means model of peer effects, accounting for errors-in-variables. Our aim is to establish the conditions required for identification in this model. We demonstrate that the model exhibits generic identification properties, even in the absence of external information. In particular, inherent features of the network can serve as a valuable tool for mitigating measurement error and enabling the identification of peer effects.

We propose two complementary approaches to identify the parameters of interest $(\alpha, \beta, \gamma, \delta, \sigma_u^2)$, depending on how much structure is imposed on measurement error. The first strategy is based on conditional mean restrictions and only relies on Assumption \ref{ass:measurementerror1} (mean-zero). It identifies $(\alpha, \beta, \gamma, \delta)$ by exploiting variation in average individual characteristics across network positions. This strategy is valid when, for instance, agents with more friends tend to have a higher value for the characteristic. The second identification strategy is based on conditional covariance restrictions and relies on Assumptions \ref{ass:measurementerror1} and \ref{ass:measurementerror2} (homoscedasticity and uncorrelatedness). It identifies $(\beta, \gamma, \delta, \sigma_u^2)$ by exploiting variation in covariances of individual characteristics across pairs of network positions.\footnote{In the absence of information on social connections, \citeA{depaularasulsouza19} implicitly leverage similar covariances to identify both social interactions and peer effects. In contrast to their approach, our results do not require panel data.} We maintain the weak assumptions on the disturbance term in Equation~\eqref{eq:modeloutcome} and on the measurement error in Equation~\eqref{eq:modelerror}, and therefore focus exclusively on mean and covariance restrictions. Researchers willing to impose more structure could exploit additional moments for identification.\footnote{For example, \citeA{rose17} identifies a linear-in-means model from second moments of outcomes under homoscedastic disturbances.}

For identification purposes, we form a single pooled interaction matrix $\bG_0$ of dimensions $N_0 \times N_0$ by placing all networks in the support on the diagonal of a block–diagonal matrix. The setup is feasible because bounded network size implies a finite support of interaction matrices, even as we observe arbitrarily many networks. We likewise stack observed outcomes and characteristics into conformable vectors $\boldy_0$ and $\widetilde{\bx}_0$.

We say that two nodes $i_0$ and $j_0$ have symmetric network positions if they belong to the same network $\bA$ in $\bG_0$ and if there exists a permutation of nodes' labels, $\pi$, that maps one node into another while preserving links. Formally, $\pi(i_0)=j_0, \pi(j_0)=i_0$ and for all pairs of network nodes $k_0,l_0$, $A_{\pi(k_0)\pi(l_0)}=A_{k_0l_0}$. In this case, $i_0$ and $j_0$ are indistinguishable and hence have the same conditional distribution of characteristics and outcomes. We then obtain the number of \textit{unique network positions} by counting symmetric positions only once. For instance, a star with $N_0$ nodes has two unique positions, center and periphery, while a circle with $N_0$ nodes has a single unique network position. By contrast, all positions are generally unique in a large complex network.

\subsection{Mean restrictions}
To derive the conditional mean restrictions, note that using Equation~\eqref{eq:fullmeasuremnt} the reduced form can be written as 
\begin{equation*}
    \begin{split}
        \boldy_s &= (\bI - \beta \bG_s)^{-1} [\alpha \mathbf{1} + (\gamma \bI + \delta \bG_s) \bnx_s + \boldsymbol{\eta}_s].
    \end{split}
\end{equation*}
Stacking networks, taking expectations on both sides, and using Assumption~\ref{ass:measurementerror1}, we obtain
\begin{equation}\label{eq:restmean}
    \expectation[\boldy_0 \mid \bG_0] = \underbrace{(\bI - \beta \bG_0)^{-1} [\alpha \mathbf{1} + (\gamma \bI + \delta \bG_0) \bm]}_{=: \mathbf{r}(\bm, \bG_0; \boldsymbol{\theta}_1)},
\end{equation}
where $\bm := \expectation[\bnx_0 \mid \bG_0]$ collects the expected characteristic by network position and $\boldsymbol{\theta}_1 := (\alpha, \beta, \gamma, \delta)$. Since there are $N_0$ network positions in total, this yields a nonlinear system of $N_0$ equations in four unknowns. Except for knife-edge cases, most networks should then deliver a substantial degree of overidentification. This identification strategy exploits variation in individual characteristics across network positions. By taking expectations by network position, we need no restriction on the covariance structure of measurement error in the network, and peers can make correlated mistakes.   

We next derive the precise identification conditions. Say that the model is \emph{identified from conditional mean restrictions} if there do not exist $\boldsymbol{\theta}_1$, $\boldsymbol{\theta}'_1$ with $\boldsymbol{\theta}_1 \neq \boldsymbol{\theta}'_1$ such that $\mathbf{r}(\bm, \bG_0; \boldsymbol{\theta}_1) = \mathbf{r}(\bm, \bG_0; \boldsymbol{\theta}'_1)$. We require $\beta\gamma + \delta \neq 0$; if this condition fails, it corresponds to a knife-edge case in which contextual and endogenous peer effects exactly offset one another \cite{bramoulle09}. 

\begin{theorem}\label{theorem:identification}
    Suppose that Assumption~\ref{ass:measurementerror1} holds and that $\beta \gamma + \delta \neq 0$. The parameters $\alpha, \beta, \gamma, \delta$ are identified from conditional mean restrictions if and only if the vectors $\mathbf{1}$, $\bm$, $\bG_0 \bm$, $\bG_0^2 \bm$ are linearly independent.
\end{theorem}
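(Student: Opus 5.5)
Suppose $\mathbf{r}(\bm,\bG_0;\boldsymbol{\theta}_1)=\mathbf{r}(\bm,\bG_0;\boldsymbol{\theta}'_1)$. The plan is to clear the inverses so that this nonlinear equality becomes a linear relation among $\mathbf{1},\bm,\bG_0\bm,\bG_0^2\bm$. Since $|\beta|,|\beta'|<1$ and $\bG_0$ is row-stochastic, both $(\bI-\beta\bG_0)$ and $(\bI-\beta'\bG_0)$ are invertible. They commute, being polynomials in $\bG_0$. We also have $\bG_0\mathbf{1}=\mathbf{1}$, so $(\bI-\beta\bG_0)^{-1}\mathbf{1}=\mathbf{1}/(1-\beta)$. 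Premultiplying both sides by $(\bI-\beta\bG_0)(\bI-\beta'\bG_0)$ gives
\begin{equation*}
\Big(\tfrac{\alpha}{1-\beta}-\tfrac{\alpha'}{1-\beta'}\Big)(1-\beta)(1-\beta')\mathbf{1}+(\bI-\beta'\bG_0)(\gamma\bI+\delta\bG_0)\bm-(\bI-\beta\bG_0)(\gamma'\bI+\delta'\bG_0)\bm=\mathbf{0}.
\end{equation*}
It is cleaner to write the constant simply as $c\,\mathbf{1}$ with $c=\alpha(1-\beta')-\alpha'(1-\beta)$. Expanding the remaining terms yields
\begin{equation*}
c\,\mathbf{1}+(\gamma-\gamma')\bm+(\delta-\beta'\gamma-\delta'+\beta\gamma')\bG_0\bm+(\beta\delta'-\beta'\delta)\bG_0^2\bm=\mathbf{0}.
\end{equation*}

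\emph{Sufficiency.} If the four vectors are linearly independent, every coefficient must vanish. From the $\bm$ coefficient, $\gamma=\gamma'$. The $\bG_0\bm$ and $\bG_0^2\bm$ coefficients then give $\delta-\delta'=\gamma(\beta'-\beta)$ and $\beta\delta'=\beta'\delta$. Substituting $\delta'=\delta-\gamma(\beta'-\beta)$ into the second relation gives $\beta\delta-\beta\gamma(\beta'-\beta)=\beta'\delta$, that is, $(\beta-\beta')(\delta+\beta\gamma)=0$. The assumption $\beta\gamma+\delta\neq0$ then forces $\beta=\beta'$, hence $\delta=\delta'$. The constant term gives $(\alpha-\alpha')(1-\beta)=0$, so $\alpha=\alpha'$.

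\emph{Necessity.} Suppose instead that some nonzero $(c_0,c_1,c_2,c_3)$ satisfies $c_0\mathbf{1}+c_1\bm+c_2\bG_0\bm+c_3\bG_0^2\bm=\mathbf{0}$. I will construct $\boldsymbol{\theta}'_1\neq\boldsymbol{\theta}_1$ whose coefficient vector in the display above is $t(c_0,c_1,c_2,c_3)$ for a small scalar $t$. Given $\boldsymbol{\theta}_1$ and $t$, this means solving
\begin{equation*}
\gamma'=\gamma-tc_1,\quad \delta'-\beta\gamma'=\delta-\beta'\gamma-tc_2,\quad \beta\delta'-\beta'\delta=tc_3,\quad \alpha(1-\beta')-\alpha'(1-\beta)=tc_0.
\end{equation*}
The first equation fixes $\gamma'$. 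Eliminating $\delta'$ from the second and third gives an equation that is linear in $\beta'$:
\begin{equation*}
\beta'(\beta\gamma+\delta)=\beta^2\gamma'+\beta\delta-\beta tc_2-tc_3 .
\end{equation*}
Its coefficient is nonzero precisely by the assumption $\beta\gamma+\delta\neq0$. The solution satisfies $\beta'\to\beta$ as $t\to0$, so $|\beta'|<1$ for small $t$. We then recover $\delta'$ and finally $\alpha'$, using $1-\beta\neq0$. Because the map from $\boldsymbol{\theta}'_1$ to the coefficient vector is injective, which is exactly the sufficiency computation, and $(c_0,\dots,c_3)\neq0$, we get $\boldsymbol{\theta}'_1\neq\boldsymbol{\theta}_1$. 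Multiplying back by $(\bI-\beta\bG_0)^{-1}(\bI-\beta'\bG_0)^{-1}$ shows that $\mathbf{r}$ coincides at the two parameter values, so identification fails.

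The main obstacle is the necessity direction. It requires checking that an arbitrary dependence vector can be generated by an admissible alternative parameter with $|\beta'|<1$, and that no hidden case is lost. Both are handled by the linear solvability in $\beta'$, whose coefficient $\beta\gamma+\delta$ is nonzero, together with the scaling argument in $t$.
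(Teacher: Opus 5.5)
Your proof is correct and follows the paper's argument step for step: the same premultiplication by $(\bI-\beta\bG_0)(\bI-\beta'\bG_0)$ gives the same four coefficients, and sufficiency is the same elimination, which the paper delegates to Proposition~1 of Bramoull\'e et al.\ (2009). Necessity likewise scales a dependence vector by $t$ and solves an affine system in $\boldsymbol{\theta}'_1$ whose solvability rests on $(1-\beta)(\beta\gamma+\delta)\neq 0$; the one thing to add is that the paper's parameter space $\Theta_1$ also requires $\beta'\gamma'+\delta'\neq 0$ for the alternative, which holds for small $t$ by the same continuity argument you use for $|\beta'|<1$ (the paper covers both at once by noting that $\Theta_1$ is open).
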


Identification based on mean restrictions requires the matrix $[\mathbf{1}, \bm, \bG_0 \bm, \bG_0^2 \bm]$ to have full rank, that is, rank four. This condition captures restrictions on the network structure and on the joint distribution of characteristics and links. Identification can only hold if there is enough variation in conditional expectations $m_{i_0}$. Identification fails to hold, for instance, with homogeneous marginals, when for every pair $i_0,j_0$, $x_{i_0}$ and $x_{j_0}$ have the same distributions. In that case, $m_{i_0}=m_{j_0}$ and $\bm=\lambda \mathbf{1}$ for some scalar $\lambda$. This is the case, in particular, when characteristics are independent of links and i.i.d. across individuals. Identification also fails if $\mathbf{m}$ is constant within each network but varies across networks. Then $\mathbf{G}_0\mathbf{m} = \mathbf{m}$ because every block of $\mathbf{G}_0$ is row-stochastic, and the rank is at most two. Network-level sorting is therefore not sufficient: the dependence must operate at the level of positions, for instance when individuals with higher characteristics have more or better-connected peers. 

Regardless of the joint distribution, identification also fails to hold when the network has fewer than four unique network positions. This happens, for instance, if every network is a star with $N_0$ nodes (two unique network positions) or if every network is a line with five nodes (three unique network positions).

More generally, this rank condition is stronger than the condition in \citeA{bramoulle09}, who show identification in the linear-in-means model without measurement error when $\bI$, $\bG_0$, $\bG_0^2$ are linearly independent. Indeed, suppose that identification fails to hold in the absence of measurement error. Then there exist $\lambda_0,\lambda_1,\lambda_2$ not all equal to zero such that $\lambda_0\bI+\lambda_1\bG_0+\lambda_2\bG_0^2=0$. This implies that $\lambda_0\bm+\lambda_1\bG_0\bm+\lambda_2\bG_0^2\bm=0$ and hence the vectors $\mathbf{1}$, $\bm$, $\bG_0 \bm$, $\bG_0^2 \bm$ are linearly dependent. Identification conditions are naturally more demanding with measurement error than without.

When there is variation in conditional expectations, identification is related to the spectral properties of the interaction matrix $\bG_0$. More precisely, suppose that identification fails to hold. There are two cases. Either there exist $\lambda_0,\lambda_1$ such that $\bG_0 \bm=\lambda_0 \mathbf{1}+\lambda_1\bm$. Then if $\lambda_1 \neq 1$, we can see that $\bm+\frac{\lambda_0}{\lambda_1-1} \mathbf{1}$ is an eigenvector of $\bG_0$ for the eigenvalue $\lambda_1$. Or there exist $\lambda_0,\lambda_1, \lambda_2$ such that $\bG_0^2 \bm=\lambda_0 \mathbf{1}+\lambda_1\bm+\lambda_2\bG_0 \bm$. In that case, and if $\lambda_1+\lambda_2 \neq 1$, we can check that $\bm+\frac{\lambda_0}{\lambda_1+\lambda_2-1}\mathbf{1}$ is an eigenvector of the matrix $\bG_0^2-\lambda_2\bG_0$ for the eigenvalue $\lambda_1$. In either case, the vector of characteristic expectations conditional on network positions must be precisely related to some eigenvector of a simple polynomial function of $\bG_0$.\footnote{In graph theory, it is well known that structural symmetries in a network are often associated with higher algebraic multiplicities of the eigenvalues of its interaction matrix \cite{biggs1993}. At one extreme lies the complete graph, whose interaction matrix has an eigenvalue with algebraic multiplicity $N_0-1$. At the other extreme, Erdős--Rényi random graphs typically exhibit simple spectra with high probability. Informally, the more ``asymmetric'' the network, the less likely it is that an arbitrary vector is an eigenvector of the associated interaction matrix.}

This suggests that non-identification based on mean restrictions is uncommon when $\bm$ is unrestricted and the network is sufficiently asymmetric. We next derive a sufficient condition for generic identification. The following proposition shows that, if the network is sufficiently rich, the set of mean vectors $\bm$ for which identification fails is negligible. To formally define what it means for identification to be generic, we must first account for the fact that two nodes $i$ and $j$ with symmetric network positions are indistinguishable, and notably satisfy $m_i=m_j$. Denote by $\check{N}_0$ the number of unique network positions, by $\check{\bm}$ the $\check{N}_0 \times 1$ vector of conditional expectations and by $\check{\bG}_0$ the $\check{N}_0 \times \check{N}_0$ interaction matrix defined over unique network positions.\footnote{$\check{\bm}$ is obtained from $\bm$ by keeping a unique entry for each unique network position. $\check{\bG}_0$ is obtained from $\bG_0$ by keeping a unique row for each unique network position and by summing the columns over symmetric network positions.} Note that Equation (\ref{eq:restmean}) and Theorem~\ref{theorem:identification} still hold when replacing $\bm$ by $\check{\bm}$ and $\bG_0$ by $\check{\bG}_0$.

Formally, for a fixed $\bG_0$, we say that the model is \emph{generically identified from conditional mean restrictions} if the set
\begin{equation*}
    \Big\{\check{\bm} \in \mathbb{R}^{\check{N}_0} :
\mathbf{r}(\check{\bm}, \check{\bG}_0; \boldsymbol{\theta}_1)
= \mathbf{r}(\check{\bm}, \check{\bG}_0; \boldsymbol{\theta}'_1)
\text{ for some } \boldsymbol{\theta}_1 \neq \boldsymbol{\theta}'_1 \Big\}
\end{equation*}
has Lebesgue measure zero. Our definition of generic identification is conceptually related to the order (or rank) conditions in systems of linear equations being generically satisfied \cite[886]{lewbel19}. Note, however, that non-generic cases may still be of substantive interest.

\begin{proposition}\label{proposition:genericity}
    Suppose that Assumption~\ref{ass:measurementerror1} holds and that $\beta \gamma + \delta \neq 0$.
    If $\check{\bG}_0$ has at least four distinct eigenvalues, the parameters $\alpha, \beta, \gamma, \delta$ are generically identified from conditional mean restrictions.    
\end{proposition}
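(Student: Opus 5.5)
By Theorem~\ref{theorem:identification}, which still holds with $\check{\bm}$ and $\check{\bG}_0$ in place of $\bm$ and $\bG_0$, identification fails exactly when the $\check{N}_0 \times 4$ matrix $\mathbf{P}(\check{\bm}) := [\mathbf{1}, \check{\bm}, \check{\bG}_0 \check{\bm}, \check{\bG}_0^2 \check{\bm}]$ has rank below four. So the non-identification set is the common zero set of the $4\times 4$ minors of $\mathbf{P}(\check{\bm})$. Each minor is a polynomial in the entries of $\check{\bm}$ of degree at most three, because $\check{\bG}_0$ is fixed.

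The zero set of a polynomial that is not identically zero has Lebesgue measure zero. The plan is therefore to exhibit one vector $\check{\bm}^*$ at which $\mathbf{P}(\check{\bm}^*)$ has full column rank. Then some minor is nonzero at $\check{\bm}^*$, hence not identically zero, and the set of $\check{\bm}$ where that minor vanishes has measure zero. The non-identification set sits inside it, which proves generic identification.

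To construct $\check{\bm}^*$, use the spectral hypothesis. First note that $\check{\bG}_0$ is row-stochastic: row sums are preserved when columns over symmetric positions are summed. Hence $\check{\bG}_0 \mathbf{1} = \mathbf{1}$, so $1$ is an eigenvalue with eigenvector $\mathbf{1}$.

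Take four distinct eigenvalues $\mu_1 = 1, \mu_2, \mu_3, \mu_4$, with (possibly complex) eigenvectors $\mathbf{v}_1 = \mathbf{1}, \mathbf{v}_2, \mathbf{v}_3, \mathbf{v}_4$. If $1$ is not among any chosen four, swap one of them for $1$; three other distinct eigenvalues remain. Eigenvectors for distinct eigenvalues are linearly independent.

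Set $\check{\bm}^* = \mathbf{v}_2 + \mathbf{v}_3 + \mathbf{v}_4$. Then $\check{\bG}_0^k \check{\bm}^* = \sum_{j=2}^4 \mu_j^k \mathbf{v}_j$, so in the basis $(\mathbf{1}, \mathbf{v}_2, \mathbf{v}_3, \mathbf{v}_4)$ the matrix $\mathbf{P}(\check{\bm}^*)$ has coordinate matrix
\begin{equation*}
\begin{bmatrix} 1 & 0 & 0 & 0 \\ 0 & 1 & \mu_2 & \mu_2^2 \\ 0 & 1 & \mu_3 & \mu_3^2 \\ 0 & 1 & \mu_4 & \mu_4^2 \end{bmatrix}.
\end{equation*}
Its determinant is a Vandermonde determinant in the distinct $\mu_2, \mu_3, \mu_4$, hence nonzero. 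So $\mathbf{P}(\check{\bm}^*)$ has rank four.

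The main obstacle is that eigenvalues of the non-symmetric $\check{\bG}_0$ may be complex, so $\check{\bm}^*$ may not be real. This is harmless. The minors are polynomials with real coefficients, and one that is nonzero at a complex point is a nonzero polynomial. A nonzero real polynomial cannot vanish identically on $\mathbb{R}^{\check{N}_0}$, so it is nonzero at some real point as well, and the measure-zero argument applies over the reals.

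A second point is that $\check{\bm}$ ranges over all of $\mathbb{R}^{\check{N}_0}$ with no further constraints, which matches the definition of generic identification. The condition $\beta\gamma+\delta \neq 0$ is needed only to invoke Theorem~\ref{theorem:identification}.
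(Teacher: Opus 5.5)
Your proposal is correct and follows the paper's proof essentially step for step. It uses the same witness $\check{\bm}^* = \mathbf{v}_2+\mathbf{v}_3+\mathbf{v}_4$ built from eigenvectors of three distinct eigenvalues other than $1$, the same Vandermonde argument for rank four, and the same observation that a real-coefficient $4\times 4$ minor that is nonzero at a complex point is a nonzero polynomial, whose real zero set is Lebesgue-null (your remarks that $\check{\bG}_0$ is row-stochastic and that $1$ can always be taken among the four eigenvalues make explicit two points the paper leaves implicit).
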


\begin{figure}[t]
\centering
\begin{tikzpicture}[
  node/.style={circle, draw, very thick, minimum size=4.5mm, inner sep=0pt},
  edge/.style={very thick},
  lab/.style={font=\small}
]

% Spacing (same as between 2 and 3)
\def\d{1.4}
\def\v{1.2}

% Nodes 1-2-3 on a horizontal line
\node[node] (n1) at (0,0) {};
\node[node] (n2) at (\d,0) {};
\node[node] (n3) at (2*\d,0) {};

% Nodes 4 (above) and 5 (below) to the right of 3 by the same distance d
\node[node] (n4) at (3*\d,\v) {};
\node[node] (n5) at (3*\d,-\v) {};

% Edges: 1-2-3-5-4 plus 3-4, and also 4-5
\draw[edge] (n1) -- (n2) -- (n3);
\draw[edge] (n3) -- (n5) -- (n4) -- (n3);
\draw[edge] (n4) -- (n5);

% Labels
\node[lab, below=2pt of n1] {$1$};
\node[lab, below=2pt of n2] {$2$};
\node[lab, below=2pt of n3] {$3$};
\node[lab, above=2pt of n4] {$4$};
\node[lab, below=2pt of n5] {$5$};

\end{tikzpicture}
\caption{A 5-node network in which positions 4 and 5 are symmetric}
    \label{fig:example}
\end{figure}

To illustrate Theorem \ref{theorem:identification} and Proposition \ref{proposition:genericity}, suppose that a researcher observes many networks of five nodes with the structure depicted in Figure~\ref{fig:example}. Here, positions 4 and 5 are symmetric, and hence the network structure has $\check{N}_0=4$ unique network positions. Parameters are identified from conditional mean restrictions if the following system of linear equations in $(\alpha,\beta,\gamma,\delta)$ has a unique solution:
\begin{align*}
\overline{y}_1&=\alpha+\gamma m_1+\delta m_2+\beta \overline{y}_2, \\
\overline{y}_2&=\alpha+\gamma m_2+\delta \left(\frac{1}{2}m_1+\frac{1}{2}m_3 \right)+\beta \left(\frac{1}{2}\overline{y}_1+\frac{1}{2}\overline{y}_3\right), \\
\overline{y}_3&=\alpha+\gamma m_3+\delta \left(\frac{1}{3}m_2+\frac{1}{3}m_4+\frac{1}{3}m_5\right)+\beta \left(\frac{1}{3}\overline{y}_2+\frac{1}{3}\overline{y}_4+\frac{1}{3}\overline{y}_5\right), \\
\overline{y}_4&=\alpha+\gamma m_4+\delta \left(\frac{1}{2}m_3+\frac{1}{2}m_5\right)+\beta \left(\frac{1}{2}\overline{y}_3+\frac{1}{2}\overline{y}_5\right), \\
\overline{y}_5&=\alpha+\gamma m_5+\delta \left(\frac{1}{2}m_3+\frac{1}{2}m_4\right)+\beta \left(\frac{1}{2}\overline{y}_3+\frac{1}{2}\overline{y}_4\right),
\end{align*}
where $\overline{y}_{i_0}$ denotes the expected outcome conditional on position $i_0$. By Theorem~\ref{theorem:identification}, provided $\beta\gamma+\delta \neq 0$, the parameters are identified if and only if $\mathbf{1}$, $\bm$, $\bG_0 \bm$, $\bG_0^2 \bm$ are linearly independent. Here,
\begin{equation*}
  \bG_0 = \begin{bmatrix}  
0 & 1 & 0 & 0 & 0 \\
\frac{1}{2} & 0 & \frac{1}{2} & 0 & 0 \\
0 & \frac{1}{3} & 0 &\frac{1}{3} & \frac{1}{3} \\
0 & 0 & \frac{1}{2} & 0 & \frac{1}{2} \\
0 & 0 & \frac{1}{2} & \frac{1}{2} & 0
\end{bmatrix},
\text{ and }\bG_0^2 = \begin{bmatrix}  
\frac{1}{2} & 0 & \frac{1}{2} & 0 & 0 \\
0 & \frac{2}{3} & 0 & \frac{1}{6} & \frac{1}{6} \\
\frac{1}{6} & 0 & \frac{1}{2} &\frac{1}{6} & \frac{1}{6} \\
0 & \frac{1}{6} & \frac{1}{4} & \frac{5}{12} & \frac{1}{6} \\
0 & \frac{1}{6} & \frac{1}{4} & \frac{1}{6} & \frac{5}{12}
\end{bmatrix},
\end{equation*}
and identification from mean restrictions holds if and only if the following matrix has rank four:
\begin{equation*}
[\mathbf{1}, \bm, \bG_0 \bm, \bG_0^2 \bm]=\begin{bmatrix}
    1 & m_1 & m_2 & \frac{1}{2}m_1+\frac{1}{2}m_3 \\
    1 & m_2 & \frac{1}{2}m_1+\frac{1}{2}m_3 & \frac{2}{3}m_2+\frac{1}{6}m_4+\frac{1}{6}m_5\\
    1 & m_3 & \frac{1}{3}m_2+\frac{1}{3}m_4+\frac{1}{3}m_5 & \frac{1}{6}m_1+\frac{1}{2}m_3+\frac{1}{6}m_4+\frac{1}{6}m_5 \\
    1 & m_4 & \frac{1}{2}m_3+\frac{1}{2}m_5 & \frac{1}{6}m_2+\frac{1}{4}m_3+\frac{5}{12}m_4+\frac{1}{6}m_5 \\
    1 & m_5 & \frac{1}{2}m_3+\frac{1}{2}m_4 & \frac{1}{6}m_2+\frac{1}{4}m_3+\frac{1}{6}m_4+\frac{5}{12}m_5 
\end{bmatrix}.
\end{equation*}
Then, to apply Proposition~\ref{proposition:genericity}, we merge the two symmetric positions 4 and 5. The interaction matrix defined over unique positions is now
\begin{equation*}
    \check{\bG}_0=\begin{bmatrix}
        0 & 1 & 0 & 0  \\
\frac{1}{2} & 0 & \frac{1}{2} & 0  \\
0 & \frac{1}{3} & 0 &\frac{2}{3} \\
0 & 0 & \frac{1}{2} & \frac{1}{2} \\
    \end{bmatrix}.
\end{equation*}
We can check that this matrix has four distinct eigenvalues. By Proposition~\ref{proposition:genericity}, we know that for this network structure, the set of $\check{\mathbf{m}} \in \mathbb{R}^4$ for which identification fails has Lebesgue measure zero.

\subsection{Covariance restrictions}
Another source of identification is provided by conditional covariance restrictions. Consider $\covariance(\boldy_0, \bnx_0 \mid \bG_0)$, the matrix that collects over all pairs $(i_0, j_0)$ of network positions the covariance between outcome at position $i_0$ and the observed characteristic at position $j_0$. Denote by $\bC := \variance(\bnx_0 \mid \bG_0 )$ the matrix of covariances of observed characteristics between network positions. Observe that under Assumption~\ref{ass:measurementerror1}, $\variance(\bnx_0 \mid \bG_0)=\variance(\bx_0 \mid \bG_0)+\variance(\bu_0 \mid \bG_0)$. If one is willing to additionally impose Assumption~\ref{ass:measurementerror2}, $\variance(\bu_0 \mid \bG_0)={\sigma_u^2} \bI$. Replacing $\boldy_0$ by its reduced-form expression gives
\begin{equation}\label{eq:restcovariance}
    \covariance(\boldy_0, \bnx_0 \mid \bG_0) = \underbrace{(\bI - \beta \bG_0)^{-1} (\gamma \bI + \delta \bG_0) (\bC - {\sigma_u^2} \bI)}_{=:\mathbf{R}(\bC, \bG_0; \boldsymbol{\theta}_2)},
\end{equation}
where $\boldsymbol{\theta}_2 := (\beta, \gamma, \delta, \sigma_u^2)$. This yields a nonlinear system of $N_0^2$ equations in four unknowns. As with mean restrictions, the model should then be substantially overidentified for most networks. This identification strategy exploits variation in covariances between observed characteristic across pairs of network positions. 

We derive a sufficient condition for identification in the following result. We say the model is \emph{identified from conditional covariance restrictions} if there do not exist $\boldsymbol{\theta}_2$, 
$\boldsymbol{\theta}'_2$ with $\boldsymbol{\theta}_2 \neq \boldsymbol{\theta}'_2$ such that $\mathbf{R}(\bC, \bG_0; \boldsymbol{\theta}_2) = \mathbf{R}(\bC, \bG_0; \boldsymbol{\theta}'_2)$.
\begin{theorem}\label{thm:identification2}
    Suppose that Assumptions~\ref{ass:measurementerror1} and \ref{ass:measurementerror2} hold and that $\beta \gamma + \delta \neq 0$. The parameters $\beta, \gamma, \delta, \sigma_u^2$ are identified from conditional covariance restrictions if the matrices $\bI$, $\bG_0$, $\bG_0^2$, $\bC$, $\bG_0\bC$, $\bG_0^2 \bC$ are linearly independent.
\end{theorem}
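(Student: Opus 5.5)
Suppose $\boldsymbol{\theta}_2=(\beta,\gamma,\delta,\sigma_u^2)$ and $\boldsymbol{\theta}_2'=(\beta',\gamma',\delta',\sigma_u'^2)$ both satisfy \eqref{eq:restcovariance}, i.e.\ $\mathbf{R}(\bC,\bG_0;\boldsymbol{\theta}_2)=\mathbf{R}(\bC,\bG_0;\boldsymbol{\theta}_2')$. The plan is to clear the inverses and reduce the equality to a linear combination of the six matrices in the statement, then read off the coefficients using their linear independence.

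\textbf{Step 1: clear the inverses.} Since $(\bI-\beta\bG_0)^{-1}$ and $(\bI-\beta'\bG_0)^{-1}$ are polynomials (power series) in $\bG_0$, they commute with each other and with $\gamma\bI+\delta\bG_0$. Premultiplying both sides by $(\bI-\beta\bG_0)(\bI-\beta'\bG_0)$ therefore gives
\begin{equation*}
(\bI-\beta'\bG_0)(\gamma\bI+\delta\bG_0)(\bC-\sigma_u^2\bI)=(\bI-\beta\bG_0)(\gamma'\bI+\delta'\bG_0)(\bC-\sigma_u'^2\bI).
\end{equation*}

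\textbf{Step 2: expand and match coefficients.} Expanding both sides produces terms in $\bC,\bG_0\bC,\bG_0^2\bC$ and in $\bI,\bG_0,\bG_0^2$. By the assumed linear independence of $\bI,\bG_0,\bG_0^2,\bC,\bG_0\bC,\bG_0^2\bC$, the coefficients on each matrix must agree.
\begin{itemize}
\item[] From the $\bC$-part: $\gamma=\gamma'$, $\delta-\beta'\gamma=\delta'-\beta\gamma'$, and $\beta'\delta=\beta\delta'$.
\item[] From the $\bI$-part: $\sigma_u^2\gamma=\sigma_u'^2\gamma'$, $\sigma_u^2(\delta-\beta'\gamma)=\sigma_u'^2(\delta'-\beta\gamma')$, and $\sigma_u^2\beta'\delta=\sigma_u'^2\beta\delta'$.
\end{itemize}

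\textbf{Step 3: solve for $(\beta,\gamma,\delta)$.} This is the algebra of \citeA{bramoulle09}, and it is where the condition $\beta\gamma+\delta\neq0$ enters. Substituting $\gamma'=\gamma$ into the second equation gives $\delta'=\delta+\gamma(\beta-\beta')$. Plugging this into $\beta'\delta=\beta\delta'$ yields
\begin{equation*}
(\beta'-\beta)\delta=\beta\gamma(\beta-\beta'), \qquad\text{i.e.}\qquad (\beta'-\beta)(\beta\gamma+\delta)=0.
\end{equation*}
Since $\beta\gamma+\delta\neq0$, we get $\beta'=\beta$, and then $\delta'=\delta$.

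\textbf{Step 4: recover $\sigma_u^2$.} The $\bI$-coefficients now read $\sigma_u^2 a_k=\sigma_u'^2 a_k$ for the three quantities $a_1=\gamma$, $a_2=\delta-\beta\gamma$, $a_3=\beta\delta$. So $\sigma_u^2=\sigma_u'^2$ as soon as at least one $a_k$ is nonzero. Suppose instead all three vanish. Then $\gamma=0$ forces $\delta=0$, so $\beta\gamma+\delta=0$, contradicting the maintained assumption.

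The main obstacle is Step 1: one must check that premultiplying by the commuting factor $(\bI-\beta\bG_0)(\bI-\beta'\bG_0)$ is legitimate. This holds because $|\beta|<1$ makes these matrices invertible polynomial functions of the row-stochastic block-diagonal $\bG_0$. The remaining bookkeeping is to confirm that the expansion produces no power of $\bG_0$ higher than two. That is true because each side is a product of two factors that are linear in $\bG_0$, times a term linear in $\bC$, so no $\bG_0^3$ terms arise.
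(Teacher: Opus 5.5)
Your proof is correct and follows the paper's argument essentially step for step. You premultiply by $(\bI-\beta\bG_0)(\bI-\beta'\bG_0)$, expand into the six matrices, and use linear independence to zero out the coefficients; you then recover $(\beta,\gamma,\delta)$ from the $\bC$-block (writing out the \citeA{bramoulle09} algebra that the paper only cites) and pin down $\sigma_u^2$ from the $\bI$-block using $\beta\gamma+\delta\neq 0$. One small quibble: Step 1 is not a real obstacle, since premultiplying an equality is always valid and all you need is that polynomials in $\bG_0$ commute.
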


Let us illustrate how the identification conditions operate when characteristic covariance varies across pairs of network positions. Define, as before, \emph{network distance} $d(i_0,j_0)$ between two positions $i_0,j_0$ as the number of links in a shortest path between them and \emph{diameter} as the largest finite network distance. Consider a setup where covariance between individuals' characteristics varies with network distance, up to distance three, and where the diameter of the network is greater than or equal to five. Thus, $\mathbb{V}(x_{i_0} \mid \bG_0)=\sigma_x^2$, $\mathbb{C}(x_{i_0},x_{j_0}\mid \bG_0)=\rho_1 \sigma_x^2$ if $d(i_0,j_0)=1$, $\rho_2 \sigma_x^2$ if $d(i_0,j_0)=2$, $\rho_3 \sigma_x^2$ if $d(i_0,j_0)=3$ and $0$ if $d(i_0,j_0) \geq 4$ with $\rho_1,\rho_2,\rho_3>0$. In that case, we can show that the linear independence condition of Theorem \ref{thm:identification2} is satisfied. Indeed, there exists a pair of positions $i_0,j_0$ such that $d(i_0,j_0)=5$. For this pair, we see that $(\bG_0^2 \bC)_{i_0j_0}>0$ while $(\bG_0)_{i_0j_0}=(\bG_0^2)_{i_0j_0}=\bC_{i_0j_0}=(\bG_0 \bC)_{i_0j_0}=0$. Therefore, the matrix $\bG_0^2 \bC$ cannot be expressed as a linear combination of $\bI$, $\bG_0$, $\bG_0^2$, $\bC$, $\bG_0\bC$. Repeating the argument with a pair at distance four, we see that $\bG_0 \bC$ is not a linear combination of $\bI$, $\bG_0$, $\bG_0^2$, $\bC$, and with a pair at distance three, that $\bC$ is not a linear combination of $\bI$, $\bG_0$, $\bG_0^2$. Therefore, the matrices $\bI$, $\bG_0$, $\bG_0^2$, $\bC$, $\bG_0\bC$, $\bG_0^2 \bC$ are linearly independent and hence parameters $(\beta, \gamma, \delta, \sigma_u^2)$ are identified from covariance restrictions.

%%%%%%%%%%%%%%%%%%%%%%%%%%%%%%%%%%%%%%%%%%%%%%%%%%%%%%%%%%%%%%%%%%%%%%%
%%%%%%%%%%%%%%%%%%%%%%%%%%%%%%%%%%%%%%%%%%%%%%%%%%%%%%%%%%%%%%%%%%%%%%%
%%%%%%%%%%%%%%%%%%%%%%%%%%%%%%%%%%%%%%%%%%%%%%%%%%%%%%%%%%%%%%%%%%%%%%%
\section{Estimation}\label{sec:estimation}
While dependencies between characteristics and links may generate asymptotic bias, they simultaneously offer a valuable source of information that can be harnessed to develop consistent estimators for the parameters of interest. We next introduce straightforward and practical GMM and 2SLS estimators, directly built upon the conditional mean and covariance restrictions highlighted in the previous section.\footnote{While the GMM approach can yield more efficient parameter estimates, it does introduce nonlinearity into the estimation procedure. Online Appendix~\ref{app:fixedeffects} discusses how to apply the GMM estimators in the presence of network-specific fixed effects.}\textsuperscript{,}\footnote{In practice, researchers are advised to test for weak moments and weak instruments, since weak identification can lead to poor finite-sample properties and unreliable conventional inference.} Importantly, these estimators are applicable to setups with fixed as well as stochastic networks.

\subsection{GMM estimators}\label{sec:gmm}

\paragraph{Fixed networks.}
First consider the case where a fixed interaction matrix $\bG$ is observed repeatedly by the analyst.\footnote{With slight abuse of notation, throughout this subsection and the corresponding appendices, $\bG$ denotes a single interaction matrix, in contrast to the block-diagonal matrix defined in Section~\ref{subsec:OLSestimator}. Under fixed networks, $\bG_s = \bG$ for every $s$.} This case applies, for instance, in experimental setups where the researcher can control the network. 

We first introduce a GMM estimator based on the conditional mean restrictions in Equation~\eqref{eq:restmean}.\footnote{We construct the moment conditions from the \emph{structural} rather than the \emph{reduced} form of the model. These conditions are computationally more efficient because they avoid matrix inversion. Nevertheless, the two approaches are equivalent in the population and asymptotically equivalent when using the optimal weighting matrix.} The associated conditional moment conditions are $\expectation[\mathbf{v}_{1,s}(\boldsymbol{\theta}_1, \bG) \mid \bG] = \mathbf{0}$, where
\begin{equation}\label{eq:restmean2}
    \mathbf{v}_{1,s}(\boldsymbol{\theta}_1, \bG) := (\bI - \beta\bG)\boldy_s - \alpha \mathbf{1} - (\gamma \bI + \delta \bG) \bnx_s.
\end{equation}

\begin{proposition}\label{prop:GMMfirst}
    Suppose that Assumption~\ref{ass:measurementerror1} holds, that $\beta \gamma + \delta \neq 0$, and that the condition in Theorem~\ref{theorem:identification} is satisfied. Under standard regularity conditions (see Online Appendix~\ref{app:regularity}), the GMM estimator 
    \begin{equation*}
    \begin{split}
        \widehat{\boldsymbol{\theta}}^{GMM}_1 := \arg \min_{\boldsymbol{\theta}_1} \left(\frac{1}{S} \sum_{s}
            \mathbf{v}_{1,s}(\boldsymbol{\theta}_1, \bG) 
        \right)^\intercal \boldsymbol{\Omega}_1 \left(\frac{1}{S} \sum_{s} 
            \mathbf{v}_{1,s}(\boldsymbol{\theta}_1, \bG) 
        \right),
    \end{split}
\end{equation*}
delivers consistent parameter estimates (i.e., $\plim \widehat{\boldsymbol{\theta}}^{GMM}_1 = \boldsymbol{\theta}_1^*$, the true parameter value) for every positive definite weighting matrix $\boldsymbol{\Omega}_1$, and is asymptotically normal.
\end{proposition}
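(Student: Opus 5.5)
The plan is to treat this as a standard application of GMM consistency and asymptotic normality (Newey–McFadden, Theorems 2.6 and 3.4). The structural work is to verify population moment validity and global identification; the rest is delegated to the regularity conditions in Online Appendix~\ref{app:regularity}.

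\textbf{Moment validity.} Since $\bG$ is fixed, conditioning on $\bG$ is trivial and the moments reduce to unconditional ones. Substituting \eqref{eq:fullmeasuremnt} into \eqref{eq:restmean2} at the true parameter gives
\[
\mathbf{v}_{1,s}(\boldsymbol{\theta}_1^*,\bG)=\boldsymbol{\eta}_s=\be_s-\gamma\bu_s-\delta\bG\bu_s .
\]
Its expectation is zero by $\expectation(e_{si}\mid \bx_s,\bG_s,\bu_s)=0$ and Assumption~\ref{ass:measurementerror1}, using iterated expectations.

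\textbf{Global identification.} Write $\bar{\boldy}=\expectation[\boldy_s]$ and $\bm=\expectation[\bnx_s]$. The population moment is
\[
\bar{\mathbf{v}}(\boldsymbol{\theta}_1)=(\bI-\beta\bG)\bar{\boldy}-\alpha\mathbf{1}-(\gamma\bI+\delta\bG)\bm,
\]
which is linear in $\boldsymbol{\theta}_1$: $\bar{\mathbf{v}}(\boldsymbol{\theta}_1)=\bar{\boldy}-[\mathbf{1},\bm,\bG\bm,\bG\bar{\boldy}]\,(\alpha,\gamma,\delta,\beta)^\intercal$. Its zero is unique if and only if $\mathbf{W}:=[\mathbf{1},\bm,\bG\bm,\bG\bar{\boldy}]$ has full column rank.

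To get full rank, I use $\bar{\boldy}=\mathbf{r}(\bm,\bG;\boldsymbol{\theta}_1^*)$ from \eqref{eq:restmean}, expanded as $(\bI-\beta\bG)^{-1}=\sum_k\beta^k\bG^k$. Then
\[
\bG\bar{\boldy}=\tfrac{\alpha}{1-\beta}\mathbf{1}+\gamma\bG\bm+(\beta\gamma+\delta)\bG^2\bm+(\text{higher powers}).
\]
The higher powers are the obstacle, so I will argue through Theorem~\ref{theorem:identification} rather than by direct reduction. Suppose $\mathbf{W}c=0$ with $c\neq 0$. Then the zero set of $\bar{\mathbf{v}}$ contains a line, so two distinct $\boldsymbol{\theta}_1,\boldsymbol{\theta}_1'$ satisfy $\bar{\mathbf{v}}=0$. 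Any $\boldsymbol{\theta}_1$ with $\bar{\mathbf{v}}(\boldsymbol{\theta}_1)=0$ satisfies $\bar{\boldy}=\mathbf{r}(\bm,\bG;\boldsymbol{\theta}_1)$, provided $|\beta|<1$ or at least $\bI-\beta\bG$ is invertible. This contradicts identification from conditional mean restrictions, which holds by Theorem~\ref{theorem:identification} under the rank condition and $\beta\gamma+\delta\neq0$.

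One subtlety is a $\beta$ at which $\bI-\beta\bG$ is singular. I expect to handle it either by restricting the parameter space to $|\beta|<1$, as the model does, or by noting that the line of solutions contains points with invertible $\bI-\beta\bG$, since only finitely many $\beta$ are excluded.

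Given identification, $\bar{\mathbf{v}}^\intercal\boldsymbol{\Omega}_1\bar{\mathbf{v}}$ is uniquely minimized at $\boldsymbol{\theta}_1^*$ for any positive definite $\boldsymbol{\Omega}_1$. Because the moment is linear in parameters, uniform convergence of the sample objective on a compact parameter space follows from the law of large numbers for $\frac1S\sum_s\boldy_s$ and $\frac1S\sum_s\bnx_s$, using i.i.d.\ networks and finite moments. This yields consistency.

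For asymptotic normality, the Jacobian is $-\mathbf{W}$, which has full rank by the step above. A CLT applied to $\frac{1}{\sqrt S}\sum_s\boldsymbol{\eta}_s$, assuming finite second moments, gives the usual sandwich variance $(\mathbf{W}^\intercal\boldsymbol{\Omega}_1\mathbf{W})^{-1}\mathbf{W}^\intercal\boldsymbol{\Omega}_1\mathbf{V}\boldsymbol{\Omega}_1\mathbf{W}(\mathbf{W}^\intercal\boldsymbol{\Omega}_1\mathbf{W})^{-1}$ with $\mathbf{V}=\variance(\boldsymbol{\eta}_s)$. The main care point is the identification step linking the rank of $\mathbf{W}$ to Theorem~\ref{theorem:identification}.
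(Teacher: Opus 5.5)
Your proposal is correct and follows essentially the paper's route: the paper gives no separate proof, but invokes standard GMM theory (Cameron--Trivedi) under the regularity conditions in Online Appendix~\ref{app:regularity}, with moment validity from Assumption~\ref{ass:measurementerror1} and global identification supplied by Theorem~\ref{theorem:identification}. Deriving full column rank of the Jacobian $-\mathbf{W}$ from the rank condition, which the paper simply assumes as condition (iii), is a useful addition. The singular-$(\bI-\beta\bG)$ subtlety also disappears if you multiply $\bar{\mathbf{v}}(\boldsymbol{\theta}_1)=\mathbf{0}$ through by $(\bI-\beta^*\bG)$ and compare coefficients on $\mathbf{1},\bm,\bG\bm,\bG^2\bm$ directly, as in the proof of Theorem~\ref{theorem:identification}.
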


Given that our estimator treats networks as the unit of observation, and under the assumption that networks are independent and identically distributed, there is no need to adjust standard errors for cross-sectional dependence.

The GMM estimator from Proposition~\ref{prop:GMMfirst} simplifies to a closed-form generalized least squares (GLS) estimator. This position-level GLS estimator takes as inputs the position-specific network averages of outcomes, individual characteristics, and peer characteristics. We stress that the GLS interpretation is algebraic: the effective units of observation are network positions, so identification relies solely on variation in average characteristics across positions, not on individual-level variation within them.
\begin{corollary}\label{cor:GLS}
    Suppose that Assumption~\ref{ass:measurementerror1} holds, that $\beta \gamma + \delta \neq 0$, and that the condition in Theorem~\ref{theorem:identification} is satisfied. Then $\widehat{\boldsymbol{\theta}}^{GMM}_1$ is numerically equivalent to the position-level GLS estimator 
    \begin{equation*}
        \widehat{\boldsymbol{\theta}}^{GLS}_1 := \left(\overline{\mathbf{X}}^\intercal \boldsymbol{\Omega}_1 \overline{\mathbf{X}}\right)^{-1}\overline{\mathbf{X}}^\intercal \boldsymbol{\Omega}_1 \overline{\boldy},
    \end{equation*}
    where $\overline{\boldy} := \frac{1}{S}\sum_s \boldy_s$ and $\overline{\mathbf{X}} := \frac{1}{S}\sum_s \begin{bmatrix}
    \mathbf{1} & \bnx_s & \bG{\widetilde{\bx}}_s & \bG{\boldy}_s
\end{bmatrix}$.
\end{corollary}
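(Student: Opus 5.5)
The plan is to show that the GMM objective is linear in a reparametrized way, so its minimizer has a closed form. The key observation is that the moment function in \eqref{eq:restmean2} is linear in $\boldsymbol{\theta}_1$. Writing it out,
\begin{equation*}
\mathbf{v}_{1,s}(\boldsymbol{\theta}_1,\bG) = \boldy_s - \begin{bmatrix} \mathbf{1} & \bnx_s & \bG\bnx_s & \bG\boldy_s \end{bmatrix}\boldsymbol{\theta}_1 ,
\end{equation*}
where I order $\boldsymbol{\theta}_1 = (\alpha,\gamma,\delta,\beta)^\intercal$ to match the columns of $\overline{\mathbf{X}}$. This is a harmless relabelling of the tuple $(\alpha,\beta,\gamma,\delta)$.

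Averaging over $s$ then gives $\frac{1}{S}\sum_s \mathbf{v}_{1,s}(\boldsymbol{\theta}_1,\bG) = \overline{\boldy} - \overline{\mathbf{X}}\boldsymbol{\theta}_1$. This uses the fixed-network assumption: $\bG$ is the same for all $s$, so $\frac{1}{S}\sum_s \bG\boldy_s = \bG\overline{\boldy}$, and similarly for $\bnx_s$. The GMM criterion therefore becomes the quadratic form $(\overline{\boldy}-\overline{\mathbf{X}}\boldsymbol{\theta}_1)^\intercal \boldsymbol{\Omega}_1 (\overline{\boldy}-\overline{\mathbf{X}}\boldsymbol{\theta}_1)$. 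This is exactly a weighted least squares problem with $N_0$ position-level observations and weight matrix $\boldsymbol{\Omega}_1$.

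Next I minimize the quadratic form. Its first-order condition is $\overline{\mathbf{X}}^\intercal\boldsymbol{\Omega}_1(\overline{\boldy}-\overline{\mathbf{X}}\boldsymbol{\theta}_1)=\mathbf{0}$. Since $\boldsymbol{\Omega}_1$ is positive definite, the objective is convex, and it is strictly convex whenever $\overline{\mathbf{X}}$ has full column rank. In that case the unique minimizer is $(\overline{\mathbf{X}}^\intercal\boldsymbol{\Omega}_1\overline{\mathbf{X}})^{-1}\overline{\mathbf{X}}^\intercal\boldsymbol{\Omega}_1\overline{\boldy}$, which is $\widehat{\boldsymbol{\theta}}^{GLS}_1$.

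The main obstacle is justifying that $\overline{\mathbf{X}}^\intercal\boldsymbol{\Omega}_1\overline{\mathbf{X}}$ is invertible. Since $\boldsymbol{\Omega}_1$ is positive definite, this reduces to full column rank of $\overline{\mathbf{X}}$. I would argue this in two steps.

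First, look at the probability limit of $\overline{\mathbf{X}}$, which is $[\mathbf{1},\bm,\bG\bm,\bG\,\mathbf{r}]$ with $\mathbf{r}=(\bI-\beta\bG)^{-1}[\alpha\mathbf{1}+(\gamma\bI+\delta\bG)\bm]$. Expand $(\bI-\beta\bG)^{-1}$ as a power series, use $\bG\mathbf{1}=\mathbf{1}$, and use the Cayley–Hamilton-type reduction in the proof of Theorem~\ref{theorem:identification}. Together these show that $\bG\mathbf{r}$ has a nonzero component along $\bG^2\bm$, with coefficient proportional to $\beta\gamma+\delta\neq 0$, once components in the span of $\mathbf{1},\bm,\bG\bm$ are removed. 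Linear independence of $\mathbf{1},\bm,\bG\bm,\bG^2\bm$ then implies this limit has rank four. This is the same argument as in the "if" direction of Theorem~\ref{theorem:identification}, and I would invoke that result directly.

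Second, full column rank is an open condition, so it holds with probability approaching one for $\overline{\mathbf{X}}$ itself. On that event the GMM and GLS estimators coincide numerically. Outside it, both are defined only as set-valued minimizers, and the statement is read on the event of invertibility.
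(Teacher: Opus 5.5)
Your proof follows the paper's. Because $\mathbf{v}_{1,s}$ is linear in $\boldsymbol{\theta}_1$, the GMM criterion is the weighted least-squares quadratic $(\overline{\boldy}-\overline{\mathbf{X}}\boldsymbol{\theta}_1)^\intercal\boldsymbol{\Omega}_1(\overline{\boldy}-\overline{\mathbf{X}}\boldsymbol{\theta}_1)$, and its first-order condition gives $\widehat{\boldsymbol{\theta}}^{GLS}_1$ whenever $\overline{\mathbf{X}}^\intercal\boldsymbol{\Omega}_1\overline{\mathbf{X}}$ is invertible.

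The paper simply assumes that invertibility, so your argument that it holds with probability approaching one is an addition. It works cleanly through identification: a kernel vector $\mathbf{c}$ of $\plim\overline{\mathbf{X}}$ would make $\boldsymbol{\theta}_1^*+t\mathbf{c}\in\Theta_1$ observationally equivalent to $\boldsymbol{\theta}_1^*$ for small $t$. That route is safer than your power-series claim about a ``component along $\bG^2\bm$''. That claim is ill-defined, because $\bG^2(\bI-\beta\bG)^{-1}\bm$ generally has components outside $\mathrm{span}(\mathbf{1},\bm,\bG\bm,\bG^2\bm)$.
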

When the weighting matrix is the identity, position-level GLS collapses to OLS on position-specific, network-averaged data: $\widehat{\boldsymbol{\theta}}^{OLS}_1 := \left(\overline{\mathbf{X}}^\intercal  \overline{\mathbf{X}}\right)^{-1}\overline{\mathbf{X}}^\intercal \overline{\boldy}$. Averaging at the position level eliminates the biases due to simultaneity and errors-in-variables, as both the disturbances and the measurement errors wash out within each position. This result is striking, not least because it does not hinge on the presence of errors-in-variables. It implies that, provided there is sufficient variation in average characteristics across network positions, the parameters of a linear-in-means model with contextual and endogenous peer effects can be consistently estimated by OLS on the position-specific averages.

We now introduce a GMM estimator that instead exploits the conditional covariance restrictions in Equation~\eqref{eq:restcovariance}. The associated conditional moment conditions are $\expectation[\mathbf{V}_{2,s}(\boldsymbol{\theta}_2, \bG) \mid \bG] = \mathbf{0}$, where
\begin{equation}\label{eq:restcovariance2}
    \mathbf{V}_{2,s}(\boldsymbol{\theta}_2, \bG) := (\bI - \beta \bG
    )\dot{\boldy}_s \dot{\widetilde{\bx}}^\intercal_s - (\gamma \bI + \delta \bG) \left(\dot{\widetilde{\bx}}_s\dot{\widetilde{\bx}}^\intercal_s - \sigma_u^2 \bI\right),
\end{equation}
with $\dot{\boldy}_s := \boldy_s - \expectation[\boldy_s \mid \bG]$ and $\dot{\widetilde{\bx}}_s := \widetilde{\bx}_s - \expectation[\widetilde{\bx}_s \mid \bG]$.\footnote{In
practice, the unknown conditional means are replaced by their sample averages, which affects neither consistency nor asymptotic normality.}
\begin{proposition}\label{prop:GMMsecond}
    Suppose that Assumptions~\ref{ass:measurementerror1} and \ref{ass:measurementerror2} hold, that $\beta \gamma + \delta \neq 0$, and that the condition in Theorem~\ref{thm:identification2} is satisfied. Let $\mathbf{v}_{2,s}(\boldsymbol{\theta}_2, \bG) := \vect (\mathbf{V}_{2,s}(\boldsymbol{\theta}_2, \bG))$. Under standard regularity conditions (see Online Appendix~\ref{app:regularity}), the GMM estimator
    \begin{equation*}
    \begin{split}
        \widehat{\boldsymbol{\theta}}^{GMM}_2 := \arg \min_{\boldsymbol{\theta}_2} \left(\frac{1}{S} \sum_{s}
            \mathbf{v}_{2,s}(\boldsymbol{\theta}_2, \bG) 
        \right)^\intercal \boldsymbol{\Omega}_2  \left(\frac{1}{S} \sum_{s} 
            \mathbf{v}_{2,s}(\boldsymbol{\theta}_2, \bG)
        \right),
    \end{split}
\end{equation*}
delivers consistent parameter estimates (i.e., $\plim \widehat{\boldsymbol{\theta}}^{GMM}_2 = \boldsymbol{\theta}_2^*$, the true parameter value) for every positive definite weighting matrix $\boldsymbol{\Omega}_2$, and is asymptotically normal.
\end{proposition}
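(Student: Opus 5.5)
The plan is to treat $\widehat{\boldsymbol{\theta}}^{GMM}_2$ as a standard extremum estimator built from i.i.d. network-level moments. Consistency will come from uniform convergence of the sample criterion plus identification of the true value as the unique zero of the population moments. Asymptotic normality will come from the usual mean-value expansion, with full column rank of the Jacobian.

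The first step is to verify that the population moment holds at the truth: $\expectation[\mathbf{V}_{2,s}(\boldsymbol{\theta}_2^*, \bG)\mid \bG]=\mathbf{0}$. From the reduced form, $(\bI-\beta\bG)\dot{\boldy}_s = (\gamma\bI+\delta\bG)\dot{\widetilde{\bx}}_s + \dot{\boldsymbol{\eta}}_s$, where $\boldsymbol{\eta}_s=\be_s-(\gamma\bI+\delta\bG)\bu_s$. Right-multiplying by $\dot{\widetilde{\bx}}_s^\intercal$ and taking conditional expectations requires computing $\expectation[\boldsymbol{\eta}_s\dot{\widetilde{\bx}}_s^\intercal\mid\bG]$:
\begin{itemize}
\item $\expectation[\be_s \widetilde{\bx}_s^\intercal\mid \bG]=0$, by conditional mean independence of $\be_s$ given $(\bx_s,\bG,\bu_s)$;
\item $\expectation[\bu_s\bx_s^\intercal\mid\bG]=0$, by Assumption~\ref{ass:measurementerror1};
\item $\expectation[\bu_s\bu_s^\intercal\mid\bG]=\sigma_u^2\bI$, by Assumption~\ref{ass:measurementerror2}.
\end{itemize}
Hence $\expectation[\boldsymbol{\eta}_s\dot{\widetilde{\bx}}_s^\intercal\mid\bG] = -(\gamma\bI+\delta\bG)\sigma_u^2\bI$, which exactly offsets the $\sigma_u^2\bI$ term in \eqref{eq:restcovariance2}. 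This reproduces \eqref{eq:restcovariance}.

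The second step is identification. The population moment equals $(\bI-\beta\bG)\covariance(\boldy,\widetilde{\bx}\mid\bG) - (\gamma\bI+\delta\bG)(\bC-\sigma_u^2\bI)$. Suppose it vanishes at some $\boldsymbol{\theta}_2'$. Premultiplying by $(\bI-\beta'\bG)^{-1}$, which exists because $|\beta'|<1$ on the parameter space, gives $\mathbf{R}(\bC,\bG;\boldsymbol{\theta}_2')=\mathbf{R}(\bC,\bG;\boldsymbol{\theta}_2^*)$. Theorem~\ref{thm:identification2}, applied with $\bG_0=\bG$, then forces $\boldsymbol{\theta}_2'=\boldsymbol{\theta}_2^*$. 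Because $\boldsymbol{\Omega}_2$ is positive definite, the population criterion $Q(\boldsymbol{\theta}_2)=\expectation[\mathbf{v}_{2,s}]^\intercal\boldsymbol{\Omega}_2\expectation[\mathbf{v}_{2,s}]$ is uniquely minimized at $\boldsymbol{\theta}_2^*$.

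The third step is uniform convergence. The moment $\mathbf{v}_{2,s}$ is linear in $\beta,\gamma,\delta$ and in the products $\gamma\sigma_u^2$ and $\delta\sigma_u^2$, with coefficients given by sample averages of $\dot{\boldy}_s\dot{\widetilde{\bx}}_s^\intercal$ and $\dot{\widetilde{\bx}}_s\dot{\widetilde{\bx}}_s^\intercal$. The law of large numbers for i.i.d. networks, with finite second moments assumed as a regularity condition, therefore gives uniform convergence on a compact parameter space. Consistency then follows from the standard extremum theorem (Newey--McFadden Theorem~2.1).

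Replacing the conditional means by sample averages introduces an $o_p(1)$ perturbation, which is likewise uniform because the moment is polynomial in these estimates.

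The fourth step is asymptotic normality, via a mean-value expansion of the first-order condition. The central limit theorem applies to $\frac{1}{\sqrt S}\sum_s\mathbf{v}_{2,s}(\boldsymbol{\theta}_2^*)$ given finite fourth moments, with a correction term for the estimated means. The main obstacle is showing that the Jacobian $\mathbf{J}=\partial\expectation[\mathbf{v}_{2,s}]/\partial\boldsymbol{\theta}_2^\intercal$ has full column rank. Its columns are the vectorizations of
\begin{equation*}
-\bG\,\covariance(\boldy,\widetilde{\bx}\mid\bG), \quad -(\bC-\sigma_u^2\bI), \quad -\bG(\bC-\sigma_u^2\bI), \quad (\gamma\bI+\delta\bG).
\end{equation*}
To establish this, substitute $\covariance(\boldy,\widetilde{\bx}\mid\bG)=(\bI-\beta\bG)^{-1}(\gamma\bI+\delta\bG)(\bC-\sigma_u^2\bI)$ and multiply through by the invertible matrix $(\bI-\beta\bG)$. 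A vanishing linear combination of the columns then becomes a combination of $\bI,\bG,\bG^2,\bC,\bG\bC,\bG^2\bC$ equal to zero. By the linear independence assumed in Theorem~\ref{thm:identification2}, all of its coefficients must vanish. Using $\beta\gamma+\delta\neq0$, one should then recover that the original weights are zero, mirroring the argument behind Theorem~\ref{thm:identification2}.
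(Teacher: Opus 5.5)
Your proposal is correct and follows the paper's route: the paper also obtains the result from standard GMM extremum-estimator theory under the regularity conditions in its Online Appendix (correct specification via the covariance restriction, identification via Theorem~\ref{thm:identification2}, a full-rank Jacobian, a CLT for the moments, and a compact parameter space). Your one genuine addition is to derive the Jacobian's full column rank from the linear independence of $\bI,\bG,\bG^2,\bC,\bG\bC,\bG^2\bC$ together with $\beta\gamma+\delta\neq0$, rather than assume it as regularity condition (iii); that derivation checks out. Note also that the ``correction term'' from estimated means is only $O_p(S^{-1})$ and is therefore asymptotically negligible, in line with the paper's footnote.
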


\paragraph{Stochastic networks.} We now consider the case where networks are stochastic. In this setting, aggregating the moment conditions across network positions (mean restrictions) and pairs of network positions (covariance restrictions) is often desirable. First, aggregation reduces the dimensionality of the system and mitigates noise in the moments.\footnote{If a particular network position appears infrequently in the sample, the associated moments may be very noisy, so aggregation across positions is recommended. Moreover, for large or irregular networks, enumerating all positions can be computationally prohibitive.} Second, aggregation can improve comparability of moment conditions derived from heterogeneous network structures. 

Our approach is to convert the conditional moment restrictions into unconditional ones by taking linear combinations of the original moments. We start from the conditional moments
\begin{equation*}
    \expectation[\mathbf{v}_{m,s}(\boldsymbol{\theta}_m, \bG) \mid \bG] = \mathbf{0}, \quad m \in \{1,2\},
\end{equation*}
which hold for each network $\bG$. Let $N_G$ denote the number of individuals in $\bG$ and consider arbitrary aggregation matrices $\mathbf{H}_1(\bG)$ and $\mathbf{H}_{2}(\bG)$ of dimensions $N_G \times K_1$ and $N_G^2 \times K_2$, that may depend on the interaction matrix $\bG$ but not on individual characteristics or outcomes. By the law of iterated expectations, for each $m \in \{1,2\}$,
\begin{equation*}
    \begin{split}
        \mathbf{h}_m(\boldsymbol{\theta}_m) := \expectation[\mathbf{H}_m(\bG)^\intercal\mathbf{v}_{m,s}(\boldsymbol{\theta}_m, \bG)] &= \expectation\left[\mathbf{H}_m (\bG)^\intercal\expectation[\mathbf{v}_{m,s}(\boldsymbol{\theta}_m , \bG) \mid \bG]\right],
    \end{split}
\end{equation*}
so that $\mathbf{h}_m(\boldsymbol{\theta}_m^*) = \mathbf{0}$: a valid set of unconditional moment conditions for any such choice of $\mathbf{H}_m$.

Aggregation occurs as follows. Matrix $\mathbf{H}_1(\bG)$ first aggregates mean restrictions across positions within network $\bG$ into $K_1$ aggregated moments. These moments are then aggregated across networks. Similarly, matrix $\mathbf{H}_2(\bG)$ aggregates covariance restrictions across pairs of positions within network $\bG$ into $K_2$ aggregated moments, which are then aggregated across networks. Appendix~\ref{app:aggregation} gives formal conditions under which the parameters are identified from these aggregated moments.\footnote{Regarding efficiency, \citeA{CHAMBERLAIN1987305} and \citeA{NEWEY1993419} characterize optimal aggregation matrices, which depend on the inverse covariance matrix of the moment conditions and the Jacobian of those moments. In our setting, however, these objects are difficult to estimate reliably: they vary with the underlying network structure and are high dimensional, rendering their estimation impractical.} For the mean restrictions, a single rank condition is both necessary and sufficient for identification, and it can be met only with at least four such restrictions; for the covariance restrictions, an analogous rank condition remains sufficient, and calls for at least five restrictions. Naturally, the two sets of restrictions need not be used in isolation: mean and covariance restrictions can be pooled into a single system of unconditional moments, and this is the approach we take in our Monte Carlo simulation in Section~\ref{sec:simulation}.

\begin{figure}[htbp]
\centering
\begin{tikzpicture}[
  node/.style={circle, draw, very thick, minimum size=4.5mm, inner sep=0pt},
  edge/.style={very thick},
  lab/.style={font=\small}
]

% --- (A) Dyadic pair ---
%\node[lab] at (-1.6,0.85) {(A) Dyad};
\node[node] (a1) at (-2.4,0) {};
\node[node] (a2) at (-0.8,0) {};
\draw[edge] (a1) -- (a2);
\node[lab, below=2pt of a1] {$1$};
\node[lab, below=2pt of a2] {$2$};

% --- (B) Three nodes on a line ---
%\node[lab] at (2.0,0.85) {(B) Line of 3};
\node[node] (b1) at (0.8,0) {};
\node[node] (b2) at (2.0,0) {};
\node[node] (b3) at (3.2,0) {};
\draw[edge] (b1) -- (b2) -- (b3);
\node[lab, below=2pt of b1] {$3$};
\node[lab, below=2pt of b2] {$4$};
\node[lab, below=2pt of b3] {$5$};

\end{tikzpicture}
\caption{A pair and a 3-node line}
\label{fig:twonetworks}
\end{figure}

To illustrate, suppose that a researcher observes many networks with the two structures depicted in Figure~\ref{fig:twonetworks}.  
In the population, the pair, denoted $\bG_p$, occurs with probability $\pi$, while the 3-node line, denoted $\bG_l$, occurs with probability $1-\pi$. The researcher aggregates mean restrictions by individuals' degree and covariance restrictions by the distance between pairs. To aggregate the mean restrictions within and across structures, the researcher specifies the aggregation matrices
\begin{equation*}
    \begin{split}
        \mathbf{H}_1(\bG_p) := \begin{bmatrix}
            1 & 1 \\
            0 & 0
        \end{bmatrix}^\intercal, \quad \mathbf{H}_1(\bG_l) := \begin{bmatrix}
            1 & 0 & 1 \\
            0 & 1 & 0
        \end{bmatrix}^\intercal.
    \end{split}
\end{equation*}
The aggregated moment conditions are then given by
\begin{equation*}
    \pi  \mathbf{H}_1(\bG_p)^\intercal \expectation[ \mathbf{v}_{1,s}(\boldsymbol{\theta}_1, \bG_p) \mid \bG_p]  + (1-\pi) \mathbf{H}_1(\bG_l)^\intercal  \expectation[\mathbf{v}_{1,s}(\boldsymbol{\theta}_1, \bG_l) \mid \bG_l] = \mathbf{0},
\end{equation*}
where the first and second rows collect mean restrictions for individuals with degrees one and two, respectively. This procedure computes two aggregated moments, one that sums the mean restrictions over all individuals in positions 1, 2, 3, and 5 (degree one) and another that sums the mean restrictions over all individuals in position 4 (degree two).

Similarly, to aggregate the covariance restrictions, the researcher specifies
\begin{equation*}
    \begin{split}
        \mathbf{H}_2(\bG_p) := \begin{bmatrix}
            1 & 0 & 0 & 1 \\
            0 & 1 & 1 & 0 \\
            0 & 0 & 0 & 0
        \end{bmatrix}^\intercal, \quad \mathbf{H}_2(\bG_l) := \begin{bmatrix}
            1 & 0 & 0 & 0 & 1 & 0 & 0 & 0 & 1 \\
            0 & 1 & 0 & 1 & 0 & 1 & 0 & 1 & 0 \\
            0 & 0 & 1 & 0 & 0 & 0 & 1 & 0 & 0 \\
        \end{bmatrix}^\intercal,
    \end{split}
\end{equation*}
where the first, second, and third rows collect covariance restrictions for pairs at distances zero, one, and two, respectively. The aggregated moment conditions are then
\begin{equation*}
    \pi  \mathbf{H}_2(\bG_p)^\intercal  \expectation[\mathbf{v}_{2,s}(\boldsymbol{\theta}_2, \bG_p) \mid \bG_p] + (1-\pi) \mathbf{H}_2(\bG_l)^\intercal  \expectation[\mathbf{v}_{2,s}(\boldsymbol{\theta}_2, \bG_l) \mid \bG_l] = \mathbf{0}.
\end{equation*}
This yields three aggregated moments: a first one that sums the variance restrictions over all individuals, a second one that sums covariance restrictions over pairs of individuals in positions (1,2), (3,4) and (4,5) (distance one), and a third one that sums covariance restrictions over the pair of individuals in positions (3,5) (distance two). Although the two network structures differ in size and share no common positions, the aggregation renders the restrictions comparable across graphs.

\subsection{2SLS estimators}\label{sec:IV}
The parameters of interest can also be recovered using a 2SLS estimator. This approach is simple to implement and readily accommodates network fixed effects and additional perfectly measured covariates. As discussed in Section~\ref{sec:setup}, measurement error renders the individual observed characteristic and average observed characteristics of peers endogenous, while the average outcome of peers is endogenous by construction. With three endogenous regressors $\big(\widetilde{x},\,G\widetilde{x},\,Gy\big)$, we require at least three instruments. We explore two approaches to constructing relevant and valid instruments from within the model: \emph{network-lagged characteristics} and \emph{network features}.\footnote{Alternatively, external information such as repeated measurements of the characteristics can be used to generate supplementary instruments. The three categories of instruments (network-lagged characteristics, network features, and external data) can be freely combined, provided that the rank condition is met.} The relevance of both rests on the interdependence between characteristics and links. 

Consider the modified 2SLS estimator
\begin{equation}\label{eq:IVmodified}
    \begin{bmatrix}
    \alphaIV & \gammaIV & \deltaIV & \betaIV
    \end{bmatrix}^\intercal
    := \left(\mathbf{X}^{\intercal} \mathbf{P}_Z \mathbf{X}\right)^{-1}
    \mathbf{X}^{\intercal} \mathbf{P}_Z\boldy,
\end{equation}
where $\mathbf{X}$ and $\boldy$ are as defined in Section~\ref{subsec:OLSestimator}, and $\mathbf{P}_Z := \mathbf{Z}\left(\mathbf{Z}^\intercal \mathbf{Z}\right)^{-1}\mathbf{Z}^\intercal$ denotes the projection matrix of the instrument matrix $\mathbf{Z}$. In contrast to the naive 2SLS estimator, $\mathbf{Z}$ contains only valid instruments. Using $\mathbf{P}_Z$ accommodates overidentification (i.e., $\mathbf{Z}$ may contain more than three instruments). The estimator is based on the unconditional moment conditions\footnote{With more than three excluded instruments, the sample analogues of all moments cannot be set to zero simultaneously; the 2SLS estimator in Equation~\eqref{eq:IVmodified} combines them with weighting matrix $(\mathbf{Z}^\intercal\mathbf{Z})^{-1}$.}
\begin{equation*}
    \expectation\!\left[\mathbf{Z}_s^\intercal
    (\boldy_s - \mathbf{X}_s\boldsymbol{\theta}_1)\right] = \mathbf{0},
\end{equation*}
where $\mathbf{X}_s$ and $\mathbf{Z}_s$ collect the rows of $\mathbf{X}$ and $\mathbf{Z}$ corresponding to network $s$. The parameters are identified from these moment conditions if and only if $\plim \frac{1}{N}\mathbf{Z}^\intercal\mathbf{X}$ has full column rank, which requires at least three excluded instruments. In practice, the rank condition can be checked with standard first-stage diagnostics.

We consider instruments stemming from two broad classes: (i) functions of the network alone, $\mathbf{z}_{s} := \mathbf{w}(\bG_s)$, and (ii) network-weighted characteristics, $\mathbf{z}_{s} := \bW(\bG_s)\bnx_s$. These classes cut across the two approaches introduced above: network-lagged characteristics fall in class (ii), while network features will generate instruments in both classes. The following proposition gives conditions for validity and connects each class to the conditional restrictions of Section~\ref{sec:identification}.

\begin{proposition}\label{prop:instrumentsdiffmatrix}
    Let $\mathbf{w}(\bG_s)$ and $\bW(\bG_s)$ depend on the network but not on characteristics or outcomes.
    \begin{enumerate}
        \item Suppose that Assumption~\ref{ass:measurementerror1} holds. Then $\mathbf{w}(\bG_s)$ is a valid instrument. In particular, for every $\boldsymbol{\theta}_1$, the moment condition implied by $\mathbf{w}(\bG_s)$ is a linear combination of the conditional mean restrictions, with weights that are fixed given the network.
        \item Suppose that Assumptions~\ref{ass:measurementerror1} and \ref{ass:measurementerror2} hold and that
        \begin{equation*}
            \expectation[\trace(\bW(\bG_s))]
            = \expectation[\trace(\bW(\bG_s)\bG_s^\intercal)] = 0.
        \end{equation*}
        Then $\bW(\bG_s)\bnx_s$ is a valid instrument. In particular, for every $\boldsymbol{\theta}_1$, the moment condition implied by $\bW(\bG_s)\bnx_s$ is a linear combination of the conditional mean and covariance restrictions, with weights that are fixed given the network.
    \end{enumerate}
\end{proposition}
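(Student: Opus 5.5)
Both parts rest on writing the instrument moment network by network and conditioning on $\bG_s$ through the law of iterated expectations. The structural residual is $\boldy_s - \mathbf{X}_s\boldsymbol{\theta}_1 = (\bI-\beta\bG_s)\boldy_s - \alpha\mathbf{1} - (\gamma\bI+\delta\bG_s)\bnx_s = \mathbf{v}_{1,s}(\boldsymbol{\theta}_1,\bG_s)$. Hence the moment implied by an instrument $\mathbf{z}_s$ is $\expectation[\mathbf{z}_s^\intercal \mathbf{v}_{1,s}(\boldsymbol{\theta}_1,\bG_s)]$. At the true parameter, \eqref{eq:fullmeasuremnt} gives $\mathbf{v}_{1,s}(\boldsymbol{\theta}_1^*,\bG_s)=\boldsymbol{\eta}_s = \be_s - (\gamma\bI+\delta\bG_s)\bu_s$. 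Validity therefore amounts to showing $\expectation[\mathbf{z}_s^\intercal\boldsymbol{\eta}_s]=0$. The ``in particular'' statements come from expressing $\expectation[\mathbf{z}_s^\intercal \mathbf{v}_{1,s}(\boldsymbol{\theta}_1,\bG_s)]$, for arbitrary $\boldsymbol{\theta}_1$, as a combination of $\expectation[\mathbf{v}_{1,s}\mid\bG_s]$ and $\expectation[\mathbf{V}_{2,s}\mid\bG_s]$ with weights that depend only on $\bG_s$.

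For part 1, I would condition on $\bG_s$ and write $\expectation[\mathbf{w}(\bG_s)^\intercal\mathbf{v}_{1,s}] = \expectation[\mathbf{w}(\bG_s)^\intercal \expectation[\mathbf{v}_{1,s}(\boldsymbol{\theta}_1,\bG_s)\mid\bG_s]]$. This is exactly the aggregation construction of Section~\ref{sec:gmm} with $\mathbf{H}_1(\bG)=\mathbf{w}(\bG)$, so the weights are fixed given the network. Validity then follows because $\expectation[\boldsymbol{\eta}_s\mid\bG_s]=0$. That holds since $\expectation[\be_s\mid\bx_s,\bG_s,\bu_s]=0$ and, by Assumption~\ref{ass:measurementerror1}, $\expectation[\bu_s\mid\bx_s,\bG_s,\be_s]=0$; iterating these down to $\bG_s$ gives the result.

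For part 2, I would decompose $\bnx_s = \bm_s + \dot{\bnx}_s$ with $\bm_s=\expectation[\bnx_s\mid\bG_s]$, and likewise $\boldy_s$. Then $\bnx_s^\intercal\bW^\intercal\mathbf{v}_{1,s} = \bm_s^\intercal\bW^\intercal\mathbf{v}_{1,s} + \trace(\bW^\intercal \mathbf{v}_{1,s}\dot{\bnx}_s^\intercal)$. The first term has conditional expectation $\bm_s^\intercal\bW^\intercal\expectation[\mathbf{v}_{1,s}\mid\bG_s]$, which is a combination of the mean restrictions with weights $\bW\bm_s$ depending only on $\bG_s$. For the second term, I would compute the conditional expectation of the demeaned residual times $\dot{\bnx}_s^\intercal$:
\[
\expectation[\mathbf{v}_{1,s}\dot{\bnx}_s^\intercal\mid\bG_s] = \expectation[\mathbf{V}_{2,s}(\boldsymbol{\theta}_1,\sigma_u^2,\bG_s)\mid\bG_s] - \sigma_u^2(\gamma\bI+\delta\bG_s),
\]
using that $\alpha\mathbf{1}$ and the conditional means drop out after demeaning. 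Taking the trace against $\bW^\intercal$ then yields a linear combination of the covariance restrictions, with weights given by the entries of $\bW(\bG_s)$, minus the term $\sigma_u^2\trace(\bW^\intercal(\gamma\bI+\delta\bG_s)) = \sigma_u^2[\gamma\trace(\bW)+\delta\trace(\bW\bG_s^\intercal)]$.

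At the truth, $\expectation[\mathbf{V}_{2,s}\mid\bG_s]=0$ by \eqref{eq:restcovariance}. The key step, and the one I regard as the main obstacle, is the residual $\sigma_u^2$ term, which is not itself a conditional restriction. Its unconditional expectation $\sigma_u^2[\gamma\expectation\trace(\bW)+\delta\expectation\trace(\bW\bG_s^\intercal)]$ vanishes precisely under the stated trace conditions. I need to make sure the proposition's phrase ``linear combination of the restrictions'' is interpreted at the unconditional level, where this term is zero for every $\boldsymbol{\theta}_1$. The same term is exactly $\expectation[\bnx_s^\intercal\bW^\intercal(\gamma\bI+\delta\bG_s)\bu_s]$, computed directly from Assumption~\ref{ass:measurementerror2} via $\expectation[\bu_s\bu_s^\intercal\mid\cdot]=\sigma_u^2\bI$; this gives the validity claim a second, direct check. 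Finally, I would note that $\bx_s$ terms contribute nothing because the $\bu_s$ have mean zero given $\bx_s$, and the $\be_s$ term vanishes by conditional mean independence.
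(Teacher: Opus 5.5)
Your proposal is correct and follows the paper's proof essentially step by step. Part 1 conditions on the network and applies the law of iterated expectations. Part 2 splits $\bnx_s$ into its conditional mean and deviation, rewrites $\expectation[\mathbf{v}_{1,s}\dot{\bnx}_s^\intercal\mid\bG_s]$ as $\expectation[\mathbf{V}_{2,s}\mid\bG_s]-\sigma_u^2(\gamma\bI+\delta\bG_s)$, and removes the leftover $\sigma_u^2$ term with the trace conditions at the unconditional level. Your extra direct check through $\expectation[\bu_s\bu_s^\intercal\mid\cdot]=\sigma_u^2\bI$ simply confirms the same computation at the true parameter.
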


Proposition~\ref{prop:instrumentsdiffmatrix} highlights that the 2SLS moments are linear combinations of the conditional mean and covariance restrictions, with network-determined weights. This has three implications. First, the instruments inherit their validity from the conditional restrictions, which vanish at the true parameters. Second, the instruments are jointly relevant (in the sense that $\plim \frac{1}{N}\mathbf{Z}^\intercal\mathbf{X}$ has full column rank) if the weights preserve sufficient identifying variation, paralleling the discussion in Section~\ref{sec:gmm}. For class (i), the rank condition is that of the aggregated mean restrictions in Appendix~\ref{app:aggregation}, with the vectors $\mathbf{w}$ as aggregation weights. In particular, no set of class (i) instruments satisfies the rank condition when the condition of Theorem~\ref{theorem:identification} fails, and the vectors $\mathbf{w}$ must not be collinear across positions. Class (ii) instruments additionally draw on the conditional covariances of characteristics, so $\bW$ should load on pairs whose characteristics are correlated. Third, estimates based on the 2SLS strategy are weakly less efficient than those of an efficiently weighted GMM strategy. Part of this gap could in principle be closed by enriching the instrument set, but part of it is irreducible within the class of instruments: the trace conditions of Proposition~\ref{prop:instrumentsdiffmatrix} force the weights of any such instrument to annihilate the terms $\gamma\sigma_u^2\bI$ and $\delta\sigma_u^2\bG$ in the covariance restrictions of Equation~\eqref{eq:restcovariance2}, since weights loading on these terms would correlate the instrument with the measurement error. These terms carry information about $\sigma_u^2$ as well as about $\gamma$ and $\delta$. Consequently, no choice of instruments within either class identifies $\sigma_u^2$ or recovers this part of the information about $\gamma$ and $\delta$, which contributes to the efficiency gap between 2SLS and GMM in the Monte Carlo simulation below.

\paragraph{Network-lagged characteristics.}
If the network is sufficiently sparse, one potential source of instrumental variables is network-lagged characteristics.\footnote{This approach is reminiscent of \citeA{grilicheshausman86}, who employ time-lagged observations in the context of linear panel data models.} The relevance and validity of these instruments rely on the correlation of individual characteristics across the network, under the assumption that measurement error is uncorrelated across individuals. When characteristics are correlated with links, the characteristics of peers at distances two or more can be used as instruments. 

\begin{corollary}\label{cor:lagged}
    Suppose that Assumptions~\ref{ass:measurementerror1} and \ref{ass:measurementerror2} hold. Let $\mathbf{A}_s^{(\tau)}$ denote the binary matrix indicating whether a pair within network $s$ is at distance $\tau$. Then $\bA_s^{(\tau)}\bnx_s$ is a valid instrument for every distance
    $\tau \geq 2$.\footnote{Rescaling the rows of $\bA_s^{(\tau)}$ by network-determined weights preserves its zero pattern and hence both trace conditions. In particular, the instrument $z_{si} = \sum_{j : d(i,j) = 2} \widetilde{x}_{sj} /  \sum_{j : d(i,j) = 2} 1$ of Sections~\ref{sec:setup} and \ref{sec:bias}, the row-normalized version of $\bA_s^{(2)}\bnx_s$, is valid.}
\end{corollary}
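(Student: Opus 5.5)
The plan is to derive Corollary~\ref{cor:lagged} directly from part 2 of Proposition~\ref{prop:instrumentsdiffmatrix}, taking $\bW(\bG_s) := \bA_s^{(\tau)}$. Three things need checking.

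\textbf{The matrix is a function of the network only.} Network distance $d(i,j)$ is determined by the adjacency pattern. This pattern is recovered from $\bG_s$, since $(\bG_s)_{ij}>0$ if and only if $(\bA_s)_{ij}=1$. So $\bA_s^{(\tau)}$ is a deterministic function of $\bG_s$ and involves neither characteristics nor outcomes, as the proposition requires.

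\textbf{First trace condition.} The diagonal of $\bA_s^{(\tau)}$ collects pairs $(i,i)$, which are at distance $0 \neq \tau$. Hence every diagonal entry is zero, $\trace(\bA_s^{(\tau)})=0$ almost surely, and its expectation is zero.

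\textbf{Second trace condition.} This is the only step needing care. Write
\[
\trace(\bA_s^{(\tau)}\bG_s^\intercal)=\sum_{i,j}(\bA_s^{(\tau)})_{ij}(\bG_s)_{ij}.
\]
A term is nonzero only if $j$ is a peer of $i$, in which case $d(i,j)=1$, and only if $d(i,j)=\tau \geq 2$. These cannot both hold, so every term vanishes.

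The one subtlety is directed networks. There, distance should be defined along directed paths from $i$ to $j$, consistent with how $\bG_s$ encodes $i$'s peers, so that $(\bA_s)_{ij}=1$ implies $d(i,j)=1$. Under the convention used elsewhere in the paper, the supports of the two matrices are disjoint, and the trace is identically zero.

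Part 2 of Proposition~\ref{prop:instrumentsdiffmatrix} then delivers validity under Assumptions~\ref{ass:measurementerror1} and \ref{ass:measurementerror2}.

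\textbf{Footnote extension.} The same argument covers the row-rescaled version of the instrument. Multiplying $\bA_s^{(\tau)}$ on the left by a diagonal matrix of network-determined weights preserves its zero pattern, so both traces remain zero. In particular, the distance-two average $z$ used in Sections~\ref{sec:setup} and \ref{sec:bias} is valid.
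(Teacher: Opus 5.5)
Your proposal is correct and follows the paper's proof: set $\bW(\bG_s)=\bA_s^{(\tau)}$, observe that both traces vanish identically (zero diagonal, and disjoint supports of $\bA_s^{(\tau)}$ and $\bG_s$), and invoke Part 2 of Proposition~\ref{prop:instrumentsdiffmatrix}. Your remarks that $\bA_s^{(\tau)}$ is a function of the network alone and that distance must follow the same direction as $\bG_s$ in directed networks make explicit what the paper leaves implicit.
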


Importantly, when regressors are measured with error, powers of the interaction matrix $\bG$ do not yield valid transformations of individual characteristics, in sharp contrast to the error-free case \cite{bramoulle09}. Instruments based on higher-order powers become invalid because paths with endpoints at distances zero and one induce a correlation between the instrument and the error term. For instance, for a fixed network $\bG$ with $N_G$ individuals, 
\begin{equation*}
    \covariance(G^2 \widetilde{x}, \eta \mid \bG) = -\frac{\sigma_u^2}{N_G}\left(\gamma~\trace(\bG^2) + \delta~\trace(\bG^2 \bG^\intercal)\right),
\end{equation*}
which invalidates the instrument for most network structures whenever $\sigma_u^2 \neq 0$ and either $\gamma \neq 0$ or $\delta \neq 0$. For an undirected network, for example, $\trace(\bG^2) = \sum_{i,j} g_{ij} g_{ji} > 0$.

\paragraph{Network features.}
Another potential source of instruments comes from network structure. If individuals with higher values of the underlying characteristic tend to have more connections, then degree is a relevant and valid instrument; the same logic applies to other individual-level network statistics, such as local clustering, whenever these are correlated with the true characteristic. More generally, network structure can generate instruments that work through second-order moments, even when the associated statistic is uncorrelated with the characteristic in levels. Consistent with this idea, we propose recentered, model-based instruments reminiscent of \citeA{lewbel97}.

\begin{corollary}\label{cor:features}
    Let $\mathbf{t}_s := \mathbf{t}(\bG_s)$ be a network statistic
    that varies at the individual level.
    \begin{enumerate}
        \item Suppose that Assumption~\ref{ass:measurementerror1} holds. Then $\mathbf{t}_s$ is a valid instrument.
        \item Suppose that Assumptions~\ref{ass:measurementerror1} and \ref{ass:measurementerror2} hold. Then
    $(\mathbf{t}_s - \expectation[t]\mathbf{1})\circ\bnx_s$ and
    $\mathbf{d}_s\circ(\mathbf{t}_s - \expectation[t]\mathbf{1})\circ\bG_s\bnx_s$ are
    valid instruments, where $\circ$ denotes the entrywise product.
    \end{enumerate}
\end{corollary}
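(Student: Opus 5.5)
Both parts follow from Proposition~\ref{prop:instrumentsdiffmatrix}, applied with suitable choices of $\mathbf{w}$ and $\bW$.

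\textbf{Part 1.} Take $\mathbf{w}(\bG_s) := \mathbf{t}(\bG_s)$. It is a function of the network alone, so part 1 of Proposition~\ref{prop:instrumentsdiffmatrix} applies directly under Assumption~\ref{ass:measurementerror1}. The moment $\expectation[\mathbf{t}_s^\intercal(\boldy_s - \mathbf{X}_s\boldsymbol{\theta}_1^*)]$ equals $\expectation[\mathbf{t}_s^\intercal\,\expectation[\boldsymbol{\eta}_s \mid \bG_s]]$. This vanishes because $\expectation[e_{si} \mid \bG_s] = 0$ and $\expectation[u_{si} \mid \bG_s] = 0$ by iterated expectations.

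\textbf{Part 2.} Write each instrument as $\bW(\bG_s)\bnx_s$ with a network-determined matrix. For the first, set $\bW := \diag(\mathbf{t}_s - \expectation[t]\mathbf{1})$. For the second, note that $\mathbf{d}_s\circ(\mathbf{t}_s - \expectation[t]\mathbf{1})\circ\bG_s\bnx_s = \diag(\mathbf{d}_s\circ(\mathbf{t}_s-\expectation[t]\mathbf{1}))\,\bG_s\bnx_s$, so set $\bW := \diag(\mathbf{d}_s\circ(\mathbf{t}_s-\expectation[t]\mathbf{1}))\bG_s$. It then suffices to verify the two trace conditions of part 2 of Proposition~\ref{prop:instrumentsdiffmatrix}. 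Recall where they come from: the instrument's covariance with $-\gamma\bu_s-\delta\bG_s\bu_s$ is, under Assumption~\ref{ass:measurementerror2}, $-\sigma_u^2(\gamma\,\expectation[\trace\bW] + \delta\,\expectation[\trace(\bW\bG_s^\intercal)])$ up to normalization.

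\emph{First instrument.} Here $\trace\bW = \sum_i (t_{si}-\expectation[t])$. Summing over networks and normalizing by $N$ gives $\expectation[t]-\expectation[t]=0$, using the paper's convention that $\expectation$ denotes the pooled per-individual plim. Also $\trace(\bW\bG_s^\intercal) = \sum_i (t_{si}-\expectation[t])\,g_{ii} = 0$, since there are no self-links.

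\emph{Second instrument.} $\bW$ has diagonal entries $d_i(t_i-\expectation[t])g_{ii}=0$, so $\trace\bW=0$. For the other condition, $\trace(\bW\bG_s^\intercal) = \sum_i d_i(t_i-\expectation[t])\sum_j g_{ij}^2$. Since $\sum_j g_{ij}^2 = d_i\cdot d_i^{-2} = 1/d_i$, this equals $\sum_i (t_i-\expectation[t])$. The pooled average of that sum is zero, and the degree weight was chosen precisely to cancel the $1/d_i$.

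\textbf{Main obstacle.} The delicate point is that the trace conditions hold only in expectation across networks, not network by network. I will therefore check carefully that $\expectation$ is the pooled per-individual mean, consistent with the definition in Section~\ref{sec:setup}. Then $\plim \frac1N\sum_s\sum_i(t_{si}-\expectation[t])=0$ exactly, and the centering constant $\expectation[t]$ plays the role of the recentering, in the spirit of \citeA{lewbel97}. Validity with respect to $\be_s$ follows as in part 1, because $\bW\bnx_s$ is independent of $\be_s$ in conditional mean by Assumption~\ref{ass:measurementerror1} and $\expectation(e\mid \bx,\bG,\bu)=0$.
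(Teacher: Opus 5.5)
Your proposal is correct and follows the paper's proof essentially step for step. You make the same choices $\mathbf{w}=\mathbf{t}_s$, $\bW=\diag(\mathbf{t}_s-\expectation[t]\mathbf{1})$ and $\bW=\diag(\mathbf{d}_s\circ(\mathbf{t}_s-\expectation[t]\mathbf{1}))\bG_s$, and you verify the trace conditions of Proposition~\ref{prop:instrumentsdiffmatrix} the same way, via the zero diagonal of $\bG_s$, the identity $\sum_j g_{ij}^2=1/d_i$, and the recentering at the population mean $\expectation[t]$. Your pooled per-individual version of the final step is equivalent to the paper's $\expectation[\sum_i t_{si}]-\expectation[N_s]\expectation[t]=0$, since the two differ only by the factor $\plim S/N = 1/\expectation[N_s]$.
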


The first instrument belongs to class (i), being a function of the network alone; the second and third belong to class (ii), with the recentering ensuring that the trace conditions of Proposition~\ref{prop:instrumentsdiffmatrix} hold.\footnote{The population mean $\expectation[t]$ is the expected value of the statistic across all individuals in the population of networks. In practice, it is replaced by its sample counterpart, the pooled average $\bar{t} := \frac{1}{N}\sum_s \sum_i t_{si}$, under which the trace conditions hold exactly in every sample. This case is covered by Proposition~\ref{prop:instrumentsdiffmatrix}, which extends verbatim to weights that depend on the entire collection of networks $(\bG_1, \dots, \bG_S)$.} The three instruments draw on different parts of the conditional restrictions. The first uses mean restrictions only, while the second and third additionally place nonzero weights on covariance restrictions: the second on individuals' own-variance restrictions, the third on the covariance restrictions of linked pairs. 

Whether these instruments are relevant depends on how the network statistic covaries with the characteristics; Online Appendix~\ref{app:relevance} provides sufficient conditions for each instrument in Corollary~\ref{cor:features}. A statistic that is not relevant in the traditional sense may still be useful if it shifts second moments of the individual or average peer characteristic. To illustrate, consider an undirected network and an iso-correlational environment in which $\expectation[x_{i_0} \mid \bG_0] = 0$, $\variance(x_{i_0} \mid \bG_0) = \sigma_x^2$, and $\covariance(x_{i_0}, x_{j_0} \mid \bG_0) = \rho_{d({i_0}, {j_0})}\sigma_x^2$, so that pairwise covariances depend only on network distance. Then $t$ itself is not a relevant instrument, since $\covariance(t, x) = \covariance(t, G{x}) = 0$. However, as shown in Online Appendix~\ref{app:relevance}, the first-stage covariance of the recentered peer instrument equals
\begin{equation}\label{eq:variancescovariances}
    \covariance(d(t-\expectation[t])G\widetilde{x}, G\widetilde{x}) = \covariance(t, (d-1)c)\rho_1\sigma_x^2 + \covariance(t, (d-1)(1-c))\rho_2\sigma_x^2,
\end{equation}
where $c_{i_0}=\frac{\sum_{{j_0} \neq k_0 \in \mathcal{N}_{i_0}}A_{j_0k_0}}{d_{i_0}(d_{i_0}-1)}$ is the local clustering coefficient.\footnote{The local clustering coefficient measures how often an individual's friends are also friends with one another. It is the fraction of neighbor pairs of $i_0$ that are themselves linked (triangles involving $i_0$ over potential triangles).} If $\rho_1, \rho_2 \neq 0$, the instrument is therefore generically relevant whenever $\covariance(t, (d-1)c)\neq 0$ or $\covariance(t, (d-1)(1-c))\neq0$:
the statistic need only covary with the composition of peers' linked and
unlinked pairs. Because network statistics are often correlated in practice \cite{jacksonrogers07}, the relevance conditions are likely satisfied in many applications.

\subsection{Monte Carlo evidence}\label{sec:simulation}
We study the finite-sample performance of our GMM and 2SLS estimators using a Monte Carlo simulation. We consider a standard dyadic logit model of network formation in which links are conditionally independent across dyads. Despite its parsimony, this specification generates sufficient dependence between individual characteristics and network links to permit reliable recovery of the parameters of interest.

\paragraph{Setup.}
We consider a setup consisting of many small stochastic networks. In each simulation draw, we generate $500$ networks according to a dyadic model of link formation. The number of individuals within every network is fixed at $20$, a network size commonly encountered in empirical applications. The entire simulation procedure is repeated $500$ times. 

For each individual $i$ in network $s$, we draw characteristics according to
\begin{equation*}
    x_{si} = \xi_s + \xi_{si},
\end{equation*}
where $\xi_s \sim N(10, 1)$ captures network-level heterogeneity and $\xi_{si} \sim N(0, 2)$ captures individual-level variation. This structure mimics selection into networks based on observable characteristics.

Links between individuals within a given network are formed through a directed dyadic logit specification.\footnote{Logistic shocks are symmetric and reflect match-specific unobservables. If an individual is isolated, we assign that individual a random peer.} Specifically, the log odds of $i$ having an incoming link from $j$ in network $s$ is given by
\begin{equation*}
    \log \left(\frac{\Pr[a_{sij} = 1 \mid x_{si}, x_{sj}]}{\Pr[a_{sij} = 0 \mid x_{si}, x_{sj}]}\right) = \kappa + \mu x_{sj} - \nu|x_{si} - x_{sj}|,
\end{equation*}
where $\kappa$ is an intercept governing the overall propensity to form links, $\mu$ captures the extent to which individuals with higher values of the characteristic attract more connections, and $\nu$ controls the degree of homophily. Larger values of $\nu$ imply that individuals with more dissimilar characteristics are less likely to form links, on average. In our simulations, we set $(\kappa, \mu, \nu) = (-9, 1, 1)$.

Finally, we assume that measurement error for characteristics is drawn from a normal distribution: i.e., $u_{si} \sim N(0, 0.3)$. This implies a noise-to-signal ratio of 10\%. 

\paragraph{Moments and instruments.}
To illustrate the performance of our GMM approach, we pool the moment restrictions from Propositions~\ref{prop:GMMfirst} and \ref{prop:GMMsecond}. Following the aggregation strategy in Section~\ref{sec:gmm}, we group conditional mean restrictions by degree into bins of width three (i.e., 0--3, 3--6, 6--9, and 9+) and group conditional covariance restrictions by network distance (0,1, and 2+). For the modified 2SLS estimator in Equation~\eqref{eq:IVmodified}, we employ seven instruments to address the endogeneity of three regressors. In addition to average characteristics of individuals at network distance two, which are valid by Corollary~\ref{cor:lagged}, we construct six further instruments motivated by Corollary~\ref{cor:features}, based on degree and local clustering measures.\footnote{\label{footnote:instruments}Specifically, we construct the following instruments: $d$, $c$, $(d - \mathbb{E}[d])\nx$, $(c - \mathbb{E}[c])\nx$, $d(d - \mathbb{E}[d])G{\widetilde{x}}$, and $d(c - \mathbb{E}[c])G{\widetilde{x}}$, where $d$ denotes an individual's degree and $c$ their local clustering coefficient.} Although richer sets of moments and instruments could in principle be constructed, our objective is not to optimize their choice, but to assess the performance of the theoretically motivated ones introduced above.

We contrast our estimates to those obtained from the naive, and asymptotically biased, 2SLS estimator in Equation~\eqref{eq:IV}. This estimator is implemented by using only the average characteristics of individuals at network distance two as instrument. Because this specification employs a single instrument for three endogenous regressors, it is underidentified and therefore gives rise to an asymptotic bias, as discussed in Section~\ref{sec:bias}.

\paragraph{Results.}
Figure~\ref{fig:gmmiv} presents Monte Carlo evidence comparing the GMM and IV-based approaches. We set the model parameters to $(\alpha,\gamma, \delta, \beta) = (0,1,0.5,0.5)$. The solid line corresponds to the naive 2SLS estimator, which is biased due to the presence of errors-in-variables. The dotted line corresponds to the GMM estimator, while the dashed line corresponds to the modified 2SLS estimator.

The top panel reports estimates of the contextual peer effect~$\delta$. 
The naive 2SLS estimator exhibits substantial upward bias, with an average estimate of 1.03 (s.d. 0.11), whereas the GMM and modified 2SLS estimators are close to unbiased, with sample averages of 0.51 (s.d. 0.14) and 0.48 (s.d. 0.19), respectively. The bottom panel reports estimates of the endogenous peer effect~$\beta$. In this case, the naive 2SLS estimator is biased downward, with an average of 0.34 (s.d. 0.04), while the GMM and modified 2SLS estimators yield average estimates of 0.50 (s.d. 0.04) and 0.51 (s.d. 0.06), respectively. 

Overall, these results indicate that asymptotic bias in estimated peer effects is a first-order empirical concern. The naive estimate of the contextual peer effect is biased upward by more than 100\%, while the naive estimate of the endogenous peer effect is biased downward by more than 30\%. By contrast, our GMM and modified 2SLS estimators perform well in finite samples. Although the modified 2SLS estimator is more variable than the naive estimator---reflecting the treatment of three regressors as endogenous---its precision remains broadly comparable. The GMM estimator combines lower bias with higher precision.

\begin{figure}[ph!]
    \centering

    \begin{subfigure}{0.95\textwidth}
        \centering
        \includegraphics[width=\linewidth]{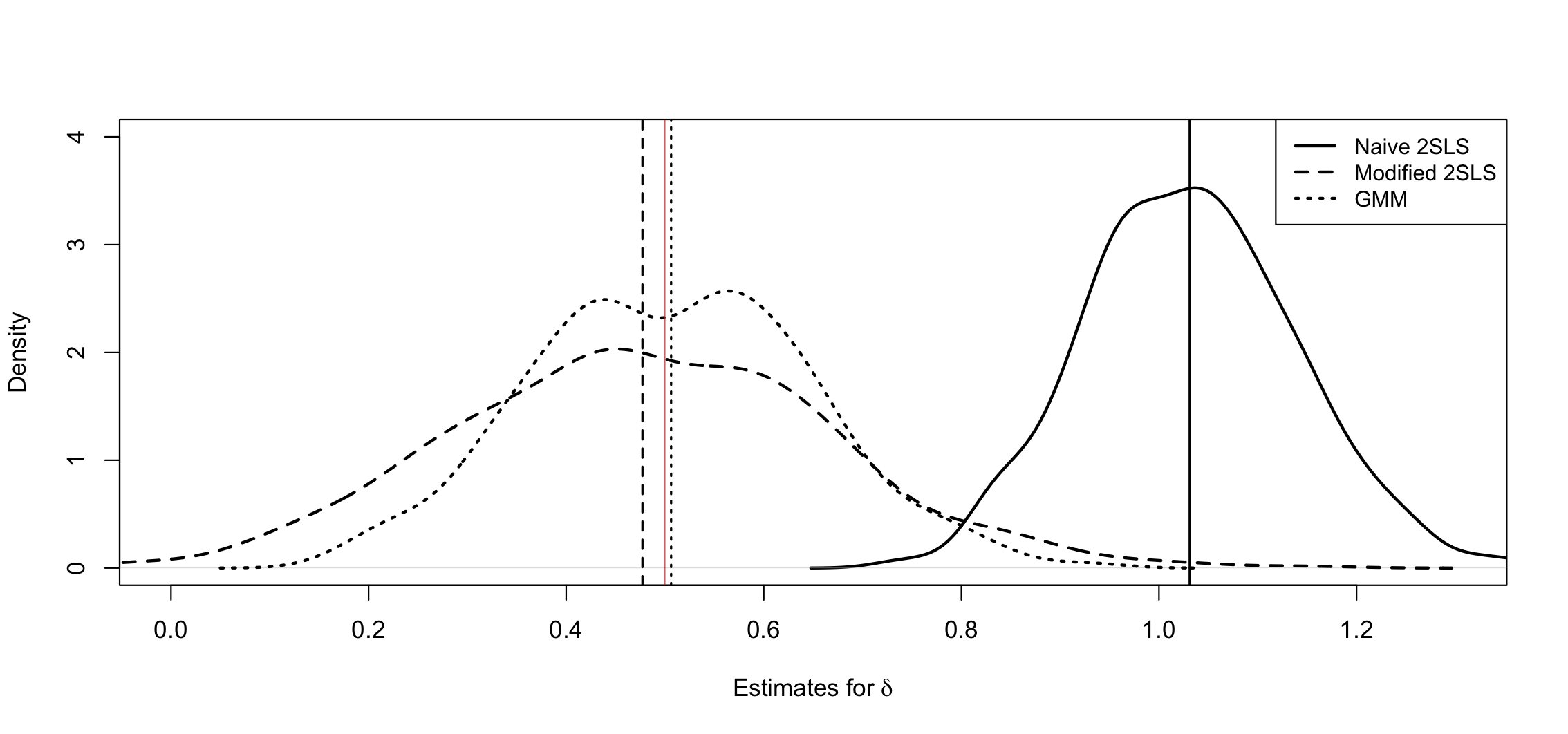}
        \caption{Contextual peer effect $\delta$}
        \label{fig:iv-beta}
    \end{subfigure}
    \hfill
    \begin{subfigure}{0.95\textwidth}
        \centering
        \includegraphics[width=\linewidth]{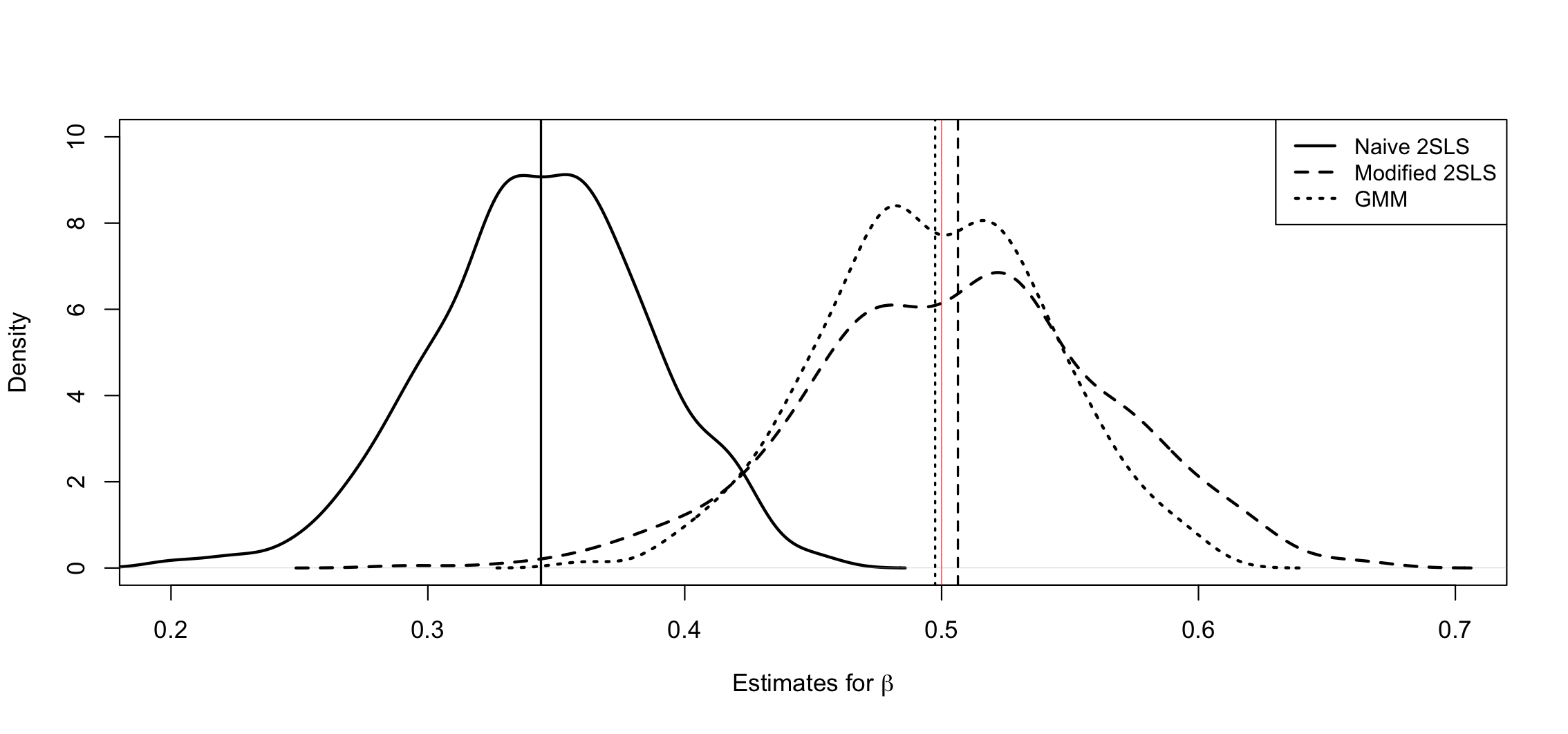}
        \caption{Endogenous peer effect $\beta$}
        \label{fig:iv-delta}
    \end{subfigure}

    \caption{Sampling distributions of the GMM and 2SLS estimators}
    \label{fig:gmmiv}

    \caption*{\footnotesize \textit{Notes:} Each panel displays kernel density estimates of Monte Carlo sampling distributions based on 500 replications for the GMM estimator and two 2SLS estimators. The GMM estimator follows Propositions~\ref{prop:GMMfirst} and \ref{prop:GMMsecond} and exploits both conditional mean and covariance restrictions; it is implemented using a two-step GMM procedure. The naive 2SLS estimator is implemented using Equation~\eqref{eq:IV}, while the modified 2SLS estimator is implemented using Equation~\eqref{eq:IVmodified}; the corresponding instrument set is described in Footnote~\ref{footnote:instruments}. The horizontal axis reports estimated parameter values and the vertical axis reports density. Black vertical lines indicate the sample means of each estimator. For both peer-effects parameters shown, the true parameter value equals 0.5 and is indicated by a thin red vertical line. In each replication, outcomes are generated on 500 networks with 20 individuals each. Simulations are conducted under parameter values $(\alpha, \gamma, \delta, \beta, \sigma_u^2, \kappa, \mu, \nu) = (0, 1, 0.5, 0.5, 0.3, -9, 1, 1)$, which imply an average network degree of 8.22 and an average local clustering coefficient of 0.60. The implied noise-to-signal ratio is 10\%.}
\end{figure}

%%%%%%%%%%%%%%%%%%%%%%%%%%%%%%%%%%%%%%%%%%%%%%%%%%%%%%%%%%%%%%%%%%%%%%%
%%%%%%%%%%%%%%%%%%%%%%%%%%%%%%%%%%%%%%%%%%%%%%%%%%%%%%%%%%%%%%%%%%%%%%%
%%%%%%%%%%%%%%%%%%%%%%%%%%%%%%%%%%%%%%%%%%%%%%%%%%%%%%%%%%%%%%%%%%%%%%%
%\section{Empirical application}

%%%%%%%%%%%%%%%%%%%%%%%%%%%%%%%%%%%%%%%%%%%%%%%%%%%%%%%%%%%%%%%%%%%%%%%
%%%%%%%%%%%%%%%%%%%%%%%%%%%%%%%%%%%%%%%%%%%%%%%%%%%%%%%%%%%%%%%%%%%%%%%
%%%%%%%%%%%%%%%%%%%%%%%%%%%%%%%%%%%%%%%%%%%%%%%%%%%%%%%%%%%%%%%%%%%%%%%
\section{Concluding remarks}\label{sec:conclusion}
This paper studies errors-in-variables in the linear-in-means model of social interactions. We show that classical measurement error generally induces asymptotic bias in naive 2SLS estimates of peer effects. While this bias is driven by the interplay between individual characteristics and network links, the same mechanism also offers an opportunity for identification and consistent estimation of the parameters of interest. We accordingly propose GMM and 2SLS estimators that are straightforward to implement, and demonstrate their performance in a Monte Carlo simulation.

Our analysis is a first step toward a general analysis of measurement error and peer effects in networks. We see three natural directions for future research. First, and given that the model is significantly overidentified, we believe that identification may be robust to relaxing Assumption~\ref{ass:measurementerror2}. It would be interesting to extend our analysis to setups where the variance-covariance structure of measurement errors may depend on the network. Second, it would be interesting, and challenging, to extend our analysis to settings where measurement error is non-classical. Recent research by \citeA{balestraeugsterpuljic23} has shown that when a binary characteristic is misclassified, an expansion bias can even arise under random group formation.\footnote{This additional 
expansion bias arises from the correlation between the average peer characteristic and the misclassification error in the individual characteristic.} We hypothesize that this additional bias will emerge in a network setting as well and that under non-random assignment, the usual asymptotic bias will reappear with qualitatively similar features to the bias studied here. Third, an exploration of the interplay between mismeasured characteristics and mismeasured links could contribute to a deeper and fuller understanding of the impact of measurement error on peer effect estimates. For instance, if homophily is correlated with friendship intensity, surveys that only sample a few best friends might deliver biased estimates of homophily. This, in turn, may interact with asymptotic bias due to mismeasured characteristics.

Our results also point to a natural next step for empirical work. While our simulations illustrate that the biases from mismeasured covariates can be sizable, their actual magnitude in practice depends on the data-generating process and will vary across applications. A systematic reassessment of earlier peer-effects studies through the lens of our framework, quantifying how sensitive their conclusions are to measurement error, would be a valuable exercise, which we leave for future research.

\clearpage
\bibliographystyle{apacite} 
\renewcommand\bibliographytypesize{\footnotesize} %Font size bibliography (\footnotesize or \small)
	\bibliography{references.bib}

%%%%%%%%%%%%%%%%%%%%%%%%%%%%%%%%%%%%%%%%
%%%%%%%%%%%%%%%%%%%%%%%%%%%%%%%%%%%%%%%%
%%%%%%%%%%%%%%%%%%%%%%%%%%%%%%%%%%%%%%%%
% APPENDIX
%%%%%%%%%%%%%%%%%%%%%%%%%%%%%%%%%%%%%%%%
%%%%%%%%%%%%%%%%%%%%%%%%%%%%%%%%%%%%%%%%
%%%%%%%%%%%%%%%%%%%%%%%%%%%%%%%%%%%%%%%%
\clearpage

\setcounter{page}{1}
\renewcommand{\thepage}{A\arabic{page}}

\begin{center}
    {\huge Appendix \\
    \vspace{10pt}
    \Large \textit{Measurement Error and Peer Effects in Networks}
  }
\end{center}

\appendix
\section{Proofs for results in the main text}\label{app:proofs}

\subsection{Proof of Lemma~\ref{lemma:bias}}
By the FWL theorem, we can partial out the intercept and work with demeaned variables. The resulting naive 2SLS estimator for $(\beta, \gamma,\delta)$ is
\begin{equation*}\label{eq:IVpartialled}
    \begin{bmatrix}
    \gammaIV & \deltaIV & \betaIV 
    \end{bmatrix}^\intercal = \left(\mathbf{Z}_w^{\intercal}\mathbf{X}_w\right)^{-1}\mathbf{Z}_w^{\intercal} \boldy_w,
\end{equation*}
where $\mathbf{X}_w := \begin{bmatrix}
    \bnx_w & (\bG{\widetilde{\bx}})_w & (\bG{\boldy})_w
\end{bmatrix}$, $\mathbf{Z}_w := \begin{bmatrix} \bnx_w & (\bG{\widetilde{\bx}})_w & \mathbf{z}_w \end{bmatrix}$, and, for any $N\times 1$ vector $\mathbf{a}$, the demeaned version is $\mathbf{a}_w := (\bI_N - \frac{1}{N}\bJ_N) \mathbf{a}$, with $\mathbf{I}_N$ the $N\times N$ identity and $\mathbf{J}_N$ the $N\times N$ matrix of ones. In this representation, all variables are expressed as deviations from their sample means.

Using the properties of probability limits, we have that
\begin{equation*}
    \begin{split}
        \plim \begin{bmatrix}
    \gammaIV & \deltaIV & \betaIV 
    \end{bmatrix}^\intercal &= \left(\plim \frac{1}{N} \mathbf{Z}_w^{\intercal}\mathbf{X}_w \right)^{-1} \left( \plim \frac{1}{N}\mathbf{Z}_w^{\intercal} \boldy_w \right),
    \end{split}
\end{equation*}
and under Assumptions~\ref{ass:measurementerror1} and \ref{ass:measurementerror2}, using that $\covariance(z, u) = \covariance(z, Gu) = 0$ since the instrument only contains characteristics of individuals at distance two, we obtain
\begin{equation*}
    \begin{split}
        \plim \frac{1}{N} \mathbf{Z}_w^{\intercal}\mathbf{X}_w  &= \plim \frac{1}{N}\begin{bmatrix}
        \bx^{\intercal}_w\bx_w & \bx^{\intercal}_w (\bG{\bx})_w &  \bx^{\intercal}_w (\bG{\boldy})_w \\
        (\bG{\bx})^{\intercal}_w \bx_w & (\bG{\bx})^{\intercal}_w(\bG{\bx})_w & (\bG{\bx})^{\intercal}_w(\bG{\boldy})_w \\
        {\mathbf{z}}^{\intercal}_w \bx_w & {\mathbf{z}}^{\intercal}_w(\bG{\bx})_w & {\mathbf{z}}^{\intercal}_w(\bG{\boldy})_w
        \end{bmatrix} \\
        &\qquad + \plim \frac{1}{N} \begin{bmatrix}
        \bu^{\intercal}_w\bu_w & \bu^{\intercal}_w (\bG{\bu})_w & 0\\
        (\bG{\bu})^{\intercal}_w \bu_w & (\bG{\bu})^{\intercal}_w (\bG{\bu})_w & 0 \\
        0 & 0 & 0 \\
        \end{bmatrix}, \\
              \plim \frac{1}{N}\mathbf{Z}_w^{\intercal} \boldy_w &=  \plim \frac{1}{N}\begin{bmatrix}
        \bx^{\intercal}_w\bx_w & \bx^{\intercal}_w (\bG{\bx})_w &  \bx^{\intercal}_w (\bG{\boldy})_w \\
        (\bG{\bx})^{\intercal}_w \bx_w & (\bG{\bx})^{\intercal}_w(\bG{\bx})_w & (\bG{\bx})^{\intercal}_w(\bG{\boldy})_w \\
        {\mathbf{z}}^{\intercal}_w \bx_w & {\mathbf{z}}^{\intercal}_w(\bG{\bx})_w & {\mathbf{z}}^{\intercal}_w(\bG{\boldy})_w
        \end{bmatrix} \begin{bmatrix}
            \gamma \\
            \delta \\ 
            \beta
        \end{bmatrix},
    \end{split}
\end{equation*}
where the second equality follows by substituting Equation~\eqref{eq:modeloutcome}. Evaluating these probability limits gives the desired result.

\subsection{Proof of Proposition~\ref{proposition:bias}}\label{app:biasprop}
\paragraph{The general case.}
Define $\mathbf{K}$ as the Schur complement of $s_{22}$ in $\boldS+\boldsymbol{\Sigma}$,
i.e., $\mathbf{K} := \boldS_{11}+\boldsymbol{\Sigma}_{11}- \frac{1}{s_{22}} \mathbf{s}_{12} \mathbf{s}_{21}$. Using the formula for the inverse of a block matrix, we obtain
\begin{equation*}
    (\boldS + \boldSigma)^{-1} = \begin{bmatrix}
        \mathbf{K}^{-1} & - \frac{1}{s_{22}}\mathbf{K}^{-1} \mathbf{s}_{12}  \\
        - \frac{1}{s_{22}}  \mathbf{s}_{21} \mathbf{K}^{-1} & \frac{1}{s_{22}}  + \frac{1}{s_{22}^2}  \mathbf{s}_{21} \mathbf{K}^{-1} \mathbf{s}_{12} 
    \end{bmatrix},
\end{equation*}
so that for $\mathbf{M} := (\boldS + \boldSigma)^{-1} \boldS$, we have
\begin{equation*}
    \begin{split}
        \mathbf{M} &= \begin{bmatrix}
        \mathbf{K}^{-1} & -  \frac{1}{s_{22}} \mathbf{K}^{-1} \mathbf{s}_{12} \\
        - \frac{1}{s_{22}} \mathbf{s}_{21} \mathbf{K}^{-1} & \frac{1}{s_{22}}  + \frac{1}{s_{22}^2}  \mathbf{s}_{21} \mathbf{K}^{-1} \mathbf{s}_{12} 
    \end{bmatrix} \begin{bmatrix}
\mathbf{S}_{11} & \mathbf{s}_{12} \\ 
\mathbf{s}_{21} & s_{22}%
\end{bmatrix} = \begin{bmatrix}
    \bI-\mathbf{K}^{-1}\boldsymbol{\Sigma}_{11} & \mathbf{0} \\ 
    \frac{1}{s_{22}} \mathbf{s}_{21}\mathbf{K}^{-1}\boldsymbol{\Sigma}_{11} & 1%
    \end{bmatrix}.
    \end{split}
\end{equation*}

Using the Sherman-Morrison formula, we obtain
\begin{equation*}
    \begin{split}
        \mathbf{K}^{-1} &= (\boldS_{11}+\boldsymbol{\Sigma}_{11})^{-1} + \frac{1}{s_{22}- \mathbf{s}_{21} (\boldS_{11}+\boldsymbol{\Sigma}_{11})^{-1} \mathbf{s}_{12}} (\boldS_{11}+\boldsymbol{\Sigma}_{11})^{-1} \mathbf{s}_{12}\mathbf{s}_{21} (\boldS_{11}+\boldsymbol{\Sigma}_{11})^{-1} \\
        &= (\boldS_{11}+\boldsymbol{\Sigma}_{11})^{-1} + \frac{1}{s_{22}- \boldsymbol{\varphi}_{z}^\intercal \mathbf{s}_{12}} (\boldS_{11}+\boldsymbol{\Sigma}_{11})^{-1} \mathbf{s}_{12}\boldsymbol{\varphi}_{z}^\intercal,
    \end{split}
\end{equation*}
where we used $\boldsymbol{\varphi}_{z}^\intercal := \mathbf{s}_{21} (\boldS_{11}+\boldsymbol{\Sigma}_{11})^{-1}$. This vector can be interpreted as the probability limit of the coefficients on $\widetilde{x}$ and $G\widetilde{x}$ obtained from regressing $z$ on a constant, $\widetilde{x}$, and $G\widetilde{x}$. Therefore,
\begin{equation*}
    \begin{split}
        \frac{1}{s_{22}}\mathbf{s}_{21}\mathbf{K}^{-1}\boldsymbol{\Sigma}_{11} &= \frac{1}{s_{22} - \boldsymbol{\varphi}_{z}^\intercal \mathbf{s}_{12}} \boldsymbol{\varphi}_{z}^\intercal \boldsymbol{\Sigma}_{11} \\
        &= \frac{1}{s_{22} - \boldsymbol{\varphi}_{z}^\intercal \mathbf{s}_{12}} \sigma_u^2 \begin{bmatrix}
            \varphi_{z,\widetilde{x}} &  h_0 \varphi_{z, G\widetilde{x}}
        \end{bmatrix}.
    \end{split}
\end{equation*}

If $\covariance(z, x) = \covariance(z, Gx) = 0$, then $\mathbf{s}_{21} = \mathbf{0}$, and hence $\varphi_{z,\widetilde{x}}=\varphi_{z,G\widetilde{x}}=0$.

\paragraph{Independent and i.i.d. characteristics.} Under independent and i.i.d. characteristics, we have $\mathbf{s}_{21} = \mathbf{0}$,
so that $\mathbf{K} = \boldS_{11}+\boldsymbol{\Sigma}_{11}$. In addition, $\mathbf{S}_{11} = \begin{bmatrix}
    \variance(x) & 0 \\
    0 & \variance(Gx)
\end{bmatrix}$. We therefore obtain
\begin{equation*}
    \begin{split}
        \bI-\mathbf{K}^{-1}\boldsymbol{\Sigma}_{11} &= \bI - \begin{bmatrix}
            \variance(x) + \sigma_u^2 & 0 \\
            0 & \variance(Gx) + h_0 \sigma_u^2
        \end{bmatrix}^{-1} \begin{bmatrix}
            \sigma_u^2 & 0 \\
            0 & h_0 \sigma_u^2
        \end{bmatrix} \\
        &= \bI - \begin{bmatrix}
            \frac{\sigma_u^2}{\variance(x) + \sigma_u^2} & 0 \\
            0 & \frac{h_0 \sigma_u^2}{\variance(Gx) + h_0\sigma_u^2}
        \end{bmatrix} \\
        &= \begin{bmatrix}
            \frac{\variance(x)}{\variance(x) + \sigma_u^2} & 0 \\
            0 & \frac{\variance(Gx)}{\variance(Gx) + h_0\sigma_u^2}
        \end{bmatrix}.
    \end{split}
\end{equation*}
Similarly, $\mathbf{s}_{21}=\mathbf{0}$ implies $\frac{1}{s_{22}} \mathbf{s}_{21}\mathbf{K}^{-1}\boldsymbol{\Sigma}_{11} = \mathbf{0}$.

\subsection{Proof of Theorem~\ref{theorem:identification}}
Let $\Theta_1 := \{ \boldsymbol{\theta}_1 : |\beta| < 1, \beta \gamma + \delta \neq 0\}$ denote the admissible parameter space. Suppose that $\boldsymbol{\theta}_1, \boldsymbol{\theta}'_1 \in \Theta_1$ generate the same conditional mean restriction, that is,
\begin{equation*}
    \begin{split}
        \expectation[\boldy_0 \mid \bG_0] &= (\bI - \beta \bG_0)^{-1} [\alpha \mathbf{1} + (\gamma \bI + \delta \bG_0) \bm], \\
            \expectation[\boldy_0 \mid \bG_0] &= (\bI - \beta' \bG_0)^{-1} [\alpha' \mathbf{1} + (\gamma' \bI + \delta' \bG_0) \bm], 
    \end{split}
\end{equation*}
or equivalently,
\begin{equation*}
    (\bI - \beta \bG_0)^{-1} [\alpha \mathbf{1} + (\gamma \bI + \delta \bG_0) \bm] = (\bI - \beta' \bG_0)^{-1} [\alpha' \mathbf{1} + (\gamma' \bI + \delta' \bG_0) \bm].
\end{equation*}
Left-multiplying both sides by $(\bI-\beta \bG_0)(\bI - \beta' \bG_0)$ yields
\begin{equation*}
    (\bI - \beta' \bG_0) [\alpha \mathbf{1} + (\gamma \bI + \delta \bG_0) \bm] = (\bI - \beta \bG_0) [\alpha' \mathbf{1} + (\gamma' \bI + \delta' \bG_0) \bm],
\end{equation*}
where we made use of the push-through identity $\bG_0 (\bI - \beta \bG_0)^{-1} = (\bI - \beta \bG_0)^{-1} \bG_0$. Expanding and rearranging gives $a_1 \mathbf{1} + a_2 \bm + a_3 \bG_0\bm + a_4\bG_0^2\bm = \mathbf{0}$, with coefficients
\[
\begin{alignedat}{2}
a_1&=\alpha(1-\beta')-\alpha'(1-\beta),&\qquad a_2&=\gamma-\gamma',\\
a_3&=\delta-\beta'\gamma+\beta\gamma'-\delta',&\qquad a_4&=-\beta'\delta+\beta\delta'.
\end{alignedat}
\]
Collect these coefficients in the vector $A_1(\boldsymbol{\theta}_1, \boldsymbol{\theta}'_1) := (a_1, a_2, a_3, a_4)$ and define $\mathcal{A}_1 := \{\mathbf{c} \in \mathbb{R}^4 : c_1 \mathbf{1} + c_2 \bm + c_3 \bG_0\bm + c_4\bG_0^2\bm = \mathbf{0}\}$. Since all steps above are equivalences, $\boldsymbol{\theta}_1$ and $\boldsymbol{\theta}'_1$ generate the same mean restriction if and only if $A_1(\boldsymbol{\theta}_1, \boldsymbol{\theta}'_1) \in \mathcal{A}_1$.

\paragraph{Sufficiency.} Suppose that $\mathbf{1}$, $\bm$, $\bG_0 \bm$, $\bG_0^2 \bm$ are linearly independent, so that $\mathcal{A}_1 = \{\mathbf{0}\}$ and hence $A_1(\boldsymbol{\theta}_1, \boldsymbol{\theta}'_1) = \mathbf{0}$. As in the proof of Proposition~1 in \citeA{bramoulle09}, $a_2 = a_3 = a_4 = 0$ together with $\beta \gamma + \delta \neq 0$ implies $(\beta, \gamma, \delta) = (\beta', \gamma', \delta')$. From $a_1 = 0$ and $|\beta| < 1$ it then follows that $\alpha = \alpha'$. Hence, $\boldsymbol{\theta}_1$ is identified.

\paragraph{Necessity.} Suppose instead that $\mathbf{1}$, $\bm$, $\bG_0 \bm$, $\bG_0^2 \bm$ are linearly dependent, so that $\mathcal{A}_1$ contains some $\mathbf{c} \neq \mathbf{0}$. Because $\mathcal{A}_1$ is a linear subspace, $t\mathbf{c} \in \mathcal{A}_1$ for every $t \in \mathbb{R}$. Fix $\boldsymbol{\theta}_1 \in \Theta_1$ and consider, for each $t$, the equation $A_1(\boldsymbol{\theta}_1, \boldsymbol{\theta}'_1) = t\mathbf{c}$ in $\boldsymbol{\theta}'_1$. As $A_1(\boldsymbol{\theta}_1, \cdot)$ is affine, this equation can be written as the linear system
\begin{equation*}
    \begin{bmatrix}
        -(1-\beta) & -\alpha & 0 & 0 \\
        0 & 0 & -1 & 0 \\
        0 & -\gamma & \beta  & - 1 \\
        0 & -\delta & 0 & \beta
    \end{bmatrix}
    \begin{bmatrix}
        \alpha' \\
        \beta' \\
        \gamma' \\
        \delta'
    \end{bmatrix} =
    \begin{bmatrix}
        tc_1 - \alpha \\
        tc_2 - \gamma \\
        tc_3 - \delta \\
        tc_4
    \end{bmatrix},
\end{equation*}
whose coefficient matrix has determinant $(1-\beta)(\beta\gamma + \delta)$, which is nonzero because $\boldsymbol{\theta}_1 \in \Theta_1$. For each $t$, the system therefore has a unique solution $\boldsymbol{\theta}'_1(t)$, which is affine, and in particular continuous, in $t$. This solution path has three properties. First, $\boldsymbol{\theta}'_1(0) = \boldsymbol{\theta}_1$, since $A_1(\boldsymbol{\theta}_1, \boldsymbol{\theta}_1) = \mathbf{0}$ and the solution at $t = 0$ is unique. Second, $\boldsymbol{\theta}'_1(t) \neq \boldsymbol{\theta}_1$ for all $t \neq 0$, since $A_1(\boldsymbol{\theta}_1, \boldsymbol{\theta}'_1(t)) = t\mathbf{c} \neq \mathbf{0} = A_1(\boldsymbol{\theta}_1, \boldsymbol{\theta}_1)$. Third, because $\Theta_1$ is open and $t \mapsto \boldsymbol{\theta}'_1(t)$ is continuous with $\boldsymbol{\theta}'_1(0) = \boldsymbol{\theta}_1 \in \Theta_1$, there exists $\varepsilon > 0$ such that $\boldsymbol{\theta}'_1(t) \in \Theta_1$ for all $|t| < \varepsilon$. Hence, for any $0 < |t| < \varepsilon$, the vector $\boldsymbol{\theta}'_1(t) \in \Theta_1$ is distinct from $\boldsymbol{\theta}_1$ yet generates the same mean restriction, so $\boldsymbol{\theta}_1$ is not identified.

\subsection{Proof of Proposition~\ref{proposition:genericity}}
We prove Proposition~\ref{proposition:genericity} through a series of lemmas. Let $\mathbf{T}({\check{\bm}}) := [\mathbf{1}, \check{\bm}, \check{\bG}_0 \check{\bm}, \check{\bG}_0^2 \check{\bm}]$ for $\check{\bm} \in \mathbb{C}^{\check{N}_0}$. 

\begin{lemma}\label{lem:genericity1}
    If $\check{\bG}_0$ has at least four distinct eigenvalues, there exists an $\check{\bm}^* \in \mathbb{C}^{\check{N}_0}$ such that $\mathrm{rank}~\mathbf{T}(\check{\bm}^*) = 4$. 
\end{lemma}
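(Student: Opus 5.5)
The approach is to build $\check{\bm}^*$ explicitly from eigenvectors of $\check{\bG}_0$, then use the linear independence of eigenvectors belonging to distinct eigenvalues together with a Vandermonde-type argument.

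\textbf{Step 1: $\mathbf{1}$ is an eigenvector.} We first observe that $\check{\bG}_0$ is row-stochastic, so $\check{\bG}_0\mathbf{1}=\mathbf{1}$ and $1$ is an eigenvalue with eigenvector $\mathbf{1}$. Indeed, every row of $\bG_0$ sums to one. $\check{\bG}_0$ keeps one such row per unique position and only sums columns over symmetric positions, which leaves each row sum equal to one.

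\textbf{Step 2: choice of $\check{\bm}^*$.} Since $\check{\bG}_0$ has at least four distinct eigenvalues, at least three of them differ from $1$. Call them $\mu_1,\mu_2,\mu_3$; they are pairwise distinct and possibly complex, which is why the lemma works over $\mathbb{C}^{\check{N}_0}$. Let $\mathbf{v}_1,\mathbf{v}_2,\mathbf{v}_3$ be corresponding eigenvectors and set $\check{\bm}^* := \mathbf{v}_1+\mathbf{v}_2+\mathbf{v}_3$. Then $\check{\bG}_0^k\check{\bm}^* = \sum_j \mu_j^k \mathbf{v}_j$ for $k=1,2$.

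\textbf{Step 3: full rank of $\mathbf{T}(\check{\bm}^*)$.} Suppose $c_1\mathbf{1}+c_2\check{\bm}^*+c_3\check{\bG}_0\check{\bm}^*+c_4\check{\bG}_0^2\check{\bm}^*=\mathbf{0}$. Substituting the expansions from Step 2 gives
\begin{equation*}
c_1\mathbf{1}+\sum_{j=1}^3 p(\mu_j)\,\mathbf{v}_j=\mathbf{0},\qquad p(\lambda):=c_2+c_3\lambda+c_4\lambda^2 .
\end{equation*}
The vectors $\mathbf{1},\mathbf{v}_1,\mathbf{v}_2,\mathbf{v}_3$ are eigenvectors for the four distinct eigenvalues $1,\mu_1,\mu_2,\mu_3$, so they are linearly independent. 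Hence $c_1=0$ and $p(\mu_j)=0$ for $j=1,2,3$.

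A polynomial of degree at most two with three distinct roots is identically zero. Equivalently, the $3\times 3$ Vandermonde matrix in $\mu_1,\mu_2,\mu_3$ is nonsingular. Therefore $c_2=c_3=c_4=0$, and $\mathrm{rank}\,\mathbf{T}(\check{\bm}^*)=4$.

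\textbf{Main point of care.} The only delicate step is ensuring that $\mathbf{1}$ enters as a separate eigendirection. This is why the three eigenvalues must be chosen different from $1$, and why Step 1 is needed to confirm that the row-stochastic property survives the passage to $\check{\bG}_0$. The rest is routine linear algebra.
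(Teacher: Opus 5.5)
Your proof is correct and follows essentially the same route as the paper. Both take $\check{\bm}^*$ to be the sum of eigenvectors for three eigenvalues other than $1$, use $\mathbf{1}$ as the eigenvector for eigenvalue $1$, and conclude from the linear independence of eigenvectors for distinct eigenvalues together with the nonsingular $3\times 3$ Vandermonde matrix. Your explicit checks that $\check{\bG}_0$ remains row-stochastic and that the three chosen eigenvalues differ from $1$ make precise a point the paper leaves implicit when it sets $\lambda_1 = 1$.
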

\begin{proof}
    Consider four distinct eigenvalues $\lambda_1, \lambda_2, \lambda_3, \lambda_4$ of $\check{\bG}_0$, and denote their corresponding eigenvectors by 
    $\mathbf{v}_1, \mathbf{v}_2, \mathbf{v}_3, \mathbf{v}_4 \in \mathbb{C}^{\check{N}_0}$. Since $\check{\bG}_0 \mathbf{1} = \mathbf{1}$, $\lambda_1 = 1$ is an eigenvalue with eigenvector $\mathbf{1}$. Because the eigenvalues  are pairwise distinct, their corresponding eigenvectors are linearly independent over $\mathbb{C}$.
    
    Define $\mathcal{V} := \mathrm{span}(\mathbf{v}_2, \mathbf{v}_3, \mathbf{v}_4)$ and pick $\check{\bm}^* = \mathbf{v}_2 + \mathbf{v}_3 + \mathbf{v}_4 \in \mathcal{V}$. Because $(\mathbf{v}_2, \mathbf{v}_3, \mathbf{v}_4)$ forms a basis of $\mathcal{V}$, we can rewrite the $\check{N}_0 \times 3$ matrix $[\check{\bm}^*, \check{\bG}_0\check{\bm}^*, \check{\bG}_0^2\check{\bm}^* ]$ in terms of this basis. In coordinates relative to $(\mathbf{v}_2, \mathbf{v}_3, \mathbf{v}_4)$ we have:\footnote{Note that $\check{\bG}_0\check{\bm}^* = \lambda_2 \mathbf{v}_2 + \lambda_3 \mathbf{v}_3 + \lambda_4 \mathbf{v}_4$ and $\check{\bG}_0^2 \check{\bm}^* = \lambda_2^2 \mathbf{v}_2 + \lambda_3^2 \mathbf{v}_3 + \lambda_4^2 \mathbf{v}_4$.}
    \begin{equation*}
        V := \begin{bmatrix}
            1 & \lambda_2 & \lambda_2^2 \\ 
            1 & \lambda_3 & \lambda_3^2 \\ 
            1 & \lambda_4 & \lambda_4^2 \\ 
        \end{bmatrix},
    \end{equation*}
    which is a Vandermonde matrix with $\det(V) = (\lambda_3 - \lambda_2)(\lambda_4 - \lambda_2)(\lambda_4 - \lambda_3) \neq 0$, since the eigenvalues are pairwise distinct. Therefore $\check{\bm}^*, \check{\bG}_0 \check{\bm}^*, \check{\bG}_0^2 \check{\bm}^*$ are linearly independent over $\mathbb{C}$ and $\mathrm{span}(\check{\bm}^*, \check{\bG}_0 \check{\bm}^*, \check{\bG}_0^2 \check{\bm}^*) = \mathcal{V}$. Since $\mathbf{v}_1 = \mathbf{1} \notin \mathcal{V}$ we have that $\mathrm{rank~} \mathbf{T}(\check{\bm}^*) = 4$. 
\end{proof}

\begin{lemma}\label{lem:genericity2}
    If there exists at least one $\check{\bm}^* \in \mathbb{C}^{\check{N}_0}$ for which $\mathrm{rank}~\mathbf{T}(\check{\bm}^*) = 4$ then the set of $\check{\bm} \in \mathbb{R}^{\check{N}_0}$ for which $\mathrm{rank}~\mathbf{T}(\check{\bm}) < 4$ has Lebesgue measure zero.
\end{lemma}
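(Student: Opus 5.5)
The plan is the standard argument that the zero set of a nonzero polynomial has Lebesgue measure zero. Rank deficiency of $\mathbf{T}(\check{\bm})$, an $\check{N}_0\times 4$ matrix, is equivalent to the vanishing of all its $4\times 4$ minors. Each entry of $\mathbf{T}(\check{\bm})$ is a polynomial in the coordinates of $\check{\bm}$. The first column is constant, the second is linear, and the third and fourth are linear in $\check{\bm}$ since $\check{\bG}_0$ is fixed. Hence every $4\times 4$ minor $p_I(\check{\bm})$, indexed by a choice $I$ of four rows, is a polynomial with real coefficients in $\check{N}_0$ variables. In fact it is homogeneous of degree three, because it is multilinear in columns two through four.

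First, I use the hypothesis. Since $\mathrm{rank}~\mathbf{T}(\check{\bm}^*)=4$ for some complex $\check{\bm}^*$, there is a row set $I^*$ with $p_{I^*}(\check{\bm}^*)\neq 0$. So $p_{I^*}$ is not the zero polynomial over $\mathbb{C}$. Its coefficients are real because $\check{\bG}_0$ is real, so it is also a nonzero polynomial as a formal object over $\mathbb{R}$. A polynomial vanishing identically on $\mathbb{R}^{\check{N}_0}$ has all coefficients zero, which would force it to vanish on $\mathbb{C}^{\check{N}_0}$ too. This passage from complex to real evaluation is the only subtle point, and it is handled by working with formal coefficients.

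Second, I observe that
\[
\{\check{\bm}\in\mathbb{R}^{\check{N}_0}:\mathrm{rank}~\mathbf{T}(\check{\bm})<4\}\subseteq Z(p_{I^*}):=\{\check{\bm}\in\mathbb{R}^{\check{N}_0}: p_{I^*}(\check{\bm})=0\}.
\]
It then suffices to show that the real zero set of a nonzero real polynomial has Lebesgue measure zero.

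Third, I prove that last fact by induction on the number of variables $n$, using Fubini. For $n=1$, a nonzero polynomial has finitely many roots. For $n>1$, write $p(\check{\bm})=\sum_{k=0}^{K} q_k(m_1,\dots,m_{n-1})\,m_n^k$ with some $q_K\neq 0$. By induction, $q_K$ vanishes only on a null set $E\subset\mathbb{R}^{n-1}$. For each $(m_1,\dots,m_{n-1})\notin E$, the section of $Z(p)$ is the root set of a nonzero univariate polynomial, hence finite and of measure zero. Integrating the section measures over $\mathbb{R}^{n-1}$ by Fubini–Tonelli (the set $Z(p)$ is closed, hence measurable) gives $\lambda(Z(p))=0$. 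This completes the argument. The main obstacle is merely bookkeeping: making sure the complex witness yields a nonzero real-coefficient polynomial, which the coefficient argument above settles.
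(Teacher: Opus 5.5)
Your proposal is correct and follows the paper's proof. Both pick a $4\times 4$ minor that is nonzero at the complex witness, note that it is a real-coefficient polynomial not vanishing identically on $\mathbb{R}^{\check{N}_0}$, and contain the rank-deficient set in its zero set. The only difference is that you prove the measure-zero property of that zero set by Fubini induction, a fact the paper simply cites as well known.
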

\begin{proof}
    Let $\Delta_1(\check{\bm}), \dots, \Delta_K(\check{\bm})$ be all $4 \times 4$ minors of $\mathbf{T}(\check{\bm})$.\footnote{That is, the determinants of every combination of four rows of $\mathbf{T}(\check{\bm})$.} Then $\mathrm{rank}~\mathbf{T}(\check{\bm}) < 4$ if and only if $\Delta_1(\check{\bm}) = \dots = \Delta_K(\check{\bm}) = 0$. Since $\mathrm{rank}~\mathbf{T}(\check{\bm}^*) = 4$, there exists a minor $\Delta_k$ such that $\Delta_k(\check{\bm}^*) \neq 0$. Because every entry of $\mathbf{T}(\check{\bm})$ is linear in $\check{\bm}$ with real coefficients, $\Delta_k$ is a polynomial in $(\check{m}_1, \dots, \check{m}_{\check{N}_0})$ with real coefficients. Since $\Delta_k(\check{\bm}^*) \neq 0$, not all of its coefficients are zero, and hence $\Delta_k$ does not vanish identically on $\mathbb{R}^{\check{N}_0}$. It is well known that the zero set of such a polynomial has Lebesgue measure zero. Since a rank deficiency implies that every minor vanishes, the set of $\check{\bm} \in \mathbb{R}^{\check{N}_0}$ with $\mathrm{rank}~\mathbf{T}(\check{\bm}) < 4$ is contained in the zero set of $\Delta_k$, and the result follows.
\end{proof}

Lemmas~\ref{lem:genericity1} and \ref{lem:genericity2} together prove the proposition, since $\mathrm{rank}~\mathbf{T}(\check{\bm}) = 4$ for $\check{\bm} \in \mathbb{R}^{\check{N}_0}$ is precisely the identification condition of Theorem~\ref{theorem:identification}.

\subsection{Proof of Theorem~\ref{thm:identification2}}
Let $\Theta_2 := \{ \boldsymbol{\theta}_2 : |\beta| < 1, \beta \gamma + \delta \neq 0\}$ denote the admissible parameter space. Suppose that $\boldsymbol{\theta}_2, \boldsymbol{\theta}'_2 \in \Theta_2$ generate the same conditional covariance restriction, that is,
\begin{equation*}
    \begin{split}
        \covariance(\boldy_0, \bnx_0 \mid \bG_0) &= (\bI - \beta \bG_0)^{-1} (\gamma \bI + \delta \bG_0) (\bC - \sigma_u^2 \bI), \\
        \covariance(\boldy_0, \bnx_0 \mid \bG_0) &= (\bI - \beta' \bG_0)^{-1} (\gamma' \bI + \delta' \bG_0) (\bC - {\sigma_u^2}' \bI),
    \end{split}
\end{equation*}
or equivalently, 
\begin{equation*}
    (\bI - \beta \bG_0)^{-1} (\gamma \bI + \delta \bG_0) (\bC - \sigma_u^2 \bI) = (\bI - \beta' \bG_0)^{-1} (\gamma' \bI + \delta' \bG_0) (\bC - {\sigma_u^2}' \bI).
\end{equation*}
Left-multiplying both sides by $(\bI-\beta \bG_0)(\bI - \beta' \bG_0)$ gives, using the push-through identity as in the proof of Theorem~\ref{theorem:identification},
\begin{equation*}
    (\bI - \beta' \bG_0) (\gamma \bI + \delta \bG_0) (\bC - \sigma_u^2 \bI) = (\bI - \beta \bG_0) (\gamma' \bI + \delta' \bG_0) (\bC - {\sigma_u^2}' \bI).
\end{equation*}
Expanding and rearranging gives $a_1 \bI + a_2 \bG_0 + a_3 \bG_0^2 + a_4 \bC + a_5 \bG_0 \bC + a_6 \bG_0^2 \bC = \mathbf{0}$, with coefficients
\[
\begin{alignedat}{2}
a_1&=-\gamma\sigma_u^2+\gamma'{\sigma_u^2}', &\qquad
a_2&=(\beta'\gamma-\delta)\sigma_u^2-(\beta\gamma'-\delta'){\sigma_u^2}',\\
a_3&=\beta'\delta\,\sigma_u^2-\beta\delta'\,{\sigma_u^2}', &\qquad
a_4&=\gamma-\gamma',\\
a_5&=-(\beta'\gamma-\delta)+\beta\gamma'-\delta', &\qquad
a_6&=-\beta'\delta+\beta\delta'.
\end{alignedat}
\]
Collect these coefficients in the vector $A_2(\boldsymbol{\theta}_2, \boldsymbol{\theta}'_2) := (a_1, a_2, a_3, a_4, a_5, a_6)$ and define $\mathcal{A}_2 := \{\mathbf{c} \in \mathbb{R}^6 : c_1 \bI + c_2 \bG_0 + c_3 \bG_0^2 + c_4 \bC + c_5 \bG_0 \bC + c_6 \bG_0^2 \bC = \mathbf{0}\}$. Since all steps above are equivalences, $\boldsymbol{\theta}_2$ and $\boldsymbol{\theta}'_2$ generate the same covariance restriction if and only if $A_2(\boldsymbol{\theta}_2, \boldsymbol{\theta}'_2) \in \mathcal{A}_2$.

Suppose that $\bI$, $\bG_0$, $\bG_0^2$, $\bC$, $\bG_0\bC$, $\bG_0^2 \bC$ are linearly independent, so that $\mathcal{A}_2 = \{\mathbf{0}\}$ and hence $A_2(\boldsymbol{\theta}_2, \boldsymbol{\theta}'_2) = \mathbf{0}$.  As in the proof of Proposition~1 in \citeA{bramoulle09}, $a_4 = a_5 = a_6 = 0$ together with $\beta \gamma + \delta \neq 0$ implies $(\beta, \gamma, \delta) = (\beta', \gamma', \delta')$. Substituting into $a_1$ and $a_2$ gives
\begin{equation*}
    a_1 = -\gamma\,(\sigma_u^2 - {\sigma_u^2}') = 0,
    \qquad
    a_2 = (\beta\gamma - \delta)\,(\sigma_u^2 - {\sigma_u^2}') = 0.
\end{equation*}
Suppose that $\sigma_u^2 \neq {\sigma_u^2}'$. Then $\gamma = 0$ and
$\beta\gamma - \delta = 0$, hence $\delta = 0$ and
$\beta\gamma + \delta = 0$, contradicting the maintained assumption.
Therefore $\sigma_u^2 = {\sigma_u^2}'$. Hence, $\boldsymbol{\theta}_2$ is identified.

\subsection{Proof of Corollary~\ref{cor:GLS}}
Since $\mathbf{v}_{1,s}(\boldsymbol{\theta}_1, \bG)$ is linear in $\boldsymbol{\theta}_1$, we can write $\mathbf{v}_{1,s}(\boldsymbol{\theta}_1, \bG) = \boldy_s - \mathbf{X}_s \boldsymbol{\theta}_1$, where $\mathbf{X}_s := \begin{bmatrix} \mathbf{1} & \bnx_s & \bG\widetilde{\bx}_s & \bG\boldy_s \end{bmatrix}$. The GMM objective in Proposition~\ref{prop:GMMfirst} is a convex quadratic in $\boldsymbol{\theta}_1$, so its minimizer is characterized by the first-order condition
\begin{equation*}
    \begin{split}
        \left(\frac{1}{S} \sum_{s}
            \frac{\partial \mathbf{v}_{1,s}(\boldsymbol{\theta}_1, \bG)}{\partial \boldsymbol{\theta}_1} 
        \right)^\intercal \boldsymbol{\Omega}_1 \left(\frac{1}{S} \sum_{s} 
            \mathbf{v}_{1,s}(\boldsymbol{\theta}_1, \bG) 
        \right) = \mathbf{0}.
    \end{split}
\end{equation*}
Substituting $\frac{\partial \mathbf{v}_{1,s}(\boldsymbol{\theta}_1, \bG)}{\partial \boldsymbol{\theta}_1} = -\mathbf{X}_s$ and rearranging, using $\overline{\mathbf{X}} = \frac{1}{S}\sum_s \mathbf{X}_s$ and $\overline{\boldy} = \frac{1}{S}\sum_s \boldy_s$, yields
\begin{equation*}
    \widehat{\boldsymbol{\theta}}^{GMM}_1 = \left(\overline{\mathbf{X}}^\intercal \boldsymbol{\Omega}_1 \overline{\mathbf{X}}\right)^{-1}\overline{\mathbf{X}}^\intercal \boldsymbol{\Omega}_1 \overline{\boldy},
\end{equation*}
provided $\overline{\mathbf{X}}^\intercal \boldsymbol{\Omega}_1 \overline{\mathbf{X}}$ is invertible. The latter is precisely the condition for the minimizer in Proposition~\ref{prop:GMMfirst} to be unique.

\subsection{Proof of Proposition~\ref{prop:instrumentsdiffmatrix}}
\paragraph{Instrument class (i): functions of the network.}  
Fix an arbitrary $\boldsymbol{\theta}_1$ and $\bG$. Since $\mathbf{w}(\bG)$ is nonrandom conditional on $\bG$, we have that
\begin{equation*}
    \begin{split}
    \expectation\!\left[\mathbf{z}_{s}^\intercal
    (\boldy_s - \mathbf{X}_s\boldsymbol{\theta}_1) \mid \bG \right]
    &= \expectation\!\left[\mathbf{w}(\bG)^\intercal\,
    \mathbf{v}_{1,s}(\boldsymbol{\theta}_1, \bG) \mid \bG \right] \\
    &= \mathbf{w}(\bG)^\intercal\,
    \expectation\!\left[\mathbf{v}_{1,s}(\boldsymbol{\theta}_1, \bG)
    \mid \bG \right].
    \end{split}
\end{equation*}
By the law of iterated expectations, averaging over the distribution of $\bG$ gives
\begin{equation*}
    \expectation\!\left[\mathbf{z}_{s}^\intercal
    (\boldy_s - \mathbf{X}_s\boldsymbol{\theta}_1) \right]
    = \expectation\!\left[\mathbf{w}(\bG)^\intercal\,
    \expectation\!\left[\mathbf{v}_{1,s}(\boldsymbol{\theta}_1, \bG)
    \mid \bG \right]\right].
\end{equation*}
The moment condition is therefore a linear combination of the conditional mean restrictions, with weights $\mathbf{w}(\bG)$ that are fixed given the network. Validity follows by evaluating at
$\boldsymbol{\theta}_1 = \boldsymbol{\theta}_1^{*}$: under
Assumption~\ref{ass:measurementerror1}, we have that
$\expectation[\mathbf{v}_{1,s}(\boldsymbol{\theta}_1^{*}, \bG) \mid
\bG] = \mathbf{0}$ for every $\bG$. Hence $\expectation\!\left[\mathbf{z}_{s}^\intercal
    (\boldy_s - \mathbf{X}_s\boldsymbol{\theta}^*_1) \right] = 0$.

\paragraph{Instrument class (ii): network-weighted characteristics.}
As in the previous case, fix an arbitrary $\boldsymbol{\theta}_1$ and $\bG$. Since $\mathbf{W}(\bG)$ is nonrandom conditional on $\bG$, we have that\footnote{Let $\mathbf{a}$ and $\mathbf{b}$ be random vectors and let $\bW$ be fixed given $\bG$. Decomposing both vectors into conditional means and deviations, $\mathbf{a} = \expectation[\mathbf{a} \mid \bG] + \dot{\mathbf{a}}$ and $\mathbf{b} = \expectation[\mathbf{b} \mid \bG] + \dot{\mathbf{b}}$,
we have 
\begin{equation*}
    \begin{split}
        \expectation[\mathbf{a}^\intercal \bW^\intercal \mathbf{b} \mid \bG] &=  \expectation[\mathbf{a} \mid \bG]^\intercal \bW^\intercal \expectation[\mathbf{b} \mid \bG] + \expectation[\dot{\mathbf{a}}^\intercal \bW^\intercal \dot{\mathbf{b}} \mid \bG] \\
        &=  \expectation[\mathbf{a} \mid \bG]^\intercal \bW^\intercal \expectation[\mathbf{b} \mid \bG] + \trace\!\bigl(\bW^\intercal\, \covariance(\mathbf{b}, \mathbf{a} \mid \bG)\bigr).
    \end{split}
\end{equation*}}
\begin{equation*}
    \begin{split}
    \expectation\!\left[\mathbf{z}_{s}^\intercal
    (\boldy_s - \mathbf{X}_s\boldsymbol{\theta}_1) \mid \bG \right] &= \expectation[\bnx_s^\intercal \bW^\intercal \mathbf{v}_{1,s}(\boldsymbol{\theta}_1, \bG) \mid \bG] \\
    &= \expectation[\bnx_s \mid \bG]^\intercal \bW^\intercal\, \expectation[\mathbf{v}_{1,s}(\boldsymbol{\theta}_1, \bG) \mid \bG] \\
    &\quad + \trace\bigl(\bW^\intercal\, \covariance(\mathbf{v}_{1,s}(\boldsymbol{\theta}_1, \bG), \bnx_s \mid \bG)\bigr).
    \end{split}
\end{equation*}
Using the conditional moment function $\mathbf{V}_{2,s}(\boldsymbol{\theta}_2, \bG)$, where $\boldsymbol{\theta}_2 := (\beta, \gamma, \delta, \sigma_u^2)$ shares $(\beta, \gamma,\delta)$ with $\boldsymbol{\theta}_1$ and contains the true $\sigma_u^2$, we obtain
\begin{equation*}
    \begin{split}
    \covariance(\mathbf{v}_{1,s}(\boldsymbol{\theta}_1, \bG), \bnx_s \mid \bG)
    &= \expectation[\mathbf{V}_{2,s}(\boldsymbol{\theta}_2, \bG) \mid \bG]
    - \sigma_u^2(\gamma\bI + \delta\bG).
    \end{split}
\end{equation*}
Substituting and applying the law of iterated expectations gives
\begin{equation*}
    \begin{split}
        \expectation[\mathbf{z}_s^\intercal \mathbf{v}_{1,s}(\boldsymbol{\theta}_1, \bG)]
    &= \expectation\bigl[\expectation[\bnx_s \mid \bG]^\intercal \bW^\intercal\,
      \expectation[\mathbf{v}_{1,s}(\boldsymbol{\theta}_1, \bG) \mid \bG]\bigr] \\
    &\quad + \expectation\bigl[\trace\bigl(\bW^\intercal\,
      \expectation[\mathbf{V}_{2,s}(\boldsymbol{\theta}_2, \bG) \mid \bG]\bigr)\bigr] \\
    &\quad - \sigma_u^2\bigl(\gamma\,\expectation[\trace(\bW)]
    + \delta\,\expectation[\trace(\bW\bG^\intercal)]\bigr).
    \end{split}
\end{equation*}
Under the trace conditions of the proposition, the final term vanishes for every $(\gamma, \delta, \sigma_u^2)$. The moment condition is therefore a linear combination of the conditional mean and covariance restrictions, with weights $\bW\,\expectation[\bnx_s \mid \bG]$ and $\bW$, respectively, that are fixed given the network. Validity follows by evaluating at
$\boldsymbol{\theta}_1 = \boldsymbol{\theta}_1^{*}$: under
Assumptions~\ref{ass:measurementerror1} and \ref{ass:measurementerror2},
we have that  $\expectation[\mathbf{v}_{1,s}(\boldsymbol{\theta}_1^{*}, \bG) \mid
\bG] = \mathbf{0}$ and $\expectation[\mathbf{V}_{2,s}(\boldsymbol{\theta}_2^{*}, \bG) \mid \bG] = \mathbf{0}$ for every $\bG$. Hence $\expectation\!\left[\mathbf{z}_{s}^\intercal
    (\boldy_s - \mathbf{X}_s\boldsymbol{\theta}^*_1) \right] = 0$.

\subsection{Proofs of Corollaries~\ref{cor:lagged} and \ref{cor:features}}

\paragraph{Corollary~\ref{cor:lagged}.}
Set $\bW(\bG_s) = \bA_s^{(\tau)}$ with $\tau \geq 2$. First, $\trace(\bA_s^{(\tau)}) = 0$, since no individual is at distance $\tau \geq 2$ from itself. Second, $\trace(\bA_s^{(\tau)}\bG_s^\intercal) = \sum_{i,j} A^{(\tau)}_{s,ij}\, g_{s,ij} = 0$, since $A^{(\tau)}_{s,ij} = 1$ implies that $i$ and $j$ are at distance $\tau \geq 2$ and hence unlinked, so $g_{s,ij} = 0$. Both traces vanish for every network structure and therefore in expectation. The result follows from Part 2 of Proposition~\ref{prop:instrumentsdiffmatrix}.

\paragraph{Corollary~\ref{cor:features}.}
For the instrument $\mathbf{z}_{s} = \mathbf{t}_s$, set $\mathbf{w}(\bG_s) = \mathbf{t}_s$. Validity follows directly from Part 1 of Proposition~\ref{prop:instrumentsdiffmatrix}.

For the instrument $\mathbf{z}_{s} = (\mathbf{t}_s - \expectation[t]\mathbf{1})\circ\bnx_s$, set $\bW(\bG_s) = \diag(\mathbf{t}_s - \expectation[t]\mathbf{1})$. Since $\bW$ is diagonal and $\bG_s$ has zero diagonal, $\trace(\bW\bG_s^\intercal) = 0$ for every network structure. Moreover,
\begin{equation*}
    \expectation[\trace(\bW)] = \expectation\Bigl[\textstyle\sum_i(t_{si} - \expectation[t])\Bigr] = \expectation\Bigl[\textstyle\sum_i t_{si}\Bigr] - \expectation[N_s]\,\expectation[t] = 0,
\end{equation*}
since $\expectation[t]$ denotes the mean of the statistic across all individuals in the population of networks. The result follows from Part 2 of Proposition~\ref{prop:instrumentsdiffmatrix}.

For the instrument $\mathbf{z}_{s} = \mathbf{d}_s\circ(\mathbf{t}_s - \expectation[t]\mathbf{1}) \circ\bG_s\bnx_s$, set $\bW(\bG_s) = \diag\bigl(\mathbf{d}_s\circ(\mathbf{t}_s - \expectation[t]\mathbf{1})\bigr)\bG_s$. First, $\trace(\bW) = 0$ for every network structure, since $\bG_s$ has zero diagonal. Second, using $(\bG_s\bG_s^\intercal)_{ii} = 1/d_{si}$,
\begin{equation*}
    \expectation[\trace(\bW\bG_s^\intercal)]
    = \expectation\Bigl[\textstyle\sum_i d_{si}(t_{si} - \expectation[t])\,
    (\bG_s\bG_s^\intercal)_{ii}\Bigr]
    = \expectation\Bigl[\textstyle\sum_i (t_{si} - \expectation[t])\Bigr] = 0,
\end{equation*}
where the last equality follows as above. The result again follows from Part 2 of Proposition~\ref{prop:instrumentsdiffmatrix}.

\section{Additional results}
\subsection{Identification under moment aggregation}\label{app:aggregation}
Using Equations~\eqref{eq:restmean2} and \eqref{eq:restcovariance2}, the conditional moment functions are affine in the parameters:
\begin{equation*}
    \expectation[\mathbf{v}_{1,s}(\boldsymbol{\theta}_1, \bG) \mid \bG] = \underbrace{\begin{bmatrix}
        \mathbf{1} & \expectation[\bnx_s \mid \bG] & \bG \expectation[\bnx_s \mid \bG] & \bG\expectation[\boldy_s \mid \bG] 
    \end{bmatrix}}_{=: \mathbf{X}_1(\bG)}(\boldsymbol{\theta}_1^* - \boldsymbol{\theta}_1),
\end{equation*}
and 
\begin{equation*}
    \expectation[\mathbf{v}_{2,s}(\boldsymbol{\theta}_2, \bG) \mid \bG] = \underbrace{\begin{bmatrix}
        \vect(\variance(\bnx_s \mid \bG))^\intercal \\
        \vect(\bG\variance(\bnx_s \mid \bG))^\intercal \\
        \vect(\bG \covariance(\boldy_s, \bnx_s \mid \bG))^\intercal \\
        \vect(-\bI)^\intercal \\
        \vect(-\bG)^\intercal
    \end{bmatrix}^\intercal}_{=: \mathbf{X}_2(\bG)} (\boldsymbol{\psi}(\boldsymbol{\theta}_2^*) - \boldsymbol{\psi}(\boldsymbol{\theta}_2)),
\end{equation*}
where $\boldsymbol{\psi}(\boldsymbol{\theta}_2) := (\gamma, \delta, \beta, \gamma \sigma_u^2, \delta \sigma_u^2)$ is a reparameterization of $\boldsymbol{\theta}_2$. Because these functions are linear in $\boldsymbol{\theta}_1$ and $\boldsymbol{\psi}$, respectively, so are their aggregates:
    \begin{equation*}
        \begin{split}
            \mathbf{h}_1(\boldsymbol{\theta}_1) &= \expectation\!\big[\mathbf{H}_1(\bG)^\intercal \expectation[\mathbf{v}_{1,s}(\boldsymbol{\theta}_1, \bG) \mid \bG]\big] = \mathbf{D}_1(\boldsymbol{\theta}_1^* - \boldsymbol{\theta}_1), \\
            \mathbf{h}_2(\boldsymbol{\theta}_2) &= \expectation\!\big[\mathbf{H}_2(\bG)^\intercal \expectation[\mathbf{v}_{2,s}(\boldsymbol{\theta}_2, \bG) \mid \bG]\big] = \mathbf{D}_2\left(\boldsymbol{\psi}(\boldsymbol{\theta}_2^*) - \boldsymbol{\psi}(\boldsymbol{\theta}_2)\right), 
        \end{split}
    \end{equation*}
where $\mathbf{D}_m := \expectation[\mathbf{H}_m(\bG)^\intercal \mathbf{X}_m(\bG)]$. Since $\mathbf{D}_1$ and $\mathbf{D}_2$ have dimensions $K_1 \times 4$ and $K_2 \times 5$, the rank conditions below require $K_1 \geq 4$ and $K_2 \geq 5$: aggregation must retain at least as many moments as parameters.

\begin{lemma} ~
    \begin{enumerate}
        \item Suppose that $\boldsymbol{\theta}_1^* \in \Theta_1$. Then $\boldsymbol{\theta}_1^*$ is the unique solution to $\mathbf{h}_1(\boldsymbol{\theta}_1) = \mathbf{0}$ if and only if $\rank(\mathbf{D}_1) = 4$. 
        \item Suppose that $\boldsymbol{\theta}_2^* \in \Theta_2$. Then $\boldsymbol{\theta}_2^*$ is the unique solution to $\mathbf{h}_2(\boldsymbol{\theta}_2) = \mathbf{0}$ if $\rank(\mathbf{D}_2) = 5$. 
    \end{enumerate}
\end{lemma}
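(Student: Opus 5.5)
Both parts rest on the affine representations already displayed: $\mathbf{h}_1(\boldsymbol{\theta}_1) = \mathbf{D}_1(\boldsymbol{\theta}_1^* - \boldsymbol{\theta}_1)$ and $\mathbf{h}_2(\boldsymbol{\theta}_2) = \mathbf{D}_2(\boldsymbol{\psi}(\boldsymbol{\theta}_2^*) - \boldsymbol{\psi}(\boldsymbol{\theta}_2))$. Before using them I will check that they are correct. For the mean moments, I substitute $\expectation[\boldy_s\mid\bG] = \beta^*\bG\expectation[\boldy_s\mid\bG]+\alpha^*\mathbf{1}+(\gamma^*\bI+\delta^*\bG)\expectation[\bnx_s\mid\bG]$, which follows from the structural equation and Assumption~\ref{ass:measurementerror1}, into $\expectation[\mathbf{v}_{1,s}(\boldsymbol{\theta}_1,\bG)\mid\bG]$. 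For the covariance moments, I expand $\expectation[\mathbf{V}_{2,s}\mid\bG]$ using $\covariance(\boldy_s,\bnx_s\mid\bG)=\beta^*\bG\,\covariance(\boldy_s,\bnx_s\mid\bG)+(\gamma^*\bI+\delta^*\bG)(\variance(\bnx_s\mid\bG)-\sigma_u^{2*}\bI)$, which uses Assumption~\ref{ass:measurementerror2}. Each moment function is then linear in $(\gamma,\delta,\beta,\gamma\sigma_u^2,\delta\sigma_u^2)$, and vanishes at the truth. Averaging with $\mathbf{H}_m(\bG)^\intercal$ and applying iterated expectations gives the two displays.

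\textbf{Part 1.} Since $\mathbf{h}_1(\boldsymbol{\theta}_1)=\mathbf{0}$ if and only if $\boldsymbol{\theta}_1^*-\boldsymbol{\theta}_1\in\ker\mathbf{D}_1$, sufficiency is immediate: if $\rank(\mathbf{D}_1)=4$, the kernel is trivial. For necessity, suppose the rank is below $4$ and take $\mathbf{c}\neq\mathbf{0}$ in the kernel. Then every $\boldsymbol{\theta}_1^*+t\mathbf{c}$ solves the system. Because $\Theta_1$ is open and contains $\boldsymbol{\theta}_1^*$, small $t\neq0$ give distinct admissible solutions. This is the same perturbation argument as in the necessity part of Theorem~\ref{theorem:identification}, and it is simpler here because the map is already linear in $\boldsymbol{\theta}_1$.

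\textbf{Part 2.} If $\rank(\mathbf{D}_2)=5$, then $\mathbf{h}_2(\boldsymbol{\theta}_2)=\mathbf{0}$ forces $\boldsymbol{\psi}(\boldsymbol{\theta}_2)=\boldsymbol{\psi}(\boldsymbol{\theta}_2^*)$, so it remains to show $\boldsymbol{\psi}$ is injective on $\Theta_2$. Equality of $\boldsymbol{\psi}$ immediately gives $\gamma=\gamma^*$, $\delta=\delta^*$ and $\beta=\beta^*$. It then gives $\gamma^*(\sigma_u^2-\sigma_u^{2*})=0$ and $\delta^*(\sigma_u^2-\sigma_u^{2*})=0$. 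If $\sigma_u^2\neq\sigma_u^{2*}$, then $\gamma^*=\delta^*=0$, which contradicts $\beta^*\gamma^*+\delta^*\neq0$. Only sufficiency is claimed here because $\boldsymbol{\psi}$ is not onto, so a nontrivial kernel direction need not correspond to any admissible $\boldsymbol{\theta}_2$.

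The main obstacle is the opening verification of the affine representations, especially for the covariance moments. There one must check that the $-\sigma_u^2(\gamma\bI+\delta\bG)$ term is exactly linear in $\gamma\sigma_u^2$ and $\delta\sigma_u^2$. One must also check that replacing $\expectation[\boldy_s\mid\bG]$ inside $\dot{\boldy}_s$ introduces no parameter dependence. Once these hold, both parts are straightforward linear algebra.
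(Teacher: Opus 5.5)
Your proposal is correct and follows the paper's proof. Part 1 is the kernel-triviality argument for the affine map $\boldsymbol{\theta}_1 \mapsto \mathbf{D}_1(\boldsymbol{\theta}_1^*-\boldsymbol{\theta}_1)$, and Part 2 reduces to $\boldsymbol{\psi}(\boldsymbol{\theta}_2)=\boldsymbol{\psi}(\boldsymbol{\theta}_2^*)$ and then uses $(\gamma^*,\delta^*)\neq(0,0)$ to recover $\sigma_u^2$; your re-derivation of the affine representations and your use of openness of $\Theta_1$ to obtain admissible alternative solutions in the necessity direction are sound refinements of steps the paper leaves implicit.
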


\begin{proof}
    Since $\boldsymbol{\theta}_1^*$ always solves $\mathbf{h}_1(\boldsymbol{\theta}_1)=\mathbf{0}$, the set of solutions is a singleton if and only if $\ker(\mathbf{D}_1) = \{\mathbf{0}\}$, that is, if and only if $\rank(\mathbf{D}_1) = 4$.

    If $\rank(\mathbf{D}_2) = 5$, then $\ker(\mathbf{D}_2) = \{\mathbf{0}\}$ and $\mathbf{h}_2(\boldsymbol{\theta}_2)=\mathbf{0}$ forces $\boldsymbol{\psi}(\boldsymbol{\theta}_2^*) = \boldsymbol{\psi}(\boldsymbol{\theta}_2)$. Matching the first three components gives $(\beta^*, \gamma^*, \delta^*) = (\beta, \gamma, \delta)$. Because $\beta^*\gamma^* + \delta^* \neq 0$ implies $(\gamma^*, \delta^*) \neq (0,0)$, at least one of the remaining components, $\gamma^*{\sigma_u^2}^* = \gamma\sigma_u^2$ or $\delta^*{\sigma_u^2}^* = \delta\sigma_u^2$, can be solved for ${\sigma_u^2}^* = \sigma_u^2$. Hence $\boldsymbol{\theta}_2^* = \boldsymbol{\theta}_2$.
\end{proof}

%%%%%%%%%%%%%%%%%%%%%%%%%%%%%%%%%%%%%%%%
%%%%%%%%%%%%%%%%%%%%%%%%%%%%%%%%%%%%%%%%
%%%%%%%%%%%%%%%%%%%%%%%%%%%%%%%%%%%%%%%%
% ONLINE APPENDIX
%%%%%%%%%%%%%%%%%%%%%%%%%%%%%%%%%%%%%%%%
%%%%%%%%%%%%%%%%%%%%%%%%%%%%%%%%%%%%%%%%
%%%%%%%%%%%%%%%%%%%%%%%%%%%%%%%%%%%%%%%%
\clearpage

\setcounter{page}{1}
\renewcommand{\thepage}{O.A\arabic{page}}

\begin{center}
    {\huge Online Appendix \\
    \vspace{10pt}
    \Large \textit{Measurement Error and Peer Effects in Networks}
  }
\end{center}

\appendix
\renewcommand{\thesection}{O.\Alph{section}}

\section{Extensions}\label{app:extensions}

\subsection{Network fixed effects}\label{app:fixedeffects}
We consider an extended specification of the model that allows for network-specific fixed effects,
\begin{subequations}\label{eq:model_extended}
\begin{align}
\boldy_s &=  \alpha_s \mathbf{1} +  \gamma \bx_s + \delta \bG_s {\bx}_s + \beta \bG_s\boldy_s + \be_s, \label{eq:modelnetworkfixedeffect}\\
\bnx_s &= \bx_s + \bu_s,
\end{align}
\end{subequations}
with $\expectation(e_{si} \mid \alpha_s, \bx_s, \boldG_s, \bu_s) = 0$. The inclusion of the fixed effects $\alpha_s$ requires updated versions of Assumptions~\ref{ass:measurementerror1} and \ref{ass:measurementerror2}.
\begin{assumption}\label{ass:measurementerror1B}
    The measurement errors satisfy:
    \begin{equation*}
        \begin{split}
             \expectation[u_{si} \mid \alpha_s, \bx_s, \boldG_s, \be_s] &= 0.\\
        \end{split}
    \end{equation*}
\end{assumption}

\begin{assumption}\label{ass:measurementerror2B}
    The measurement errors satisfy:
    \[
\begin{alignedat}{2}
\mathbb{E}\!\left[u_{si}^{2}\mid \alpha_s,\bx_s,\boldG_s,\be_s\right] &= \sigma_u^{2}, &\qquad
\mathbb{E}\!\left[u_{si}u_{sj}\mid \alpha_s,\bx_s,\boldG_s,\be_s\right] &= 0, \forall i \neq j.
\end{alignedat}
\]
\end{assumption}

\paragraph{Identification.}
We now state analogues of Theorems~\ref{theorem:identification} and \ref{thm:identification2} for the model with network fixed effects. As in \citeA{bramoulle09}, the inclusion of network-specific fixed effects renders the identification conditions more demanding.

\begin{theorem}\label{theorem:identificationB}
    Suppose that Assumption~\ref{ass:measurementerror1B} holds and that $\beta \gamma + \delta \neq 0$. The parameters $\beta, \gamma, \delta$ are identified from conditional mean restrictions if the vectors $\bm$, $\bG_0 \bm$, $\bG_0^2\bm$, $\bG_0^3 \bm$ are linearly independent.
\end{theorem}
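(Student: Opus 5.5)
We follow the proof of Theorem~\ref{theorem:identification}, with one change: first remove the network fixed effects.

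\textbf{Step 1: differencing out the fixed effects.} The reduced form is
\[
\boldy_s = (\bI-\beta\bG_s)^{-1}\bigl[\alpha_s\mathbf{1} + (\gamma\bI+\delta\bG_s)\bnx_s + \boldsymbol{\eta}_s\bigr].
\]
Because $\bG_s\mathbf{1}=\mathbf{1}$, the fixed-effect term equals $\frac{\alpha_s}{1-\beta}\mathbf{1}$, which is constant within each network. Premultiplying by $(\bI-\bG_s)$ annihilates it, as in the within transformation of \citeA{bramoulle09}.

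Take expectations conditional on $\bG_0$. Assumption~\ref{ass:measurementerror1B} and $\expectation(e\mid\alpha_s,\cdot)=0$ then give
\[
(\bI-\bG_0)\,\expectation[\boldy_0\mid\bG_0] = (\bI-\bG_0)(\bI-\beta\bG_0)^{-1}(\gamma\bI+\delta\bG_0)\bm .
\]
The left side is observable, so these are the mean restrictions for the fixed-effects model. Identification from them means that no $\boldsymbol{\theta}\neq\boldsymbol{\theta}'$ produces the same right-hand side.

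\textbf{Step 2: reducing to a linear dependence.} Suppose $(\beta,\gamma,\delta)$ and $(\beta',\gamma',\delta')$ give the same right-hand side. All matrices involved are polynomials or rational functions of $\bG_0$, so they commute. Left-multiplying by $(\bI-\beta\bG_0)(\bI-\beta'\bG_0)$ therefore yields
\[
(\bI-\bG_0)\bigl[(\bI-\beta'\bG_0)(\gamma\bI+\delta\bG_0)-(\bI-\beta\bG_0)(\gamma'\bI+\delta'\bG_0)\bigr]\bm=\mathbf{0}.
\]
Write the bracket as $a_2\bI+a_3\bG_0+a_4\bG_0^2$, with $a_2,a_3,a_4$ as in the proof of Theorem~\ref{theorem:identification}. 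Expanding gives
\[
a_2\bm+(a_3-a_2)\bG_0\bm+(a_4-a_3)\bG_0^2\bm-a_4\bG_0^3\bm=\mathbf{0}.
\]

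\textbf{Step 3: concluding.} By the assumed linear independence of $\bm,\bG_0\bm,\bG_0^2\bm,\bG_0^3\bm$, all four coefficients vanish. This gives $a_4=0$, then $a_3=0$, then $a_2=0$ (the coefficient $a_2$ also vanishes directly).

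The argument from Proposition~1 of \citeA{bramoulle09} then applies. From $a_2=0$ we get $\gamma=\gamma'$. Then $a_3=0$ gives $\delta'-\delta=\gamma(\beta-\beta')$, and $a_4=0$ gives $\beta\delta'=\beta'\delta$. Substituting the first relation into the second yields $(\beta-\beta')(\beta\gamma+\delta)=0$. Since $\beta\gamma+\delta\neq 0$, we have $\beta=\beta'$, and then $\delta=\delta'$.

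The intercepts $\alpha_s$ are not part of the claim, so the proof is complete.

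The only step requiring care is Step 1: $(\bI-\bG_0)$ must commute with $(\bI-\beta\bG_0)^{-1}$ (it does, since both are functions of $\bG_0$), and conditioning on $\alpha_s$ must be harmless for the $\boldsymbol{\eta}$ term, which Assumption~\ref{ass:measurementerror1B} ensures. Unlike Theorem~\ref{theorem:identification}, the statement is only sufficient, so no necessity argument is needed.
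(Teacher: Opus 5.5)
Your proof is correct and is essentially the paper's argument: you remove the fixed effects by local differencing with $(\bI-\bG_0)$, left-multiply by $(\bI-\beta\bG_0)(\bI-\beta'\bG_0)$, use linear independence of $\bm,\bG_0\bm,\bG_0^2\bm,\bG_0^3\bm$ to set the coefficients to zero, and conclude with the Proposition~1 argument of \citeA{bramoulle09}. Factoring the expansion as $(\bI-\bG_0)$ times the bracket from Theorem~\ref{theorem:identification} is a tidy touch, since the paper's four coefficients are exactly your $a_2$, $a_3-a_2$, $a_4-a_3$, $-a_4$, but the route is the same.
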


\begin{proof}
    Let $\boldsymbol{\theta}_3 := (\beta, \gamma,\delta)$ and let $\Theta_3 := \{ \boldsymbol{\theta}_3 : |\beta| < 1, \beta \gamma + \delta \neq 0\}$ denote the admissible parameter space. Suppose that $\boldsymbol{\theta}_3, \boldsymbol{\theta}'_3 \in \Theta_3$ generate the same locally differenced conditional mean restriction, that is,
    \begin{equation*}
    \begin{split}
        \expectation[(\bI - \bG_0)\boldy_0 \mid \bG_0] &= (\bI - \beta \bG_0)^{-1} (\bI - \bG_0) (\gamma \bI + \delta \bG_0) \bm, \\
            \expectation[(\bI - \bG_0)\boldy_0 \mid \bG_0] &= (\bI - \beta' \bG_0)^{-1} (\bI - \bG_0) (\gamma' \bI + \delta' \bG_0) \bm,
    \end{split}
\end{equation*}
where we made use of the push-through identity. Equivalently,
\begin{equation*}
    (\bI - \beta \bG_0)^{-1} (\bI - \bG_0) (\gamma \bI + \delta \bG_0) \bm = (\bI - \beta' \bG_0)^{-1} (\bI - \bG_0) (\gamma' \bI + \delta' \bG_0) \bm.
\end{equation*}
Left-multiplying both sides by $(\bI-\beta \bG_0)(\bI - \beta' \bG_0)$ yields
\begin{equation*}
    (\bI - \beta' \bG_0) (\bI - \bG_0) (\gamma \bI + \delta \bG_0) \bm = (\bI - \beta \bG_0) (\bI - \bG_0) (\gamma' \bI + \delta' \bG_0) \bm.
\end{equation*}
Expanding and rearranging gives $a_1 \bm + a_2 \bG_0 \bm  + a_3 \bG_0^2\bm + a_4\bG_0^3\bm = \mathbf{0}$, with coefficients
\[
\begin{alignedat}{2}
a_1&= \gamma-\gamma' ,&\qquad a_2&= -(1+\beta')\gamma+\delta+(1+\beta)\gamma'-\delta',\\
a_3&=\beta' \gamma - \delta(1+\beta') - \beta \gamma' + \delta'(1+\beta),&\qquad a_4&=\beta'\delta-\beta\delta'.
\end{alignedat}
\]
Collect these coefficients in the vector $A_3(\boldsymbol{\theta}_3, \boldsymbol{\theta}'_3) := (a_1, a_2, a_3, a_4)$ and define $\mathcal{A}_3 := \{\mathbf{c} \in \mathbb{R}^4 : c_1 \bm + c_2 \bG_0 \bm  + c_3 \bG_0^2\bm + c_4\bG_0^3\bm = \mathbf{0}\}$. If  $\boldsymbol{\theta}_3$ and $\boldsymbol{\theta}'_3$ generate the same differenced mean restriction then $A_3(\boldsymbol{\theta}_3, \boldsymbol{\theta}'_3) \in \mathcal{A}_3$.

Suppose that $\bm$, $\bG_0 \bm$, $\bG_0^2 \bm$, $\bG_0^3 \bm$ are linearly independent, so that $\mathcal{A}_3 = \{\mathbf{0}\}$ and hence $A_3(\boldsymbol{\theta}_3, \boldsymbol{\theta}'_3) = \mathbf{0}$. As in the proof of Proposition~1 in \citeA{bramoulle09}, $a_1 = a_2 = a_4 = 0$ together with $\beta \gamma + \delta \neq 0$ implies $(\beta, \gamma, \delta) = (\beta', \gamma', \delta')$. Hence, $\boldsymbol{\theta}_3$ is identified.
\end{proof}

\begin{theorem}\label{thm:identification2B}
    Suppose that Assumptions~\ref{ass:measurementerror1B} and \ref{ass:measurementerror2B} hold and that $\beta \gamma + \delta \neq 0$. The parameters $\beta, \gamma, \delta, \sigma_u^2$ are identified from conditional covariance restrictions if the matrices $\bI$, $\bG_0$, $\bG_0^2$, $\bG_0^3$, $\bC$, $\bG_0\bC$, $\bG_0^2 \bC$, $\bG_0^3 \bC$ are linearly independent.
\end{theorem}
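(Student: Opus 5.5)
The proof mirrors the proof of Theorem~\ref{thm:identification2}, combined with the local differencing used in Theorem~\ref{theorem:identificationB}.

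\textbf{Step 1: difference out the fixed effects.} Pre-multiplying the outcome equation by $(\bI-\bG_0)$ annihilates $\alpha_s\mathbf{1}$, because $\bG_0$ is row-stochastic. Using the reduced form and the push-through identity, and because Assumptions~\ref{ass:measurementerror1B} and \ref{ass:measurementerror2B} give $\variance(\bnx_0\mid\bG_0)=\variance(\bx_0\mid\bG_0)+\sigma_u^2\bI$, the differenced covariance restriction reads
\begin{equation*}
\covariance((\bI-\bG_0)\boldy_0,\bnx_0\mid\bG_0)=(\bI-\beta\bG_0)^{-1}(\bI-\bG_0)(\gamma\bI+\delta\bG_0)(\bC-\sigma_u^2\bI).
\end{equation*}
One point needs checking: whether $\alpha_s$ contributes a covariance term with $\bnx_0$ when it is correlated with $\bx_s$. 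Differencing by $(\bI-\bG_0)$ kills $\alpha_s\mathbf{1}$ pathwise, so the term is removed before covariances are taken. I expect this justification to be the main subtlety; the rest is algebra.

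\textbf{Step 2: reduce to a linear relation.} Suppose $\boldsymbol{\theta}_2,\boldsymbol{\theta}_2'\in\Theta_2$ generate the same restriction. Left-multiply both sides by $(\bI-\beta\bG_0)(\bI-\beta'\bG_0)$ and commute polynomials in $\bG_0$. This gives
\begin{equation*}
(\bI-\beta'\bG_0)(\bI-\bG_0)(\gamma\bI+\delta\bG_0)(\bC-\sigma_u^2\bI)=(\bI-\beta\bG_0)(\bI-\bG_0)(\gamma'\bI+\delta'\bG_0)(\bC-\sigma_u^{2\prime}\bI).
\end{equation*}
Expanding yields a linear combination of $\bI,\bG_0,\bG_0^2,\bG_0^3,\bC,\bG_0\bC,\bG_0^2\bC,\bG_0^3\bC$ equal to zero. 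By the assumed linear independence, all eight coefficients vanish.

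\textbf{Step 3: solve for the parameters.} The coefficients on $\bC,\bG_0\bC,\bG_0^2\bC,\bG_0^3\bC$ are exactly the $a_1,\dots,a_4$ of the proof of Theorem~\ref{theorem:identificationB}. Those equations, together with $\beta\gamma+\delta\neq0$, give $(\beta,\gamma,\delta)=(\beta',\gamma',\delta')$, following \citeA{bramoulle09}.

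Substituting into the coefficients on $\bI$ and $\bG_0$ gives $-\gamma(\sigma_u^2-\sigma_u^{2\prime})=0$ and $(\gamma-\delta+\beta\gamma)(\sigma_u^2-\sigma_u^{2\prime})=0$ up to sign. Suppose $\sigma_u^2\neq\sigma_u^{2\prime}$. Then $\gamma=0$ and hence $\delta=0$, contradicting $\beta\gamma+\delta\neq0$. Therefore $\sigma_u^2=\sigma_u^{2\prime}$, which completes identification.
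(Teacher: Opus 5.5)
Your proposal is correct and follows the paper's proof essentially step for step. You locally difference to obtain $\covariance((\bI-\bG_0)\boldy_0,\bnx_0\mid\bG_0)=(\bI-\beta\bG_0)^{-1}(\bI-\bG_0)(\gamma\bI+\delta\bG_0)(\bC-\sigma_u^2\bI)$, left-multiply by $(\bI-\beta\bG_0)(\bI-\beta'\bG_0)$, and use linear independence to set all eight coefficients to zero; you then recover $(\beta,\gamma,\delta)$ from the $\bC$-block coefficients via the argument of \citeA{bramoulle09}, and $\sigma_u^2$ from the $\bI$ and $\bG_0$ coefficients. Your remark that $(\bI-\bG_0)$ annihilates $\alpha_s\mathbf{1}$ pathwise, so that any correlation between $\alpha_s$ and $\bx_s$ never enters the covariance, makes explicit a step the paper leaves implicit.
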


\begin{proof}
    Suppose that $\boldsymbol{\theta}_2, \boldsymbol{\theta}'_2 \in \Theta_2$ generate the same locally differenced conditional covariance restriction, that is,
    \begin{equation*}
    \begin{split}
        \covariance((\bI - \bG_0)\boldy_0, \bnx_0 \mid \bG_0) &= (\bI - \beta \bG_0)^{-1} (\bI - \bG_0)(\gamma \bI + \delta \bG_0) (\bC - \sigma_u^2 \bI), \\
        \covariance((\bI - \bG_0)\boldy_0, \bnx_0 \mid \bG_0) &= (\bI - \beta' \bG_0)^{-1} (\bI - \bG_0) (\gamma' \bI + \delta' \bG_0) (\bC - {\sigma_u^2}' \bI),
    \end{split}
\end{equation*}
where we made use of the push-through identity. Equivalently, 
\begin{equation*}
    (\bI - \beta \bG_0)^{-1} (\bI - \bG_0)(\gamma \bI + \delta \bG_0) (\bC - \sigma_u^2 \bI) = (\bI - \beta' \bG_0)^{-1} (\bI - \bG_0) (\gamma' \bI + \delta' \bG_0) (\bC - {\sigma_u^2}' \bI).
\end{equation*}
Left-multiplying both sides by $(\bI-\beta \bG_0)(\bI - \beta' \bG_0)$ gives
\begin{equation*}
    (\bI - \beta' \bG_0) (\bI - \bG_0)(\gamma \bI + \delta \bG_0) (\bC - \sigma_u^2 \bI) = (\bI - \beta \bG_0) (\bI - \bG_0) (\gamma' \bI + \delta' \bG_0) (\bC - {\sigma_u^2}' \bI).
\end{equation*}
Expanding and rearranging gives $a_1 \bI + a_2 \bG_0 + a_3 \bG_0^2 + a_4 \bG_0^3 + a_5 \bC + a_6 \bG_0 \bC + a_7 \bG_0^2 \bC + a_8 \bG_0^3 \bC = \mathbf{0}$, with coefficients
\[
\begin{aligned}
a_1 &= -\sigma_u^2\gamma + {\sigma_u^2}'\gamma', \\
a_2 &= -\sigma_u^2(\delta - \gamma(1+\beta')) + {\sigma_u^2}'(\delta' - \gamma'(1+\beta)), \\
a_3 &= -\sigma_u^2(\gamma \beta' - (1+\beta')\delta) + {\sigma_u^2}'(\gamma' \beta - (1+\beta)\delta'), \\
a_4 &= -\sigma_u^2\beta'\delta + {\sigma_u^2}'\beta\delta', \\
a_5 &= \gamma - \gamma', \\
a_6 &= \delta - \gamma(1+\beta') - \delta' + \gamma'(1+\beta), \\
a_7 &= \gamma \beta' - (1+\beta')\delta - \gamma' \beta + (1+\beta)\delta', \\
a_8 &= \beta'\delta - \beta \delta'.
\end{aligned}
\]
Collect these coefficients in the vector $A_4(\boldsymbol{\theta}_2, \boldsymbol{\theta}'_2) := (a_1, a_2, a_3, a_4, a_5, a_6, a_7, a_8)$ and define $\mathcal{A}_4 := \{\mathbf{c} \in \mathbb{R}^8 : c_1 \bI + c_2 \bG_0 + c_3 \bG_0^2 + c_4 \bG_0^3 + c_5 \bC + c_6 \bG_0 \bC + c_7 \bG_0^2 \bC + c_8 \bG_0^3 \bC = \mathbf{0}\}$. If $\boldsymbol{\theta}_2$ and $\boldsymbol{\theta}'_2$ generate the same differenced covariance restriction then $A_4(\boldsymbol{\theta}_2, \boldsymbol{\theta}'_2) \in \mathcal{A}_4$.

Suppose that $\bI$, $\bG_0$, $\bG_0^2$, $\bG_0^3$, $\bC$, $\bG_0\bC$, $\bG_0^2 \bC$, $\bG_0^3 \bC$ are linearly independent, so that $\mathcal{A}_4 = \{\mathbf{0}\}$ and hence $A_4(\boldsymbol{\theta}_2, \boldsymbol{\theta}'_2) = \mathbf{0}$. As in the proof of Proposition~1 in \citeA{bramoulle09}, $a_5 = a_6 = a_8 = 0$ together with $\beta \gamma + \delta \neq 0$ implies $(\beta, \gamma, \delta) = (\beta', \gamma', \delta')$. Substituting into $a_1$ and $a_2$ gives
\begin{equation*}
    a_1 = -\gamma\,(\sigma_u^2 - {\sigma_u^2}') = 0,
    \qquad
    a_2 = -(\delta - \gamma(1 + \beta))\,(\sigma_u^2 - {\sigma_u^2}') = 0.
\end{equation*}
Suppose that $\sigma_u^2 \neq {\sigma_u^2}'$. Then $\gamma = 0$ and
$\delta - \gamma(1 + \beta) = 0$, hence $\delta = 0$ and
$\beta\gamma + \delta = 0$, contradicting the maintained assumption.
Therefore $\sigma_u^2 = {\sigma_u^2}'$. Hence, $\boldsymbol{\theta}_2$ is identified.
\end{proof}

\paragraph{Estimation.}
2SLS estimation carries over with adaptations. Let $\mathbf{Q}_s$ be a differencing matrix representing either network demeaning, $\bI_{N_s} - \frac{1}{N_s} \bJ_{N_s}$, or local differencing, $\bI_{N_s} - \bG_s$. The trace conditions of Proposition~\ref{prop:instrumentsdiffmatrix} become, for validity in the transformed model, $\expectation[\trace(\mathbf{Q}^\intercal_s\bW(\bG_s))] = \expectation[\trace(\mathbf{Q}^\intercal_s\bW(\bG_s)\bG_s^\intercal)] = 0$, with the instruments entering untransformed. Under local differencing, the argument of Corollary~\ref{cor:lagged} delivers valid instruments constructed from characteristics at distance three or larger. Under network demeaning, the network-lagged instruments are valid only as $N_s \to \infty$.\footnote{The moment condition carries a bias of order $\frac{\sigma_u^2}{N_s}$ because the network mean of the instrument contains the measurement errors of all individuals in the network.} A recentering approach in the spirit of Corollary~\ref{cor:features} is needed to restore validity at fixed $N_s$.

By contrast, GMM estimation is less straightforward, since every fixed effect $\alpha_s$ enters the nonlinear objective as an additional parameter, so that the parameter count grows with the number of networks. A simple alternative is to estimate the model using network-demeaned variables, which removes the fixed effects from the objective and can substantially facilitate estimation.

For concreteness, we use network demeaning, $\mathbf{Q}_s := \bI_{N_s} - \frac{1}{N_s} \bJ_{N_s}$; local differencing works similarly. Left-multiplying both sides of  Equation~\eqref{eq:modelnetworkfixedeffect} by $\mathbf{Q}_s$ yields the demeaned model
\begin{equation*}
    \begin{split}
        \mathbf{Q}_s\boldy_s &=  \mathbf{Q}_s\left[ (\gamma \bI + \delta \bG_s){\bx}_s + \beta \bG_s \boldy_s + \be_s \right],
    \end{split}
\end{equation*}
which no longer contains the network-specific fixed effect. From this demeaned model, under Assumption~\ref{ass:measurementerror1B}, we obtain the conditional mean restriction
\begin{equation*}
    \begin{split}
        \mathbf{Q}_0(\bI - \beta \bG_0)\expectation[\boldy_0 \mid \bG_0] &= \mathbf{Q}_0(\gamma \bI + \delta \bG_0) \expectation[\widetilde{\bx}_0 \mid \bG_0].
    \end{split}
\end{equation*}
For a fixed network $\bG$, the associated moment function for observation $s$ is
\begin{equation*}
    \mathbf{v}^{FE}_{1,s}(\boldsymbol{\theta}_1, \bG) := \mathbf{Q}_s \left[(\bI - \beta \bG)\boldy_s - (\gamma \bI + \delta \bG) \widetilde{\bx}_s \right],
\end{equation*}
where, in contrast to Equation~\eqref{eq:restmean2}, the parameter vector $\boldsymbol{\theta}_1 := (\gamma, \delta, \beta)$ no longer contains an intercept.
Likewise, under Assumptions~\ref{ass:measurementerror1B} and \ref{ass:measurementerror2B}, we obtain the conditional covariance restriction
\begin{equation*}
    \mathbf{Q}_0 (\bI - \beta \bG_0) \covariance(\boldy_0, \widetilde{\bx}_0 \mid \bG_0) \mathbf{Q}_0^\intercal = \mathbf{Q}_0 (\gamma \bI + \delta \bG_0) ( \variance(\widetilde{\bx}_0 \mid \bG_0) - \sigma_u^2 \bI)\mathbf{Q}_0^\intercal,
\end{equation*}
with associated moment function
\begin{equation*}
    \begin{split}
        \mathbf{V}^{FE}_{2,s}(\boldsymbol{\theta}_2, \bG) &:=  \mathbf{Q}_s \left[ (\bI - \beta \bG) \dot{\boldy}_s \dot{\widetilde{\bx}}^\intercal_s - (\gamma \bI + \delta \bG) ( \dot{\widetilde{\bx}}_s\dot{\widetilde{\bx}}^\intercal_s - \sigma_u^2 \bI) \right] \mathbf{Q}_s^\intercal.
    \end{split}
\end{equation*}
Note that $\mathbf{V}^{FE}_{2,s}$ involves two layers of demeaning. The dotted variables are deviations from conditional means, as in Equation~\eqref{eq:restcovariance2}, and form the conditional covariance; the sandwiching by $\mathbf{Q}_s$ demeans within networks and eliminates the fixed effect.

Together, these demeaned restrictions provide a potentially large system of equations in only four unknowns ($\beta$, $\gamma$, $\delta$, $\sigma_u^2$). GMM estimation then proceeds along the lines of Propositions~\ref{prop:GMMfirst} and \ref{prop:GMMsecond} in Section~\ref{sec:gmm}.

As a final remark, researchers who wish to include a larger set of perfectly measured covariates can follow \citeA{ericksonwhited2002}, who propose a computationally convenient two-step procedure. In the first step, one partials out fixed effects and the additional perfectly measured covariates via auxiliary OLS regressions. In the second step, one applies GMM to the resulting residualized model.\footnote{This approach generalizes the demeaning procedure outlined above. In this setting, $\mathbf{Q}_s$ can be interpreted as the annihilator matrix from the first stage.}

\subsection{Measurement error in outcomes}\label{app:outcomes}
We now extend the model to allow for mismeasured outcomes: the researcher observes $\widetilde{\boldy}_s = \boldy_s + \mathbf{v}_s$, where $\mathbf{v}_s$ collects the outcome measurement errors. Together with Equations~\eqref{eq:modeloutcome} and \eqref{eq:modelerror}, this yields
\begin{equation*}
    \widetilde{\boldy}_s = \alpha \mathbf{1} + \beta  \bG_s\widetilde{{\boldy}}_s + \gamma \bnx_s + \delta \bG_s {\bnx}_s + \underbrace{\mathbf{e}_s - \gamma \bu_s - \delta \bG_s{\bu}_s + \mathbf{v}_s - \beta\bG_s\mathbf{v}_s}_{=: \boldsymbol{\zeta}_s}.
\end{equation*}
Measurement error in outcomes is nontrivial in a linear-in-means setting, since outcomes enter both sides of the equation: the composite disturbance $\boldsymbol{\zeta}_s$ contains not only $\mathbf{v}_s$ but also the term $-\beta\bG_s\mathbf{v}_s$, through which each individual's error propagates to their peers. We show that, nevertheless, our identification and estimation results carry over once this additional source of error is disciplined, in the spirit of Assumptions~\ref{ass:measurementerror1} and \ref{ass:measurementerror2}.
\begin{assumption}\label{ass:measurementerror3B}
    The measurement errors satisfy:
    \[
    \begin{split}
            \mathbb{E}\!\left[u_{si}\mid \bx_s,\boldG_s,\be_s\right] &= 0, \\
            \mathbb{E}\!\left[v_{si}\mid \bx_s,\boldG_s,\be_s\right] &= 0.
    \end{split}
    \]
\end{assumption}
\noindent Assumption~\ref{ass:measurementerror3B} extends Assumption~\ref{ass:measurementerror1} with the analogous mean independence condition on the outcome measurement error.
\begin{assumption}\label{ass:measurementerror4B}
    The measurement errors satisfy:
\[
\begin{alignedat}{2}
\mathbb{E}\!\left[u_{si}^2 \mid \bx_s,\boldG_s,\be_s\right] &= \sigma_u^2, &\qquad
\mathbb{E}\!\left[u_{si} u_{sj}\mid \bx_s,\boldG_s,\be_s\right] &= 0, \forall i \neq j, \\
\mathbb{E}\!\left[u_{si} v_{sj} \mid \bx_s,\boldG_s,\be_s\right] &= 0, \forall i,j.
\end{alignedat}
\]
\end{assumption}
\noindent Assumption~\ref{ass:measurementerror4B} extends Assumption~\ref{ass:measurementerror2} by requiring the two sources of measurement error to be conditionally uncorrelated. Notably, we place no restrictions on the second moments of $v_{si}$: the outcome measurement error may be heteroscedastic and correlated across individuals within a network.

\paragraph{Identification.}
The identification conditions of Theorems~\ref{theorem:identification} and \ref{thm:identification2} carry over unchanged. The only difference is that the mean and covariance restrictions are now stated in terms of observed outcomes: $\expectation[\widetilde{\boldy}_0 \mid \bG_0]$ in place of Equation~\eqref{eq:restmean}, and $\covariance(\widetilde{\boldy}_0, \bnx_0 \mid \bG_0)$ in place of Equation~\eqref{eq:restcovariance}.
\begin{theorem}\label{theorem:identificationBB}
    Suppose that Assumption~\ref{ass:measurementerror3B} holds and that $\beta \gamma + \delta \neq 0$. The parameters $\alpha$, $\beta$, $\gamma$, $\delta$ are identified from conditional mean restrictions if and only if the vectors $\mathbf{1}$, $\bm$, $\bG_0 \bm$, $\bG_0^2 \bm$ are linearly independent.
\end{theorem}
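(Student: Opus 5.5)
The plan is to reduce Theorem~\ref{theorem:identificationBB} to Theorem~\ref{theorem:identification}. The key observation is that outcome measurement error only enters the conditional mean through $\expectation[\mathbf{v}_0 \mid \bG_0]$, and this vanishes under Assumption~\ref{ass:measurementerror3B}. Once the conditional mean restriction in terms of observed outcomes has exactly the form of Equation~\eqref{eq:restmean}, the sufficiency and necessity arguments of the earlier proof apply verbatim.

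First, derive the restriction. Start from the structural equation for $\widetilde{\boldy}_s$ with composite disturbance $\boldsymbol{\zeta}_s$. Since $|\beta|<1$, the matrix $\bI - \beta\bG_s$ is invertible, which gives
\[
\widetilde{\boldy}_s = (\bI-\beta\bG_s)^{-1}\bigl[\alpha\mathbf{1} + (\gamma\bI+\delta\bG_s)\bnx_s + \boldsymbol{\zeta}_s\bigr].
\]
Alternatively, write $\widetilde{\boldy}_s = \boldy_s + \mathbf{v}_s$ and use the existing reduced form for $\boldy_s$. Next, take expectations conditional on $\bG_s$ and apply the law of iterated expectations.

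\begin{itemize}
\item The standard condition $\expectation[e_{si}\mid \bx_s,\bG_s,\bu_s]=0$ gives $\expectation[\be_s\mid\bG_s]=\mathbf{0}$.
\item The first line of Assumption~\ref{ass:measurementerror3B} gives $\expectation[\bu_s\mid\bG_s]=\mathbf{0}$, so $\expectation[\bG_s\bu_s\mid\bG_s]=\mathbf{0}$.
\item The second line gives $\expectation[\mathbf{v}_s\mid\bG_s]=\mathbf{0}$, hence $\expectation[\bG_s\mathbf{v}_s\mid\bG_s]=\mathbf{0}$.
\end{itemize}

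Together these give $\expectation[\boldsymbol{\zeta}_s\mid\bG_s]=\mathbf{0}$. Stacking over positions then yields $\expectation[\widetilde{\boldy}_0\mid\bG_0] = \mathbf{r}(\bm,\bG_0;\boldsymbol{\theta}_1)$, with the same function $\mathbf{r}$ and the same $\bm = \expectation[\bnx_0\mid\bG_0]$ as in Equation~\eqref{eq:restmean}. No second-moment condition on $\mathbf{v}$ is needed, which matches the remark that $v_{si}$ may be heteroscedastic and correlated.

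Second, identification from conditional mean restrictions is defined purely through the map $\boldsymbol{\theta}_1 \mapsto \mathbf{r}(\bm,\bG_0;\boldsymbol{\theta}_1)$. That map is unchanged, so the equivalence holds as before: $\mathbf{r}(\bm,\bG_0;\boldsymbol{\theta}_1)=\mathbf{r}(\bm,\bG_0;\boldsymbol{\theta}'_1)$ if and only if $A_1(\boldsymbol{\theta}_1,\boldsymbol{\theta}'_1)\in\mathcal{A}_1$. The sufficiency step invokes the argument of \citeA{bramoulle09} together with $\beta\gamma+\delta\neq0$. The necessity step constructs an affine path $\boldsymbol{\theta}'_1(t)$ through the invertible $4\times4$ system with determinant $(1-\beta)(\beta\gamma+\delta)$. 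Both steps carry over word for word.

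The only step needing care is the conditioning bookkeeping. Assumption~\ref{ass:measurementerror3B} conditions on $(\bx_s,\bG_s,\be_s)$, while the outcome disturbance is mean-independent given $(\bx_s,\bG_s,\bu_s)$. I would verify explicitly that each of the three terms has zero mean given $\bG_s$ alone, by iterating expectations through the respective conditioning sets. No joint restriction linking $\mathbf{v}$ to $\bu$ is required for first moments.
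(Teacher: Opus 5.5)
Your proposal is correct and follows essentially the same route as the paper: Assumption~\ref{ass:measurementerror3B} gives $\expectation[\mathbf{v}_0 \mid \bG_0] = \mathbf{0}$, so $\expectation[\widetilde{\boldy}_0 \mid \bG_0] = \expectation[\boldy_0 \mid \bG_0]$ has exactly the form of Equation~\eqref{eq:restmean}, and the proof of Theorem~\ref{theorem:identification} then applies verbatim. Your extra bookkeeping (the alternative derivation through $\boldsymbol{\zeta}_s$, and the iterated expectations over the different conditioning sets) only spells out what the paper's one-line reduction leaves implicit.
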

\begin{proof}
    Assumption~\ref{ass:measurementerror3B} implies $\expectation[\mathbf{v}_0 \mid \bG_0] = \mathbf{0}$, so that
\begin{equation*}
    \expectation[\widetilde{\boldy}_0 \mid \bG_0]
= \expectation[\boldy_0 \mid \bG_0]
= (\bI - \beta \bG_0)^{-1} \bigl[\alpha \mathbf{1} + (\gamma \bI + \delta \bG_0)\bm\bigr].
\end{equation*}
This conditional mean restriction is identical in form to Equation~\eqref{eq:restmean}, and the argument in the proof of Theorem~\ref{theorem:identification} applies verbatim.
\end{proof}

\begin{theorem}\label{thm:identification2BB}
    Suppose that Assumptions~\ref{ass:measurementerror3B} and \ref{ass:measurementerror4B} hold and that $\beta \gamma + \delta \neq 0$. The parameters $\beta$, $\gamma$, $\delta$, $\sigma_u^2$ are identified from conditional covariance restrictions if the matrices $\bI$, $\bG_0$, $\bG_0^2$, $\bC$, $\bG_0\bC$, $\bG_0^2 \bC$ are linearly independent.
\end{theorem}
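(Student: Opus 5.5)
The plan is to reduce the statement to the covariance restriction in Equation~\eqref{eq:restcovariance}, with $\boldy_0$ replaced by the observed $\widetilde{\boldy}_0$, and then reuse the algebra from the proof of Theorem~\ref{thm:identification2} verbatim.

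The first step is to compute $\covariance(\widetilde{\boldy}_0, \bnx_0 \mid \bG_0)$. Writing $\widetilde{\boldy}_0 = \boldy_0 + \mathbf{v}_0$ gives
\begin{equation*}
\covariance(\widetilde{\boldy}_0, \bnx_0 \mid \bG_0) = \covariance(\boldy_0, \bnx_0 \mid \bG_0) + \covariance(\mathbf{v}_0, \bx_0 \mid \bG_0) + \covariance(\mathbf{v}_0, \bu_0 \mid \bG_0).
\end{equation*}
The last term vanishes by Assumption~\ref{ass:measurementerror4B}: by iterated expectations, $\expectation[u_{si}v_{sj}\mid \bx_s,\bG_s,\be_s]=0$ implies $\expectation[u_{si}v_{sj}\mid\bG_s]=0$. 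Combined with zero conditional means from Assumption~\ref{ass:measurementerror3B}, this covariance is zero.

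The middle term, $\covariance(\mathbf{v}_0, \bx_0\mid\bG_0)$, vanishes because $\expectation[v_{si}\mid \bx_s,\bG_s,\be_s]=0$. Conditioning on $(\bx_s,\bG_s)$ gives $\expectation[v_{si}x_{sj}\mid\bG_s]=\expectation[x_{sj}\expectation[v_{si}\mid \bx_s,\bG_s]\mid\bG_s]=0$.

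For the first term, I would substitute the reduced form $\boldy_0=(\bI-\beta\bG_0)^{-1}[\alpha\mathbf{1}+(\gamma\bI+\delta\bG_0)\bx_0+\be_0]$. Here the cross-covariances involving $\be_0$ are zero by $\expectation[e\mid\bx,\bG,\bu]=0$. Then $\covariance(\bx_0,\bnx_0\mid\bG_0)=\variance(\bx_0\mid\bG_0)=\bC-\sigma_u^2\bI$. This last identity uses $\covariance(\bx_0,\bu_0\mid\bG_0)=0$ (from Assumption~\ref{ass:measurementerror3B}) and $\variance(\bu_0\mid\bG_0)=\sigma_u^2\bI$ (from Assumption~\ref{ass:measurementerror4B}). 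Together these deliver exactly $\mathbf{R}(\bC,\bG_0;\boldsymbol{\theta}_2)$.

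One point I expect to need care: $\bC$ is still defined as $\variance(\bnx_0\mid\bG_0)$, which is unaffected by $\mathbf{v}$. The outcome error enters only through the observed left-hand side, and its second moments never appear, so heteroscedastic or correlated $\mathbf{v}$ causes no problem.

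With the restriction in hand, the remainder follows the proof of Theorem~\ref{thm:identification2} verbatim. Equate the restrictions under $\boldsymbol{\theta}_2$ and $\boldsymbol{\theta}_2'$, left-multiply by $(\bI-\beta\bG_0)(\bI-\beta'\bG_0)$, and obtain the linear relation among $\bI,\bG_0,\bG_0^2,\bC,\bG_0\bC,\bG_0^2\bC$. Linear independence then forces all coefficients to zero. The Bramoull\'e et al.\ argument gives $(\beta,\gamma,\delta)=(\beta',\gamma',\delta')$, and $\beta\gamma+\delta\neq0$ gives $\sigma_u^2={\sigma_u^2}'$.

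The main obstacle is the bookkeeping that the cross-covariance terms involving $\mathbf{v}$ vanish, using only the first-moment condition and the $u$–$v$ orthogonality. The algebraic part is inherited unchanged.
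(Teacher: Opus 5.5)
Your proposal is correct and follows the paper's proof. You split $\covariance(\widetilde{\boldy}_0,\bnx_0\mid\bG_0)$ into $\covariance(\boldy_0,\bnx_0\mid\bG_0)$ plus the cross terms in $\mathbf{v}_0$, show those vanish under Assumptions~\ref{ass:measurementerror3B} and \ref{ass:measurementerror4B}, and then reuse the argument of Theorem~\ref{thm:identification2} verbatim; the only difference is that you re-derive $\covariance(\boldy_0,\bnx_0\mid\bG_0)$ from the reduced form, whereas the paper cites Equation~\eqref{eq:restcovariance}.
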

\begin{proof}
    Under Assumptions~\ref{ass:measurementerror3B} and \ref{ass:measurementerror4B},
\begin{equation*}
\covariance(\mathbf{v}_0, \bnx_0 \mid \bG_0) = \covariance(\mathbf{v}_0, \bx_0 \mid \bG_0) + \covariance(\mathbf{v}_0, \bu_0 \mid \bG_0) = \mathbf{0},
\end{equation*}
so that
\begin{equation*}
\covariance(\widetilde{\boldy}_0, \bnx_0 \mid \bG_0) = \covariance(\boldy_0, \bnx_0 \mid \bG_0) = (\bI - \beta \bG_0)^{-1} (\gamma \bI + \delta \bG_0) (\bC - \sigma_u^2 \bI).
\end{equation*}
This conditional covariance restriction is identical in form to Equation~\eqref{eq:restcovariance}, and the argument in the proof of Theorem~\ref{thm:identification2} applies verbatim.
\end{proof}

\paragraph{Estimation.}
Estimation is likewise unaffected. Under Assumption~\ref{ass:measurementerror3B}, the GMM estimator based on the mean restrictions coincides with that in Proposition~\ref{prop:GMMfirst}; under Assumptions~\ref{ass:measurementerror3B} and \ref{ass:measurementerror4B}, the GMM estimator based on the covariance restrictions coincides with that in Proposition~\ref{prop:GMMsecond}. Since the 2SLS moments are a linear combination of mean and covariance restrictions, Proposition~\ref{prop:instrumentsdiffmatrix} and Corollaries~\ref{cor:lagged} and \ref{cor:features} also remain valid.

\section{Additional results}\label{app2:addresults}

\subsection{Expansion bias without endogenous peer effect}\label{app2:biascontextual}
Consider the model without endogenous peer effect ($\beta = 0$), 
\begin{equation*}
\begin{split}
\boldy_s &=  \alpha \mathbf{1} + \gamma \bx_s + \delta \bG_s{\bx}_s + \be_s, \\
\bnx_s &= \bx_s + \bu_s,
\end{split}
\end{equation*}
whose parameters are estimated using the OLS estimator 
\begin{equation*}
    \begin{bmatrix}
    \gammaOLS & \deltaOLS
    \end{bmatrix}^\intercal := \left(\begin{bmatrix} \bnx_w & (\bG{\widetilde{\bx}})_w \end{bmatrix}^{\intercal}\begin{bmatrix} \bnx_w & (\bG{\widetilde{\bx}})_w  \end{bmatrix}\right)^{-1}\begin{bmatrix} \bnx_w & (\bG{\widetilde{\bx}})_w \end{bmatrix}^{\intercal} \boldy_w,
\end{equation*}
where, for any $N\times 1$ vector $\mathbf{a}$, the demeaned version is $\mathbf{a}_w := (\bI_N - \frac{1}{N}\bJ_N) \mathbf{a}$, with $\mathbf{I}_N$ the $N\times N$ identity and $\mathbf{J}_N$ the $N\times N$ matrix of ones.

Similar to Lemma~\ref{lemma:bias}, under Assumptions~\ref{ass:measurementerror1} and \ref{ass:measurementerror2} we have that
\begin{equation*}
    \text{plim }\begin{bmatrix}
        \gammaOLS \\ \deltaOLS
    \end{bmatrix} = (\mathbf{S} + \boldsymbol{\Sigma})^{-1} \mathbf{S} \begin{bmatrix}
        \gamma \\
        \delta
    \end{bmatrix},
\end{equation*}
where 
\begin{equation*}
    \begin{split}
        \mathbf{S} &:= \begin{bmatrix}
        \variance(x)  & \covariance(x, G{x}) \\
        \covariance(G{x}, x) & \variance(G{x})
        \end{bmatrix}, \qquad \boldsymbol{\Sigma} := \begin{bmatrix}
        \sigma_u^2 & 0 \\
        0 & h_0 \sigma_u^2 \\
    \end{bmatrix}.
    \end{split}
\end{equation*}
Therefore, it holds that
\begin{equation*}
    \begin{split}
         \text{plim }\begin{bmatrix}
        \gammaOLS \\ \deltaOLS
    \end{bmatrix} &= 
    \frac{1}{D^{OLS}}\begin{bmatrix}
        \variance(G{x}) + h_0 \sigma_u^2 & - \covariance(G{x}, x) \\
        -\covariance(x, G{x}) & \variance(x) + \sigma_u^2
    \end{bmatrix} \begin{bmatrix}
        \variance(x)  & \covariance(x, G{x}) \\
        \covariance(G{x}, x) & \variance(G{x})
        \end{bmatrix}
     \begin{bmatrix}
        \gamma \\
        \delta
    \end{bmatrix} \\
    &=  \frac{1}{D^{OLS}} \begin{bmatrix}
       \det(\mathbf{S}) + h_0 \sigma_u^2 \variance(x) & h_0 \sigma_u^2 \covariance(x, G{x}) \\ 
       \sigma_u^2 \covariance(x, G{x}) & \det(\mathbf{S}) + \sigma_u^2 \variance(G{x}) \\
    \end{bmatrix} \begin{bmatrix}
        \gamma \\
        \delta
    \end{bmatrix},
    \end{split}
\end{equation*}
where $D^{OLS}:= \det(\mathbf{S}) + h_0 \sigma_u^2 \variance(x) + \sigma_u^2 \variance(G{x}) + h_0 \sigma_u^4$. Matrix multiplication yields Equation~\eqref{eq:biasnoendogenous}. Observe that $\det(\mathbf{S}) = \variance(x)\variance(G{x}) - \covariance(x, G{x})^2 \geq 0$.

\subsection{Expansion bias without contextual peer effect}\label{app2:biasendogenous}
Consider the model without contextual peer effect ($\delta = 0$),
\begin{equation*}
\begin{split}
\boldy_s &=  \alpha \mathbf{1} + \gamma \bx_s + \beta \bG_s{\boldy}_s + \be_s, \\
\bnx_s &= \bx_s + \bu_s,
\end{split}
\end{equation*}
whose parameters are estimated using the 2SLS estimator
\begin{equation*}
    \begin{bmatrix}
    \gammaIV & \betaIV 
    \end{bmatrix}^\intercal := \left(\begin{bmatrix} \bnx_w & (\bG{\widetilde{\bx}})_w \end{bmatrix}^{\intercal}\begin{bmatrix} \bnx_w & (\bG{\boldy})_w \end{bmatrix}\right)^{-1}\begin{bmatrix} \bnx_w & (\bG{\widetilde{\bx}})_w \end{bmatrix}^{\intercal} \boldy_w,
\end{equation*}
where, for any $N\times 1$ vector $\mathbf{a}$, the demeaned version is $\mathbf{a}_w := (\bI_N - \frac{1}{N}\bJ_N) \mathbf{a}$, with $\mathbf{I}_N$ the $N\times N$ identity and $\mathbf{J}_N$ the $N\times N$ matrix of ones. 

Similar to Lemma~\ref{lemma:bias}, under Assumptions~\ref{ass:measurementerror1} and \ref{ass:measurementerror2} we have that
\begin{equation*}
    \text{plim }\begin{bmatrix}
        \gammaIV \\ \betaIV 
    \end{bmatrix}= (\mathbf{S} + \boldsymbol{\Sigma})^{-1} \mathbf{S} \begin{bmatrix}
        \gamma \\
        \beta
    \end{bmatrix},
\end{equation*}
where 
\begin{equation*}
    \begin{split}
        \mathbf{S} &:= \begin{bmatrix}
        \variance(x)  & \covariance(x, G{y}) \\
        \covariance(Gx, x) & \covariance(Gx, G{y})
        \end{bmatrix}, \qquad \boldsymbol{\Sigma} := \begin{bmatrix}
        \sigma_u^2 & 0 \\
        0 & 0 \\
    \end{bmatrix}.
    \end{split}
\end{equation*}
Therefore, it holds that
\begin{equation*}
    \begin{split}
         \text{plim }\begin{bmatrix}
        \gammaIV \\ \betaIV 
    \end{bmatrix} &= 
    \frac{1}{D^{IV}}\begin{bmatrix}
        \covariance(Gx, G{y}) & - \covariance(x,G{y}) \\
        -\covariance(Gx, x) & \variance(x) + \sigma_u^2
    \end{bmatrix} \begin{bmatrix}
 \variance(x)  & \covariance(x, G{y}) \\
        \covariance(Gx, x) & \covariance(Gx, G{y})
    \end{bmatrix}
     \begin{bmatrix}
        \gamma \\
        \beta
    \end{bmatrix} \\
    &= \frac{1}{D^{IV}}\begin{bmatrix}
        \variance(x)\covariance(Gx, G{y}) - \covariance(Gx, x) \covariance(x, G{y}) & 0 \\
        \sigma_u^2 \covariance(Gx, x) & D^{IV}
    \end{bmatrix} \begin{bmatrix}
        \gamma \\
        \beta
    \end{bmatrix},
    \end{split}
\end{equation*}
where $D^{IV} = (\variance(x) + \sigma_u^2) \covariance(Gx, G{y}) - \covariance(Gx, x) \covariance(x, G{y})$. Matrix multiplication yields Equation~\eqref{eq:biasnocontextual}.

\subsection{Regularity conditions for Propositions~\ref{prop:GMMfirst} and \ref{prop:GMMsecond}}\label{app:regularity}
Following \citeA[p.~172--174]{CameronTrivedi2005Microeconometrics}, we state the main regularity conditions ensuring consistency and asymptotic normality of the GMM estimators of Propositions~\ref{prop:GMMfirst} and \ref{prop:GMMsecond}.\footnote{Additional standard regularity conditions guarantee a uniform law of large numbers for the sample moments and their Jacobian.} Let $m \in \{1,2\}$ index the two sets of moments, corresponding respectively to the mean and covariance restrictions, and let $\boldsymbol{\theta}_m^*$ denote the corresponding true parameter value. For each $m$, assume:
\begin{itemize}
    \item[(i)] The model is correctly specified, i.e., $\expectation[\mathbf{v}_{m, s}(\boldsymbol{\theta}_m^*, \bG) \mid \bG] = \mathbf{0}$,
    \item[(ii)] The model is identified, i.e., the conditions in Theorem~\ref{theorem:identification} (for $m=1$) or Theorem~\ref{thm:identification2} (for $m=2$) hold,
    \item[(iii)] The Jacobian $\expectation\left[\frac{\partial \mathbf{v}_{m, s}(\boldsymbol{\theta}_m^*, \bG)}{\partial \boldsymbol{\theta}_m} \mid \bG\right]$ exists and is finite with full column rank,
    \item[(iv)] The moments are asymptotically normally distributed, i.e., 
    \begin{equation*}
        \frac{1}{{\sqrt{S}}} \sum_s \mathbf{v}_{m, s}(\boldsymbol{\theta}_m^*, \bG) \rightarrow_d N(\mathbf{0}, \boldsymbol{\Psi}_m(\boldsymbol{\theta}_m^*)),
    \end{equation*}    
    where $\boldsymbol{\Psi}_m(\boldsymbol{\theta}_m^*) := \expectation[\mathbf{v}_{m, s}(\boldsymbol{\theta}_m^*, \bG) \mathbf{v}_{m, s}(\boldsymbol{\theta}_m^*, \bG)^\intercal \mid \bG]$,
    \item[(v)] The parameter space $\boldsymbol{\Theta}_m$ is compact and $\boldsymbol{\theta}_m^* \in \mathrm{int}(\boldsymbol{\Theta}_m)$. 
\end{itemize}

\subsection{Relevance of instruments based on network features}\label{app:relevance}

\paragraph{The general result.}
We say that an instrument $z$ is (marginally) \emph{relevant} if $\covariance(z, \nx) \neq 0$ or $\covariance(z, G{\nx}) \neq 0$; joint relevance of the full instrument set requires $\plim \frac{1}{N}\mathbf{Z}^\intercal\mathbf{X}$ to have full column rank (see Section~\ref{sec:IV}). To ease notation, we write $t$, $d$, $x$, and $\nx$ for the entries of a representative individual, leaving the entrywise products of Corollary~\ref{cor:features} implicit.

\begin{proposition}\label{prop:relevance}
    Suppose that Assumptions~\ref{ass:measurementerror1} and \ref{ass:measurementerror2} hold. Let $t := t(\bG)$ be a network statistic. Then instruments (i) $t$, (ii) $(t - \mathbb{E}[t])\nx$, and (iii) $d(t - \mathbb{E}[t])G\nx$ are relevant if
    \begin{enumerate}
        \item[(i)] $\covariance(t, x) \neq 0 $ or $\covariance(t, G{x}) \neq 0$,
        \item[(ii)] $\covariance(t, x^2) - \mathbb{E}[x]\covariance(t, x) \neq 0$ or $\covariance(t, x G{ {x}}) - \mathbb{E}[G{{x}}]\covariance(t, x) \neq 0$,
        \item[(iii)] $\covariance(t, d x G{x}) - \mathbb{E}[x]\covariance(t, d G{x}) \neq 0$ or $\covariance(t, d (G{ {x}})^2) - \mathbb{E}[G{{x}}]\covariance(t, d G{x}) \neq 0$,
    \end{enumerate}
    respectively.
\end{proposition}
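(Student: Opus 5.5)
The plan is to compute, for each instrument, the population covariances $\covariance(z,\nx)$ and $\covariance(z,G\nx)$ directly. We then show that, after the measurement-error terms are removed using Assumptions~\ref{ass:measurementerror1} and \ref{ass:measurementerror2}, these covariances reduce to the expressions in the statement. Relevance as defined above requires only one of the two to be nonzero, so it suffices to establish the two identities for each of (i)--(iii). Throughout, $\covariance(a,b)=\expectation[ab]-\expectation[a]\expectation[b]$ in the pooled sense of Section~\ref{sec:setup}. Every network statistic ($t$, $d$, the entries of $\bG$) is a function of $\bG$ alone, so we may condition on $\bG_s$ and on $\bx_s$ and use the law of iterated expectations.

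For (i), $\nx=x+u$ and $G\nx=Gx+Gu$. By Assumption~\ref{ass:measurementerror1}, $\expectation[t\,u]=\expectation[t\,\expectation[u\mid \bx,\bG,\be]]=0$ and $\expectation[u]=0$, so $\covariance(t,u)=0$. In the same way $\covariance(t,Gu)=0$, because $Gu$ is a $\bG$-weighted combination of the $u_{sj}$. Hence $\covariance(t,\nx)=\covariance(t,x)$ and $\covariance(t,G\nx)=\covariance(t,Gx)$.

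For (ii), write $z=(t-\expectation[t])\nx$. Since $\expectation[t-\expectation[t]]=0$, we have $\covariance(z,\nx)=\expectation[(t-\expectation[t])\nx^2]-\expectation[(t-\expectation[t])\nx]\expectation[\nx]$. Expanding $\nx^2=x^2+2xu+u^2$, the cross term vanishes by Assumption~\ref{ass:measurementerror1}. The term $\expectation[(t-\expectation[t])u^2]=\sigma_u^2\expectation[t-\expectation[t]]=0$ by homoscedasticity, and this is exactly where the recentering is used. This leaves $\covariance(t,x^2)-\expectation[x]\covariance(t,x)$, after using $\expectation[\nx]=\expectation[x]$ and $\expectation[(t-\expectation[t])a]=\covariance(t,a)$. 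For $\covariance(z,G\nx)$, the product $\nx\,G\nx$ equals $xGx+xGu+uGx+uGu$. The middle two terms are mean zero by Assumption~\ref{ass:measurementerror1}. We have $\expectation[u\,Gu\mid\cdot]=\sum_j g_{ij}\expectation[u_iu_j\mid\cdot]=g_{ii}\sigma_u^2=0$, by uncorrelatedness and the zero diagonal of $\bG$. This gives the second expression.

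For (iii), $z=d(t-\expectation[t])G\nx$, and the computation parallels (ii). The cross terms with a single $u$ vanish. For $\covariance(z,\nx)$, the pure-error term is $\expectation[d(t-\expectation[t])\,u\,Gu]=0$ by the same zero-diagonal argument. For $\covariance(z,G\nx)$, the pure-error term is $\expectation[d(t-\expectation[t])(Gu)^2]$. Conditionally, $\expectation[(Gu)_i^2\mid\cdot]=\sigma_u^2\sum_j g_{ij}^2=\sigma_u^2/d_i$, so this term equals $\sigma_u^2\expectation[t-\expectation[t]]=0$. This is the step where the factor $d$ is essential, mirroring the trace computation in the proof of Corollary~\ref{cor:features}. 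The remaining terms give $\covariance(t,dxGx)-\expectation[x]\covariance(t,dGx)$ and $\covariance(t,d(Gx)^2)-\expectation[Gx]\covariance(t,dGx)$, using $\expectation[\nx]=\expectation[x]$ and $\expectation[G\nx]=\expectation[Gx]$.

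The main obstacle is bookkeeping rather than any conceptual difficulty. The expectations are pooled across individuals and networks, so the conditional moments $\sigma_u^2$ and $\sigma_u^2/d_i$ must be applied individual by individual, and only then averaged to obtain the $\expectation[t-\expectation[t]]=0$ cancellation. We also need to check that the product $d_i\cdot(1/d_i)$ cancels pointwise before averaging. Once the pure-noise terms are shown to vanish, the claimed sufficient conditions follow immediately.
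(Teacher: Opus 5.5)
Your proposal is correct and follows essentially the same route as the paper. You compute $\covariance(z,\nx)$ and $\covariance(z,G\nx)$ directly and reduce them to the form $\covariance(t,\cdot)-\expectation[\cdot]\,\covariance(t,\cdot)$, using Assumption~\ref{ass:measurementerror1} for the cross terms and Assumption~\ref{ass:measurementerror2}, with the zero diagonal of $\bG$ and $\sum_j g_{ij}^2=1/d_i$, for the pure-noise terms; the only cosmetic difference is that you substitute $\nx=x+u$ up front, whereas the paper first reduces in terms of $\nx$ and substitutes at the end.
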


\begin{proof}
    For instrument (i), Assumption~\ref{ass:measurementerror1} implies $\covariance(t, \nx) = \covariance(t, x)$ and $\covariance(t, G{\nx}) = \covariance(t, G{x})$, so the stated condition delivers relevance directly.

     For instrument (ii), let $z = (t - \mathbb{E}[t])\nx$. By Assumptions~\ref{ass:measurementerror1} and \ref{ass:measurementerror2}, we have that
\begin{equation*}
    \begin{split}
        \covariance(z, \nx) &= \covariance(t \nx, \nx) - \mathbb{E}[t] \variance(\nx) \\
        &= \mathbb{E}[t \nx^2] - \mathbb{E}[t\nx] \mathbb{E}[\nx] - \mathbb{E}[t] \mathbb{E}[\nx^2] + \mathbb{E}[t]\mathbb{E}[\nx]^2 \\
        &= \covariance(t, \nx^2) - \mathbb{E}[\nx]\covariance(t, \nx) \\
        &= \covariance(t, x^2) - \mathbb{E}[x]\covariance(t, x),
    \end{split}
\end{equation*}
and
\begin{equation*}
    \begin{split}
        \covariance(z, G{\nx}) &= \covariance(t \nx, G{\nx}) - \mathbb{E}[t] \covariance(\nx, G{\nx}) \\
        &= \mathbb{E}[t \nx G{\nx}] - \mathbb{E}[t\nx] \mathbb{E}[G{\nx}] - \mathbb{E}[t] \mathbb{E}[\nx G{\nx}] + \mathbb{E}[t]\mathbb{E}[\nx] \mathbb{E}[G{\nx}] \\
        &= \covariance(t, \nx G{\nx}) - \mathbb{E}[G{\nx}]\covariance(t, \nx) \\
        &= \covariance(t, x G{x}) - \mathbb{E}[G{x}]\covariance(t, x).
    \end{split}
\end{equation*}
 
For instrument (iii), let $z = d(t - \mathbb{E}[t])G{\nx}$. By Assumptions~\ref{ass:measurementerror1} and \ref{ass:measurementerror2}, proceeding as above,
\begin{equation*}
    \begin{split}
        \covariance(z, \nx) &= \covariance(d t G{\nx}, \nx) - \mathbb{E}[t] \covariance(d G{\nx}, \nx) \\
        &= \covariance(t, d \nx G{\nx}) - \mathbb{E}[\nx]\covariance(t, d G{\nx}) \\
        &= \covariance(t, d x G{x}) - \mathbb{E}[x]\covariance(t, d G{x}),
    \end{split}
\end{equation*}
and
\begin{equation*}
    \begin{split}
        \covariance(z, G{\nx}) &= \covariance(d t G{\nx}, G{\nx}) - \mathbb{E}[t] \covariance(d G{\nx}, G{\nx}) \\
        &= \covariance(t, d (G{\nx})^2) - \mathbb{E}[G{\nx}]\covariance(t, d G{\nx}) \\
        &= \covariance(t, d (G{x})^2) - \mathbb{E}[G{x}]\covariance(t, d G{x}),
    \end{split}
\end{equation*}
where the last equality uses $\sum_j g_{ij}^2 = 1/d_i$, so that $\covariance(t, d(G{u})^2) = 0$.
\end{proof}

\paragraph{Iso-correlational environment.}
We now specialize to the undirected, iso-correlational environment of
Section~\ref{sec:IV}, in which $\expectation[x_{i_0} \mid \bG_0] = 0$, $\variance(x_{i_0} \mid \bG_0) = \sigma_x^2$, and $\covariance(x_{i_0}, x_{j_0} \mid \bG_0) = \rho_{d({i_0}, {j_0})}\sigma_x^2$. The zero-mean condition implies $\covariance(t, x) = \covariance(t, G{x}) = \expectation[Gx] = 0$. From the proof of Proposition~\ref{prop:relevance}, with $\expectation[Gx] = 0$ the recentering term drops out and $\covariance(d(t - \mathbb{E}[t])G{\nx}, G{\nx}) = \covariance(t, d (G{x})^2)$. Observe that
\begin{equation*}
    \begin{split}
    \covariance(t, d (G{x})^2) &= \left(\plim \frac{1}{N} \sum_i t_i d_i \sum_{j,k} g_{ij} g_{ik} x_j x_k\right) \\
    &\qquad - \left(\plim\frac{1}{N} \sum_i t_i \right)\left(\plim \frac{1}{N} \sum_i d_i \sum_{j,k} g_{ij} g_{ik} x_j x_k\right).
            \end{split}
\end{equation*}
Any two distinct neighbors of $i_0$ are at network distance one if they are linked and at distance two otherwise, so the iso-correlational structure implies
\begin{equation*}
    \begin{split}
        \plim \frac{1}{N} \sum_i t_i d_i \sum_{j,k} g_{ij} g_{ik} x_j x_k &= \frac{1}{N_0} \sum_{i_0} t_{i_0} d_{i_0} \left(\frac{1}{d_{i_0}} + \frac{(d_{i_0} -1) c_{i_0}}{d_{i_0}} \rho_1 + \frac{(d_{i_0} - 1)(1 - c_{i_0})}{d_{i_0}}\rho_2 \right) \sigma_x^2, \\
                \plim \frac{1}{N} \sum_i d_i \sum_{j,k} g_{ij} g_{ik} x_j x_k &= \frac{1}{N_0} \sum_{i_0}  d_{i_0} \left(\frac{1}{d_{i_0}} + \frac{(d_{i_0} -1) c_{i_0}}{d_{i_0}} \rho_1 + \frac{(d_{i_0} - 1)(1 - c_{i_0})}{d_{i_0}}\rho_2 \right) \sigma_x^2.
    \end{split}
\end{equation*}
Combining these expressions with $\plim\frac{1}{N} \sum_i t_i = \frac{1}{N_0} \sum_{i_0} t_{i_0}$ and collecting terms yields Equation~\eqref{eq:variancescovariances}.

\end{document}